%% file: main.tex
\documentclass[11pt]{article}
\input{newcommands}

\title{A Polymatroidal Perspective on Random Contraction
\thanks{Grainger College of Engineering. Univ. of Illinois, Urbana-Champaign, Urbana, IL 61801. Email: {\tt
      \{karthe, chekuri, weihaoz3\}@illinois.edu}. Supported in part by NSF grant CCF-2402667.}
}

\author{Karthekeyan Chandrasekaran
\and Chandra Chekuri
\and Weihao Zhu}

\date{}
\begin{document}

\maketitle

\pagenumbering{gobble}
\input{abstract}


\newpage
\pagenumbering{arabic}

\input{intro-3}

\input{preliminary}
\input{nonuniform-random-contraction-3}

\input{approx-random-contraction}
\input{parallel-random-contraction}
\input{strong-uniform-random-contraction-bounded-marginal}
\input{quotient-bounded-instance}
\input{conclusion}

\paragraph{AI Disclosure.} The authors used ChatGPT to assist with the writing of this manuscript. The authors assume full responsibility for all content. 

\bibliographystyle{abbrv}
\bibliography{references}

\newpage
\appendix
\input{quotient-property}
\input{gkp-counting-bound}
\input{uniform-random-contraction-bounded-marginal}
\input{lp-analysis}

\end{document}

%% file: newcommands.tex
\usepackage{complexity}

\usepackage[english]{babel}
\usepackage[utf8x]{inputenc}
\usepackage[T1]{fontenc}
\usepackage{enumitem}
\usepackage[margin=1.in,marginparwidth=1.75cm]{geometry}

\usepackage[algo2e, ruled, vlined]{algorithm2e}
\usepackage{amsmath}
\usepackage{amsfonts}
\usepackage{graphicx}
\usepackage{amsthm}
\usepackage{amssymb}
\usepackage{dsfont}
\usepackage[algo2e, ruled, vlined]{algorithm2e}
\usepackage[noend]{algorithmic} 
\usepackage{bbm}

\usepackage[T1]{fontenc}
\usepackage{microtype}          %
\usepackage[full]{textcomp}

\usepackage{lipsum}
\usepackage{xspace}
\usepackage{relsize}
\usepackage{framed}
\usepackage{xcolor}
\usepackage{graphicx}
\usepackage{multirow}
\usepackage{subcaption}
\usepackage{float}

\usepackage[colorinlistoftodos]{todonotes}
\usepackage[colorlinks=true, allcolors=blue]{hyperref}

\newtheorem{theorem}{Theorem}

\newtheorem{lemma}{Lemma}
\newtheorem{corollary}{Corollary}

\newtheorem{remark}{Remark}

\theoremstyle{definition}
\newtheorem{definition}{Definition}

\newcommand{\calN}{{\mathcal{N}}}

\newcommand{\znote}[1]{{\bf{\color{red}[\tiny Weihao: #1]}}}
\newcommand{\knote}[1]{{\bf{\color{blue}[\tiny Karthik: #1]}}}
\newcommand{\cnote}[1]{{\bf{\color{magenta}[\tiny Chandra: #1]}}}

\newcommand{\mypara}[1]{\medskip \noindent {\bf #1}}

\usepackage{float}
\usepackage{booktabs}
\usepackage{multirow}
\usepackage{thmtools} 
\usepackage{thm-restate}
\usepackage{lipsum}

\newcommand{\Z}{\mathbb{Z}}

\newcommand{\cF}{\mathcal{F}}

\newcommand{\cN}{\mathcal{N}}
\newcommand{\cI}{\mathcal{I}}
\newcommand{\cQ}{\mathcal{Q}}
\newcommand{\cM}{\mathcal{M}}

\newif\ifdraft
\draftfalse

\newcommand{\MinQuo}{\textsc{Min-Quotient}\xspace}

\newcommand{\GraphMinCut}{\textsc{Graph-Min-Cut}\xspace}
\newcommand{\GraphkCut}{\textsc{Graph-$k$-Cut}\xspace}
\newcommand{\HypergraphMinCut}{\textsc{Hypergraph-Min-Cut}\xspace}
\newcommand{\HypergraphkCut}{\textsc{Hypergraph-$k$-Cut}\xspace}
\newcommand{\HedgeMinCut}{\textsc{Hedgegraph-Min-Cut}\xspace}
\newcommand{\Comp}{\#\text{Comps}}
\newcommand{\Span}{\text{span}}
\newcommand{\qbounded}{\text{quotient-bounded}\xspace}
\newcommand{\sqbounded}{\text{strongly-quotient-bounded}\xspace}

%% file: abstract.tex
\begin{abstract}\label{section:abstract}
  Karger's elegant random contraction algorithm for finding a global mincut in a graph \cite{karger1993global,karger1995random} has been highly influential. More recent work has obtained several different (nonuniform) random contraction algorithms for mincut in hypergraphs and hedgegraphs \cite{kogan2015sketching, GKP17, CXY21, fox2023minimum, fomin2025fixed}.  Motivated by the conceptual goal of understanding these algorithms in a unified fashion, we study random contraction algorithms for finding a minimum quotient of a polymatroid. We introduce the notion of \qbounded polymatroids and show that several existing results can be derived and understood under a common algorithmic framework for \qbounded polymatroids.   
\end{abstract}

%% file: intro-3.tex
\section{Introduction}\label{section:introduction}
In \GraphMinCut problem the input contains a connected multigraph $G=(V,E)$ with $n:=|V|$ vertices and the goal is to find a minimum number of edges whose removal disconnects the vertex set. Karger developed a simple and highly influential random contraction algorithm for this problem \cite{karger1993global}. It is easy to describe it in a recursive fashion. In the base case, when $|V|=2$, the algorithm returns the unique non-trivial cut in the graph between the two vertices. Otherwise it picks an edge $e \in E$ \emph{uniformly} at random and contracts it to obtain a graph $G' = G/e$, and recurses on $G'$. Karger proved that for every specific mincut $C\subseteq E$, this algorithm returns $C$ with probability at least $\frac{1}{{n \choose 2}}$. The analysis is also simple, and is routinely taught as a first randomized algorithm.  There have been many important and influential developments related to this original work, several of which were already in Karger's thesis \cite{karger1995random}.  
In this paper we focus mainly on polynomial-time solvability and structural aspects, and hence, we will not dwell on several refinements of the random contraction algorithms that yield faster algorithms such as \cite{KS96}.

More recently, random contraction has been studied for computing the minimum cut in hypergraphs, and a generalization of hypergraphs called hedgegraphs. A hypergraph $H=(V,E)$ consists of a finite vertex set $V$ and a set $E$ of hyperedges, where each hyperedge $e\in E$ is a subset of vertices. A hedgegraph $G=(V,E)$ consists of a finite vertex set $V$ and a finite set $E$ of hedges, where each hedge $e\in E$ is a collection of distinct hyperedges. Hedgegraphs were introduced by Ghaffari, Karger, and Panigrahi \cite{GKP17} to model correlated edge failures in graphs. The global mincut problem in these two structures is easy to define. Find a partition of $V$ into $(S, V-S)$ to minimize the number of hyperedges (hedges, respectively) that cross the partition --- equivalently, find a minimum number of hyperedges (hedges, respectively) to remove in order to disconnect the hypergraph (or hedgegraph). Ghaffari, Karger and Panigrahi \cite{GKP17} showed that a certain \emph{nonuniform} random contraction algorithm yields a polynomial-time algorithm for \HypergraphMinCut problem. Interestingly they also developed a quasi-polynomial-time algorithm for \HedgeMinCut and a PTAS for it, again via a nonuniform version of random contraction. Subsequent work showed that \HedgeMinCut is hard \cite{JLMPTS26} --- under the exponential-time hypothesis (ETH) there is no polynomial-time algorithm for it! Chandrasekaran, Xu, and Yu \cite{CXY21} developed a different variant of nonuniform random contraction that also yields a polynomial-time algorithm for  \HypergraphMinCut, and also for the special case of \HedgeMinCut in constant span hedgegraphs (we will formally define span later). Fomin, Golovach, Korhonen, Lokshtanov, and Saurabh \cite{fomin2025fixed} designed a different variant of nonuniform random contraction that yields a fixed-parameter algorithm for \HedgeMinCut parameterized by solution size. 

A natural question is whether the aforementioned results can be understood in a common framework. We describe a a second motivation for our work. Karger, subsequent to his work on graphs, showed that random contraction is natural from a \emph{matroidal} perspective and that the algorithm to compute a global mincut can be viewed as computing the minimum \emph{quotient} in the underlying graphic matroid. One could then ask if random contraction works for every matroid to find a minimum quotient. However, finding a minimum quotient is NP-Hard even for restricted classes of matroids \footnote{Finding a minimum quotient in the dual of transversal matroids is NP-hard---attributed to Stockmeyer in \cite{mccormick-thesis}.}. Despite this, the underlying random sampling machinery can be extended to matroids --- Karger pointed this out \cite{Kar-matroid} and derived several applications. More recently, Quanrud \cite{quanrud2024quotient}, building upon some of Karger's work, showed that \emph{every} matroid (and polymatroid) admits quotient sparsification. This generalized well-known results on graph cut sparsification by Benczur and Karger \cite{BK15}, and hypergraph cut sparsification by Chen, Khanna and Nagda \cite{chen2020near}, and led to new results. In another direction, Chandrasekaran, Chekuri, and Zhu in recent work \cite{chandrasekaran2025hedgegraph} showed that it is fruitful to view hedgegraphs from a polymatroidal view even though the cut function (that is defined over the vertex set) is not submodular. These motivate us to understand the successes of random contraction for
graphs, hypergraphs, and hedgegraphs from a polymatroidal perspective, and develop a unified picture that is not  only explanatory but also has the scope to generalize and yield new results and insights. We first describe relevant background.

\mypara{Polymatroids, Matroids, and Quotients:}
A \emph{polymatroid} $f:2^{\cN}\rightarrow \mathbb{Z}_{\geq 0}$ on a ground set $\cN$ is an integer-valued monotone submodular function with $f(\emptyset)=0$. A function $f: 2^{\cN}\rightarrow \mathbb{Z}_{\geq 0}$ is \emph{monotone} if $f(A)\leq f(B)$ for every $A\subseteq B$ and \emph{submodular} if $f(A)+f(B)\geq f(A\cup B)+f(A\cap B)$ for every $A,B\subseteq \cN$. Equivalently, $f$ is submodular if $f(A + e) - f(A) \ge f(B+e) - f(B)$ for all $A \subset B$ and $e \in \cN \setminus B$. 
We define the \emph{rank} of $f$ to be $f(\cN)$. 
The \emph{span} of a set $A$, denoted $\Span(A):=\{e \in \cN \mid f(A \cup\{e\}) \le f(A) \}$.  A subset $S\subseteq \cN$ is \emph{closed} if $f(S\cup\{e\})>f(S)$ for every $e\in \cN\setminus S$, in other words if $\Span(S) = S$. We call a subset $Q\subseteq\cN$ a \emph{quotient} if $\cN\setminus Q$ is a closed set. We note that $\cN$ is always a closed set, implying that the empty set $\emptyset$ is a quotient. A matroid $\cM = (\cN, \cI)$ consists of a finite ground set $\cN$ and a collection of \emph{independent} sets $\cI \subseteq 2^{\cN}$ that satisfy three properties. Since these are well-known, we instead define matroids as a special case of polymatroids. In this view, a matroid is defined by its rank function $r_{\cM}: 2^{\cN} \rightarrow \mathbb{Z}_{\geq 0}$ which is a polymatroid with an additional property: $r_{\cM}(e) \in \{0,1\}$ for each $e \in \cN$.  A set $A \subseteq \cN$ is independent iff $r_{\cM}(A) = |A|$. A \emph{circuit} is a minimal dependent set. A \emph{base} of a matroid is a maximal independent set. It is easy to establish that all bases in a matroid have the same cardinality, and this is called the rank of the matroid $\cM$. Note that in a polymatroid one can define a base to be a minimal set $S$ such that $f(S) = f(\cN)$, and in this setting, bases may have different cardinalities. We refer to \cite{Schrijver-book} for additional background. The main problem of interest to us is the following.

\begin{definition}[\MinQuo Problem]
  Given a polymatroid $f:2^{\cN}\rightarrow \mathbb{Z}_{\geq 0}$ via an
  evaluation oracle, return a minimum cardinality non-empty quotient of $f$.
\end{definition}

When $f$ is the rank function of a matroid $\cM$, then the min-quotient problem is the same as the well-known min-cocircuit or the cogirth problem. A co-circuit is a set that intersects every base of the matroid, or equivalently, it is a circuit in the dual matroid $\cM^*$. Finding a min-circuit of a matroid is an important algorithmic problem; for instance the shortest code-word in a binary linear code is a special case and Vardy showed that this is NP-Hard \cite{Var97}. 

\mypara{Quotients and cuts in graphs, hypergraphs, and hedgegraphs:} Let $G=(V,E)$ be a multigraph. The well-known graphic matroid associated with $G$ is $\cM_G=(E, \cI)$ where a set $A \subseteq E$ of edges is independent iff the subgraph $(V,A)$ is a forest. The rank function of this matroid, $r_{G}:2^E\rightarrow \mathbb{Z}_{\geq 0}$ is defined as: $r_G(A):=|V|-\Comp(V,A)$ for $A\subseteq E$, where $\Comp(V,A)$ is the number of connected components in the subgraph $(V,A)$. When $G$ is connected, non-empty quotients of $r_G$ correspond exactly to edge cutsets (a set of edges whose removal disconnects the graph): a quotient $Q$ is the set of edges that cross the connected components of the closed set $E \setminus Q$.  Thus, finding a minimum non-empty quotient of $r_G$ is precisely the \GraphMinCut problem\footnote{The connection between mincut and min-quotients is most natural when each element has unit weight. In the graph setting we can treat weighted edges by making parallel copies and the equivalence holds.}. 

Now we consider a hypergraph $H=(V,E)$.  Here, we consider the function $f:2^E \rightarrow \mathbb{Z}_{\ge 0}$ defined by $f(A) = |V| - \Comp(V,A)$, where $\Comp(V,A)$ is the number of connected components in the subgraph $(V,A)$. $f$ is a polymatroid and we denote it as the hypergraphic polymatroid associated with $H$ \cite{chandrasekaran2025hedgegraph}. For a hyperedge $e$,  $f(e) = |e|-1$ and thus in general $f$ is not a matroid. As in graphs, a quotient $Q \subseteq E$ corresponds to a hyperedge cutset (i.e., a set of hyperedges whose removal disconnects the hypergraph). Thus, finding a min-quotient of $f$ is the same as solving \HypergraphMinCut. 

Next we consider a hedgegraph $G=(V,E)$. We recall that each hedge $e\in E$ is a collection of distinct hyperedges. Consider the function $f:2^E \rightarrow \mathbb{Z}_{\ge 0}$ defined by $f(A) = |V| - \Comp(V,A)$, where $\Comp(V,A)$ is the number of connected components in the hypergraph $(V,\{h: h\in e\text{ for some }e\in A\})$. $f$ is a polymatroid and we denote it as the hedgegraph polymatroid associated with $G$ \cite{chandrasekaran2025hedgegraph}.  Finding a min-quotient of $f$ is the same as solving \HedgeMinCut. 

We point out an important aspect in viewing the mincut problems from the min-quotient view. 
In both graphs and hypergraphs, the cut function $d_G:2^V \rightarrow \mathbb{Z}_{\ge 0}$ defined over the \emph{vertices} is a nonmonotone submodular set function; thus, one can solve \GraphMinCut and \HypergraphMinCut via submodular set function minimization --- in fact, one can solve the min-$s$-$t$-cut problem as well. In contrast, the cut function of a hedgegraph is \emph{not} submodular \cite{GKP17}, and min-$s$-$t$-cut problem in hedgegraphs is NP-Hard \cite{ZCTZ11}! 

\mypara{$c$-\qbounded Polymatroid Families:} The min-cocircuit problem in matroids and polymatroids is NP-Hard.
We define a parameterized family of minor-closed polymatroids and show that random contraction algorithms succeed in solving the min-quotient problem for them. 

\begin{definition}[Minor-Closed Polymatroid Family]
  Let $f:2^{\cN}\rightarrow \mathbb{Z}_{+}$ be a polymatroid and $e\in \cN$. The set function $f_{\backslash e}:2^{\cN -e}\rightarrow \mathbb{Z}_{+}$ is the polymatroid obtained by \emph{deleting} $e$, i.e., $f_{\backslash e}(S):=f(S)$ for all $S\subseteq \cN-e$. The polymatroid $f_e:2^{\cN -e}\rightarrow \mathbb{Z}_{+}$ is the polymatroid obtained by \emph{contracting} $e$, i.e., $f_e(S):=f(S+e)-f(e)$ for all $S\subseteq \cN-e$.  For a subset $\cN_1\subseteq \cN$, a polymatroid $g:2^{\cN_1}\rightarrow \mathbb{Z}_{\geq 0}$ is a \emph{minor} of $f$ if it can be obtained from $f$ by deleting elements and contracting elements. A family $\mathcal{F}$ of polymatroids is \emph{minor-closed} if for every $f \in \mathcal{F}$ all minors of $f$ are also in $\mathcal{F}$.
\end{definition}


\begin{definition}[Quotient Bounded Polymatroid Family]\label{definition:bicriteria-ratio}
  Let $\cF$ be a minor-closed family of polymatroids and $c \ge 1$ be an integer. The family $\cF$ is a \emph{$c$-\qbounded} if for every $f:2^{\cN}\rightarrow \mathbb{Z}_{\geq 0}$ in $\cF$, there is a non-empty quotient $Q$ of $f$ such that\footnote{For simplicity, we assume that the empty polymatroid always satisfies this property.}
  $$|Q| \leq c \cdot \frac{\sum_{e\in \cN}f(e)}{f(\cN)}.$$
  We say that $\cF$ is \emph{$c$-\sqbounded} if for every polymatroid $f:2^{\cN}\rightarrow \mathbb{Z}_{\geq 0}$ in $\cF$, there is a non-empty quotient $Q$ of $f$ such that
  $$|Q| \leq \frac{\sum_{e\in \cN}(f(e)+ c)}{f(\cN)}.$$
\end{definition}

If $\cF$ is the set of rank functions of a minor-closed family of matroids, then the two definitions are roughly equivalent---in particular, $c\cdot \sum_{e\in \cN}f(e)/f(\cN)=c\cdot |\cN|/f(\cN)$ and $\sum_{e\in \cN}(f(e) + c)=(c+1)\cdot|\cN|/f(\cN)$, i.e., they differ by a factor of at most $2$. In this setting, our definition is the same as the notion of $c$-ratio matroids defined by Karger \cite{Kar-matroid}: these are matroids with a gap of at most $c$ between the min-quotient and the base packing number. 
A \emph{$k$-polymatroid} is a polymatroid in which the function value of every singleton set is at most $k$. In this sense, $1$-polymatroids are equivalent to matroid rank functions. If $k$ is a fixed-constant, then $k$-polymatroids are close to being matroid rank functions; also $c$-\sqbounded  and $c$-\qbounded are equivalent upto a factor of $2k$: $c\cdot \sum_{e\in \cN}f(e)/f(\cN)\in[c |\cN|/f(\cN), ck|\cN|/f(\cN)]$ and $\sum_{e\in \cN}(f(e) + c)\in [(c+1)|\cN|/f(\cN),(c+k)\cdot|\cN|/f(\cN)]$. An important example that is relevant to this work are hypergraphs where each hyperedge has at most $k$ vertices; we call them $k$-hypergraphs\footnote{Typically these are called rank-$k$ hypergraphs but we avoid this due to the use of rank for other concepts.}. The hypergraphic polymatroid family associated with $k$-hypergraphs are in fact $(k-1)$-polymatroids. 
Our definitions bring out subtle differences in the setting of general polymatroids. For a minor-closed family of polymatroids, $c$-\sqbounded is a much stronger property than $c$-\qbounded, especially from the perspective of random contraction. We explain this next by considering three concrete polymatroid families that are minor-closed.

In graphs, the analysis of the random contraction algorithm of Karger relies on the fact that the global mincut value is at most the \emph{average degree} which is $2|E|/|V|$ --- this is true because there is a vertex $v$ with degree at most the average degree and the edges incident to $v$ form a valid cut.  A similar fact is also used in the analysis of the nouniform random contraction of hypergraphs: The global mincut value is at most the average degree which is $(\sum_{e \in E} |e|)/|V|$. Via these observations, one can show that both the hypergraphic polymatroid family and the graphic matroid rank function family are $2$-\qbounded. 
In fact, one can also show that
the hedgegraph polymatroid family is also $2$-\qbounded. What distinguishes
hypergraphs and hedgegraphs? A key observation is that the hypergraphic polymatroid family is \emph{$1$-\sqbounded}. 
The \emph{span} of a hedgegraph is the maximum number of vertex-disjoint components among all hedges. 
We also observe that the hedgegraph polymatroid family of hedgegraphs of span at most $s$ is $s$-\sqbounded. Section \ref{section:quotient-bounded-instance} proves these observations.

\mypara{On random contraction for polymatroids:} Random contraction algorithms use a series of contractions to reduce the problem to an easy base case where one can enumerate all the quotients/mincuts by brute force. The base case in the setting of graphs, hypergraphs, and hedgegraphs is when the number of vertices is a fixed constant. In the setting of polymatroids, this is when the rank is small. The algorithm needs to choose an element to contract to ensure that a min-quotient survives. In the setting of matroids, all elements $e$ have $f(e) = 1$ and are indistinguishable for a  random contraction algorithm that is oblivious to global structure, and hence, uniform random contraction is the natural choice. However, in the setting of polymatroids, $f(e)$ values can differ substantially, so contracting an element with large $f(e)$ value reduces the rank significantly in one step but may destroy the minimum quotient. For instance, consider the extreme case of an element $e$ with $f(e) = f(N)$; every such element belongs to every non-empty quotient. We cannot contract such an element; we need to include it in every quotient and delete it from further consideration. This discussion shows that a random contraction algorithm may need to differentiate the elements according to $f(e)$ value while being somewhat oblivious to the structure, and this naturally leads to various non-uniform random contraction approaches.
We note that the abstract polymatroid framework does not ``see'' the vertices when applied to the polymatroid associated with graphs, hypergraphs, and hedgegraphs; recall that the polymatroid is defined over edges/hyperedges/hedges. The $c$-\sqbounded and $c$-\qbounded conditions abstract the desired average degree bounds on the min-quotient in a clean fashion.

\subsection{Our Results}

Our conceptual contribution is bringing a polymatroidal lens to random contraction, and identifying the family of \qbounded polymatroids on which variants of random contraction succeed in finding an (approximate) min-quotient.
We state the informal version of our results below. 
In the following, $r$, $n$, and $q$ denote the rank, size of the ground set, and size of a minimum non-empty quotient of the input polymatroid respectively. For ease of understanding, we assume $c=O(1)$ in the following results. 
For a polymatroid $f$ and a value $\alpha>0$, a non-empty quotient $Q'$ is a \emph{$\alpha$-approximate minimum quotient} of $f$ if $|Q'|\le \alpha|Q|$ for every non-empty quotient $Q$ of $f$. We recall that a $k$-polymatroid is a polymatroid whose function value on every singleton set is at most $k$. 
\begin{enumerate}
    \item For $c$-\sqbounded polymatroid families, we design a random contraction algorithm to find a minimum non-empty quotient in time $(r\cdot n)^{O(c)}$, i.e., in polynomial time. 
    This recovers the randomized polynomial-time algorithm of \cite{CXY21} for \HedgeMinCut in bounded span hedgegraphs, hypergraphs, and graphs. 
    \item For $c$-\qbounded polymatroid families, we design a random contraction algorithm to find a $(1+\varepsilon)$-approximate minimum non-empty quotient in time $r^{O(c\log{(1/\varepsilon)})}n^{O(c)}$, i.e., a PTAS;
    this also implies an algorithm to find a minimum non-empty quotient in time $r^{O(c\log{q})}n^{O(c)}$, i.e., in quasi-polynomial time. 
     This recovers the PTAS and quasi-polynomial-time algorithms of \cite{GKP17} for \HedgeMinCut. 
    \item For $c$-\qbounded polymatroid families, we design a random contraction algorithm to find a minimum non-empty quotient in time $2^q\cdot r^{O(c\log c)} \cdot n^{O(c)}$, i.e., an FPT algorithm when parameterized by the solution size $q$. 
    This recovers the FPT algorithm of \cite{fomin2025fixed} for \HedgeMinCut parameterized by solution size. 
\end{enumerate}

The algorithms underlying our results also lead to counting bounds shown in Table \ref{table:number} below. 
\begin{table}[H]
\centering
\begin{tabular}{|l|l|l|}
\hline
\textbf{Family} & \textbf{Number of minimum Quotients}\\
\hline
$c$-\sqbounded & $r^{O(c)}\cdot n^{O(c)}$ (Corollary \ref{coro:sqbounded-number}) \\
$c$-\qbounded & $r^{O(c\log{q})}\cdot n^{O(c)}$ (Corollary \ref{coro:qbounded-number-1})\\
$c$-\qbounded & $2^q\cdot r^{O(c\log c)} \cdot n^{O(c)}$ (Corollary \ref{coro:qbounded-number-2})\\
\hline
\end{tabular}
\caption{Bounds on the number of minimum quotients in $c$-\qbounded and $c$-\sqbounded polymatroid families. 
The two bounds for $c$-\qbounded family are incomparable since one is smaller than the other in different regimes of $q$, $r$, and $c$.}
\label{table:number}
\end{table}

\paragraph{$k$-polymatroids.} 
A key consequence of Karger's random contraction is a bound of $O(n^{2\alpha})$ on the number of approximate min-cuts in $n$-vertex connected graphs. This bound on the number of approximate min-cuts has found applications in cut-sparsifiers, graph unreliability, and network design \cite{BK15, Kar01, Kar99-sampling}. 
The graph upper bound does not generalize to hypergraphs (see \cite{CXY21} for an example), but generalizes to connected $k$-hypergraphs for fixed constants $k$: a naive analysis of random contraction leads to a bound of $O(n^{k\alpha})$; Kogan and Krauthgamer improved this bound to $O(2^{k\alpha}kn^{2\alpha})$ \cite{kogan2015sketching}. 
Although the bound on approximate min-cuts in graphs have been obtained via other methods \cite{NNI97, CQX20, BCW23}, the only known proof for the bound for $k$-hypergraphs is via the random contraction algorithm. Our next result recovers this bound for $k$-hypergraphs through the polymatroidal lens. 
\begin{enumerate}[start=4]
    \item For $c$-\sqbounded $k$-polymatroid families, we design a random contraction algorithm to find a subset $\cN'$ of the ground set such that every fixed $\alpha$-approximate minimum quotient $Q$ of $f$ is contained in $\cN'$ with probability $\Omega(k^{-1} r^{-(c+1)\alpha})$ and moreover, the polymatroid obtained by contracting the complement of $\cN'$ has rank at most $\alpha(c+k)$.
\end{enumerate}

We recall that the family of hypergraphic matroids associated with $k$-hypergraphs is a $1$-\sqbounded $(k-1)$-polymatroid family. 
Moreover, the ground set of the hypergraphic matroid associated with a hypergraph is the set hyperedges $E$ and its rank is at most the number of vertices. Hence, the above result implies that the number of quotients (i.e., cut-sets) in the set $\cN'\subseteq E$ returned by the algorithm is $|\cN'|^{\alpha(c+k)}\le |E|^{\alpha k}$ (see Lemma \ref{lemma:quotient-counting}) and consequently, the number of $\alpha$-approximate minimum quotients (i.e., cut-sets) in connected $k$-hypergraphs with $n$ vertices and $m$ hyperedges is $O(m^{\alpha k} k n^{2\alpha})$. We observe that this is weaker than the Kogan-Krauthgamer bound. We are able to strengthen this bound by observing that if the rank of the hypergraphic polymatroid obtained by contracting $E\setminus \cN'$ is at most $\alpha(c+k-1)=\alpha k$, then the number of vertices in the hypergraph obtained by contracting $E\setminus \cN'$ is at most $\alpha k+1$ and hence, the number of cuts (i.e., non-empty proper vertex subsets) in the contracted hypergraph is $O(2^{\alpha k})$ and consequently, the number of $\alpha$-approximate minimum cuts is $O(2^{\alpha k}kn^{2\alpha})$. 
See Section \ref{section:quotient-bounded-instance} for details. 

In this work we focused on defining the abstract class of quotient-bounded polymatroids and unifying the existing results on mincut computation. A natural question is whether one can derive new results via this notion. The polynomial-time computability of cogirth for various matroid families is an important question and it appears that our definition can capture a few classes that have been studied \cite{Zaslavsky,GGW18} but we leave a formal verification for future work. Polymatroids are more general but less is known about minor-closed families and we hope that our work will give an impetus to discovering tractable families with applications.

\input{relatedwork}

%% file: relatedwork.tex
\subsection{Related Work}
Random contraction has led to many strands of work and it is infeasible to do a thorough job of discussing all of them in this focused work. We give a few pointers.  Our focus is on polynomial-time solvability but adaptations of the contraction approach to design faster algorithms have been studied including the well-known work of Karger and Stein on graphs \cite{KS96}, and Fox, Panigrahi, and Zhang for hypergraphs \cite{fox2023minimum} (also see \cite{KW21}). 
A significant application of random contraction, which is related to the bound on approximate mincuts, are approximation schemes for network unreliability. Starting with Karger's initial breakthrough that obtained the first FPRAS for network unreliability \cite{Kar01}, there have been many papers \cite{Kar17, Kar20, CLP25}, including a recent quasi-polynomial time approximation scheme for hypergraph unreliability \cite{CLP24}. 
Random contraction has also been used to prove several results on multi-criteria mincuts in graphs and hypergraphs---we refer the reader to \cite{Kar16, AMR17, BCX23}.

Random contraction has been used to obtain algorithms for the min $k$-cut problem in graphs, hypergraphs, and constant span hedgegraphs where the goal is to partition the vertex set into $k$ non-empty parts to minimize the number of edges/hyperedges/hedges that cross the parts \cite{KS96, GHLL21, CXY21}. There are some subtleties in extending random contraction approaches for $k$-cut to the setting of polymatroids and this is a topic for future exploration.

There is a very large literature on random sampling and sparsification in graphs, hypergraphs, matroids, and polymatroids. We already mentioned cut and quotient sparsification \cite{BK15,chen2020near,quanrud2024quotient}. Cut sparsification has been generalized to spectral sparsification in graphs \cite{spielman2011spectral,spielman2011graph,batson2012twice} and hypergraphs \cite{KKTY22,Lee23,JLS23,BST19,KKTY21-FOCS, KKTY21-STOC}. We will not attempt to explain these complex results here.

Finally, we mention that the min-quotient problem is of interest in several classes of matroids. A particularly relevant paper is that of Geelen and Kapadia \cite{GeelenKapadia} who showed that girth and cogirth of perturbed graphic matroids can be computed in polynomial-time, and a key tool in their work is Karger's random contraction. Along the way they solve a problem of Barahona and Conforti on computing the cogirth of even cycle matroids \cite{barahona1987construction}. Their work is related to a general conjecture of Gerards, Geelen, and Whittle \cite{geelen2015highly} on polynomial-time solvability of the girth for every \emph{proper} minor closed family of binary matroids (note that Vardy's result shows that the problem is hard for binary matroids). We refer the reader to \cite{GeelenKapadia} for more details.

%% file: preliminary.tex
\section{Preliminaries}\label{section:preliminaries}
We recall that a polymatroid is an integer-valued monotone submodular function with its value on the empty set being $0$. Let $f:2^\mathcal{N}\rightarrow \mathbb{Z}_{\geq 0}$ be a polymatroid. We denote the rank of $f$ by $r:=f(\mathcal{N})$ and the size of the ground set $\mathcal{N}$ by $n:=|\mathcal{N}|$.
For a subset $A\subseteq \cN$, the set function $f_{\backslash A}:2^{\cN\setminus A}\rightarrow \mathbb{Z}_{\geq 0}$ obtained by deleting $A$ is defined by $f_{\backslash A}(S):=f(S)$ for every $S\subseteq \cN\setminus A$, and the set function $f_A:2^{\cN\setminus A}\rightarrow \mathbb{Z}_{\geq 0}$ obtained by contracting $A$ is defined by $f_A(S):=f(S\cup A)-f(A)$ for every $S\subseteq \cN\setminus A$. We note that both deletion and contraction gives a polymatroid again. For ease of notation, for an element $e\in \mathcal{N}$, we let $e$ denote the singleton set $\{e\}$. For every set $S\subseteq \mathcal{N}$, we use $S+e$ and $S-e$ to abbreviate $S\cup \{e\}$ and $S\setminus \{e\}$, respectively.

The following lemma shows basic properties of deletion and contraction.

\begin{restatable}{lemma}{quotientproperty}\label{lemma:quotient-property}
    Let $f:2^{\mathcal{N}}\rightarrow \mathbb{Z}_{\geq 0}$ be a polymatroid and $Q\subseteq \cN$ be a non-empty quotient of $f$. 
    \begin{enumerate}
        \item For every subset $T\subseteq \cN$, $Q\setminus T$ is a quotient of polymatroid $f_{\backslash T}$.
        \item If $Q$ is a minimum non-empty quotient of $f$ with $|Q|>1$, then for every element $e\in Q$, the set $Q-e$ is a minimum non-empty quotient of $f_{\backslash e}$.
        \item If $Q$ is a minimum non-empty quotient of $f$, then for every element $e\not\in Q$, the set $Q$ is a minimum non-empty quotient of $f_e$.
        \item If $|\cN|>1$, then for every element $e\in \cN$, 
        the minimum size of a non-empty quotient of $f_e$ is at least the minimum size of a non-empty quotient of $f$. 
    \end{enumerate}
\end{restatable}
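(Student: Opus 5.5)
Everything reduces to translating closedness across the minor operations. A set $Q\subseteq\cN$ is a quotient of $f$ iff for every $e\in Q$ we have $f((\cN\setminus Q)+e)>f(\cN\setminus Q)$. For deletion, $f_{\backslash T}$ agrees with $f$ on subsets of $\cN\setminus T$. For contraction, $f_e(S+x)-f_e(S)=f(S+e+x)-f(S+e)$, so the marginals of $f_e$ are marginals of $f$ evaluated at sets containing $e$. I will handle the four parts in order; parts 2 and 4 lean on part 1 and on the contraction identity.

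\textbf{Part 1.} The closed set of $f_{\backslash T}$ to check is $C:=(\cN\setminus T)\setminus(Q\setminus T)=(\cN\setminus Q)\setminus T$. Take $x\in Q\setminus T$. By submodularity, since $C\subseteq \cN\setminus Q$ and $x\notin\cN\setminus Q$,
\[
f(C+x)-f(C)\ \ge\ f((\cN\setminus Q)+x)-f(\cN\setminus Q)\ >\ 0 .
\]
So $C$ is closed in $f_{\backslash T}$, and $Q\setminus T$ is a quotient.

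\textbf{Part 3.} Let $e\notin Q$, so $e\in \cN\setminus Q$. The relevant set for $f_e$ is $C'=\cN\setminus Q-e$, and for $x\in Q$,
\[
f_e(C'+x)-f_e(C')=f((\cN\setminus Q)+x)-f(\cN\setminus Q)>0 .
\]
Hence $Q$ is a non-empty quotient of $f_e$. Minimality then follows from Part 4, since every non-empty quotient of $f_e$ has size at least $|Q|$.

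\textbf{Part 4.} This is the real content. Let $Q'$ be a non-empty quotient of $f_e$, and let $C'=(\cN-e)\setminus Q'$. Closedness of $C'$ in $f_e$ means $f(C'+e+x)>f(C'+e)$ for all $x\in Q'$. So $Q'$ is exactly the complement of the closed set $C'+e$ in $f$. It is therefore a non-empty quotient of $f$ (the case $|\cN|>1$ keeps the ground set of $f_e$ non-empty), so $|Q'|\ge$ the minimum quotient size of $f$. The only step needing care is to state this equivalence cleanly: quotients of $f_e$ are precisely the quotients of $f$ avoiding $e$.

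\textbf{Part 2.} This is the main obstacle. By Part 1, $Q-e$ is a quotient of $f_{\backslash e}$, and it is non-empty since $|Q|>1$. For minimality, suppose $Q''$ is a non-empty quotient of $f_{\backslash e}$ with $|Q''|<|Q|-1$. Let $D=(\cN-e)\setminus Q''$, which is closed in $f_{\backslash e}$. Consider the two cases.
\begin{itemize}
  \item If $f(D+e)>f(D)$, then $D$ is closed in $f$ and $Q''+e$ is a quotient of $f$ of size $|Q''|+1<|Q|$.
  \item Otherwise $e\in\Span(D)$. Then $D+e$ is closed in $f$: for $x\in Q''$, submodularity together with $f(D+e)=f(D)$ gives $f(D+e+x)\ge f(D+x)>f(D)=f(D+e)$. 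So $Q''$ itself is a quotient of $f$, even smaller.
\end{itemize}
Both cases contradict the minimality of $Q$, which completes Part 2.
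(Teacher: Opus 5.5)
Your proof is correct and follows essentially the same route as the paper. You use the same submodularity argument for Part 1 and the same identity $f_e(S+x)-f_e(S)=f(S+e+x)-f(S+e)$ for Parts 3 and 4 (you derive Part 3's minimality from Part 4 instead of repeating the argument). In Part 2 you split into the same two cases according to whether $e\in\Span(D)$, and you usefully prove the closedness of $D+e$ that the paper leaves implicit. One small correction: in that step the inequality $f(D+e+x)\ge f(D+x)$ comes from monotonicity, not submodularity.
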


The following lemma gives an upper bound on the number of non-empty quotients in terms of $n$ and $r$.
\begin{restatable}{lemma}{lemmacounting}\label{lemma:quotient-counting}
    Let $f:2^{\mathcal{N}}\rightarrow \mathbb{Z}_{\geq 0}$ be a polymatroid. Then, the number of non-empty quotients of $f$ is at most $n^r$, where $r$ and $n$ denote the rank and size of the ground set, respectively.
\end{restatable}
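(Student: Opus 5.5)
Each non-empty quotient $Q$ is determined by its complement $\cN\setminus Q$, which is a closed set. So it suffices to show that the number of closed sets $S\neq\cN$ is at most $n^r$. The plan is to encode every closed set by a short sequence of elements whose span recovers it.

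\textbf{Encoding.} Let $S$ be a closed set with $S\neq \cN$. I would build a subset $B\subseteq S$ greedily: start with $B=\emptyset$, and while some $e\in S$ satisfies $f(B+e)>f(B)$, add such an $e$ to $B$. Every addition raises $f(B)$ by at least $1$, because $f$ is integer-valued. Moreover $f(B)\le f(S)\le f(\cN)=r$. Hence the process stops after at most $f(S)$ steps, with $|B|\le f(S)$.

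\textbf{Decoding.} When the process stops, every $e\in S$ has $f(B+e)=f(B)$, so $S\subseteq \Span(B)$. For the reverse inclusion, take $e\notin S$. Since $S$ is closed, $f(S+e)>f(S)$. Submodularity with $B\subseteq S$ gives $f(B+e)-f(B)\ge f(S+e)-f(S)>0$, so $e\notin\Span(B)$. Therefore $S=\Span(B)$.

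\textbf{Bounding the number of codes.} The map $S\mapsto B$ is injective, since $S$ is recovered as $\Span(B)$. So the number of closed sets is at most the number of subsets of $\cN$ of size at most $r$.
\begin{itemize}
\item If $S\neq\cN$ is closed, then $f(S)<r$: otherwise $f(S)=f(\cN)$, and monotonicity gives $f(S+e)=f(S)$ for $e\notin S$, contradicting closedness. Hence the relevant sets $B$ have $|B|\le r-1$.
\item For $r\ge 1$ and $n\ge 2$, the count is $\sum_{i=0}^{r-1}\binom{n}{i}\le \sum_{i=0}^{r-1}n^i\le n^r$. The last inequality holds because $\sum_{i<r}n^i=(n^r-1)/(n-1)$.
\end{itemize}

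\textbf{Degenerate cases.} These are the only delicate step.
\begin{itemize}
\item If $r=0$, every element lies in $\Span(\emptyset)$, so $\cN$ is the only closed set. Then there are no non-empty quotients, and the bound $1$ holds.
\item If $n=1$ and $r\ge1$, the single element is outside $\Span(\emptyset)$, so both $\emptyset$ and $\cN$ are closed. There is then exactly one non-empty quotient, and $1\le 1^r$.
\end{itemize}
Outside these cases, the general count above applies.
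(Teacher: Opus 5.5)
Your proof is correct and follows the same route as the paper's. Both encode each closed set $S\neq\mathcal{N}$ by a greedily built set $B\subseteq S$ with $|B|\le f(S)\le r-1$ and $\Span(B)=S$, then bound the count by $\sum_{i=0}^{r-1}\binom{n}{i}\le n^r$; your explicit submodularity argument for $\Span(B)\subseteq S$ fills in a step the paper only asserts.
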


We defer the proofs of these lemmas to Appendix~\ref{appendix:quotient}.

%% file: nonuniform-random-contraction-3.tex
\section{\MinQuo in $c$-\sqbounded polymatroids}\label{section:non-uniform-contraction}

In this section, we analyze a non-uniform random contraction algorithm for $c$-\sqbounded polymatroid families. 
The following is the main theorem of this section. 
\begin{restatable}{theorem}{theoremrandomcontraction}\label{theorem:random-contraction-algorithm}
    Let $\cF$ be a $c$-\sqbounded family of polymatroids and let $f \in \mathcal{F}$ with rank $r$ and on a ground set of size $n$. There is a random-contraction algorithm that takes $f$ as input (via its evaluation oracle) and runs in polynomial time to return a collection of $n^{O(c)}$ quotients of $f$ and has the following property: for any fixed minimum non-empty quotient $Q$ of $f$, $Q$ is in the output set with probability at least $1/r^{O(c)}$.
\end{restatable}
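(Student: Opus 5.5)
We analyze a recursive nonuniform contraction in the spirit of \cite{CXY21}, stopping once the rank is $O(c)$. Let $f\in\cF$ have rank $r$, ground set $\cN$, and let $Q$ be a fixed minimum non-empty quotient with $q:=|Q|$.

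\textbf{Base case.} If $r\le r_0$ for a threshold $r_0=O(c)$ (say $r_0=2c$), the algorithm outputs all non-empty quotients of $f$. By Lemma~\ref{lemma:quotient-counting} there are at most $n^{r_0}=n^{O(c)}$ of them. They can be enumerated in polynomial time: every closed set is the span of a set of at most $r_0$ elements, so we list the complements of the spans of all subsets of size at most $r_0$.

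\textbf{Recursive step.} When $r>r_0$, we first discard elements with $f(e)=0$. Such elements lie in the span of every set, hence never belong to a quotient, so we may delete them. Next we sample an element $e$ with probability proportional to a weight $w(e)$ and recurse on $f_e$; the family is minor-closed, so $f_e\in\cF$. The natural choice is $w(e)=f(e)$ (or $f(e)+c$), truncated to avoid elements whose contraction collapses the rank too much.

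Suppose $e\notin Q$. By Lemma~\ref{lemma:quotient-property}(3), $Q$ remains a minimum non-empty quotient of $f_e$. Conversely, any quotient of $f_e$ extends to a quotient of $f$ (we contract, not delete), so every output set is a quotient of $f$. The rank drops from $r$ to $r-f(e)$.

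By the strongly-quotient-bounded property, together with the minimality of $Q$ in $f$,
\[
q\le \frac{\sum_{e}(f(e)+c)}{r}.
\]
The key potential argument is to choose $e$ with probability $p_e=\frac{f(e)+c}{\sum_{g}(f(g)+c)}$ and track the quantity $\prod (r_{\text{after}}/r_{\text{before}})$. The probability of hitting $Q$ is at most
\[
\frac{\sum_{e\in Q}(f(e)+c)}{\sum_g(f(g)+c)}\le \frac{q(\max_{e\in Q} f(e)+c)}{qr},
\]
which is not directly bounded by something like $(c+1)/r$ since $f(e)$ for $e\in Q$ can be large. This is the main obstacle.

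We plan to handle it as in \cite{CXY21} by weighting each element by the rank reduction it causes, and comparing the survival probability with the expected log-rank drop. We want
\[
\Pr[\text{survive}]\ \ge\ \prod_{\text{steps}} \Bigl(1-\frac{f(e)+c}{r}\Bigr)\ \approx\ \Bigl(\frac{r_{\text{final}}}{r}\Bigr)^{O(c)},
\]
where each step is "charged" by its own rank drop. Concretely, we sample $e$ with probability $p_e$ and claim, by induction on $r$ with hypothesis $\Pr[Q \text{ survives from rank } r]\ge \binom{r}{c'}^{-1}$ for $c'=O(c)$, that
\[
\sum_{e\notin Q}p_e\binom{r-f(e)}{c'}^{-1}\ge \binom{r}{c'}^{-1}.
\]
This follows if
\[
\sum_{e\notin Q}(f(e)+c)\,\frac{\binom{r}{c'}}{\binom{r-f(e)}{c'}}\ \ge\ \sum_{e}(f(e)+c).
\]
Since $\binom{r}{c'}/\binom{r-f}{c'}\ge (r/(r-f))^{c'}\ge 1+c'f/r$, the gain on elements outside $Q$ is at least $\sum_{e\notin Q}(f(e)+c)c'f(e)/r$. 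This must offset the loss $\sum_{e\in Q}(f(e)+c)$.

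The loss is handled using the quotient bound. Elements $e\in Q$ with $f(e)$ close to $r$ are forced into every quotient, so $Q$ is essentially "heavy". Using $qr\le \sum_g(f(g)+c)$ and the fact that $f(e)\le r$, we get $\sum_{e\in Q}(f(e)+c)\le q(r+c)$. The remaining mass outside $Q$ has total at least $\sum_{e\notin Q} f(e)\ge r$, by submodularity together with the closedness of $\cN\setminus Q$, which gives $f(\cN\setminus Q)=r$... Balancing these inequalities with $c'=O(c)$ is the delicate calculation, and it is where we expect to spend the most care. If it fails, we will modify the sampling so that elements with $f(e)>r/2$ are never contracted.

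\textbf{Output and running time.} The leaves of the recursion output $n^{O(c)}$ quotients. The resulting success probability is $\binom{r}{O(c)}^{-1}=r^{-O(c)}$, and each run takes polynomial time.
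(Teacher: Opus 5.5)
Your proposal has a genuine gap, and it lies in the sampling weights. You sample $e$ with probability proportional to $f(e)+c$, which favours elements with a large rank drop. Those are exactly the elements most likely to lie in $Q$: an element with $f(e)=r$ lies in every non-empty quotient. So they must be sampled \emph{less} often, not more.

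Your own sufficient inequality fails badly. Take the hypergraph on $\{v_0\}\cup U$ with $|U|=n-1$, consisting of a complete graph on $U$ together with $n-2$ hyperedges. Each hyperedge consists of $v_0$ and $n/2-1$ vertices of $U$; here $c=1$ and $r=n-1$.
\begin{itemize}
\item The cut around $v_0$ is a minimum quotient $Q$, since every other cut already cuts at least $n-2$ graph edges.
\item Each $e\in Q$ has $f(e)=(r-1)/2<r/2$, so your truncation at $r/2$ does not exclude these elements.
\item Your loss term is $\sum_{e\in Q}(f(e)+c)=(n-2)n/2$, while your gain term is $2\binom{n-1}{2}\cdot\frac{c'}{r-c'}$. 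You would therefore need $c'\geq n(n-1)/(3n-2)\approx r/3$, so your induction gives only $\binom{r}{c'}^{-1}=2^{-\Omega(r)}$ instead of $r^{-O(c)}$.
\item Concretely, your very first step contracts an element of $Q$ with probability $n/(3n-2)\approx 1/3$.
\end{itemize}
The step you intended to use against this is also false. Since $\cN\setminus Q$ is closed and $Q\neq\emptyset$, we have $f(\cN\setminus Q)<r$, not $f(\cN\setminus Q)=r$. Moreover $\sum_{e\notin Q}f(e)$ need not be large; for example, take $\cN=Q=\{e\}$ with $f(e)=r$.

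The missing idea is to weight in the opposite direction, by the rank that \emph{remains}: sample $e$ with probability proportional to $\lambda_e:=r-f(e)-c$. The induction hypothesis $P(r')\ge\binom{r'}{c+1}^{-1}$ then closes exactly, because
\[
\lambda_e\binom{r-f(e)}{c+1}^{-1}=\frac{(c+1)!}{\prod_{i=0}^{c-1}(r-f(e)-i)}\ \ge\ r\binom{r}{c+1}^{-1}\quad(\text{using } f(e)\ge 1).
\]
Hence $P(r)\ge r\binom{r}{c+1}^{-1}|\cN\setminus Q|/\sum_{e}\lambda_e$. The strongly-quotient-bounded inequality, rearranged, is precisely $r|\cN\setminus Q|\ge\sum_{e\in\cN}(r-f(e)-c)$, which gives $P(r)\ge\binom{r}{c+1}^{-1}$. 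In the example above, this weighting hits $Q$ with probability only about $1/n$.

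Because $\lambda_e\le 0$ when $f(e)\ge r-c$, you also need a non-contraction branch for such elements, and your proposal lacks one. Your fallback of never contracting heavy elements leaves you stuck when every element is heavy. The paper handles such an element $e$ in two parts:
\begin{itemize}
\item It brute-forces the minimum quotients avoiding $e$ inside $f_e$. This minor has rank at most $c$, hence at most $n^c$ quotients by Lemma~\ref{lemma:quotient-counting}.
\item It recurses on the deletion $f_{\backslash e}$ to find those containing $e$, using Lemma~\ref{lemma:quotient-property}(2).
\end{itemize}
This branch costs no success probability, and it keeps the output size at $n^{c+1}$ via $T(n)\le (n-1)^c+T(n-1)$.
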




Running the Algorithm from Theorem~\ref{theorem:random-contraction-algorithm} $r^{O(c)}$ times and returning the minimum quotient produced gives the following.
\begin{corollary}\label{coro:sqbounded-min-quotient-algo}
    Let $\cF$ be a $c$-\sqbounded family of polymatroids. There is a random-contraction algorithm that takes a polymatroid $f\in \cF$ as input and runs in time $(r\cdot n)^{O(c)}$ to return a minimum quotient of $f$ with high probability.
\end{corollary}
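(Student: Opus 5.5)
The plan is to derive the corollary by probability amplification from Theorem~\ref{theorem:random-contraction-algorithm}. Fix one minimum non-empty quotient $Q^*$ of $f$; one exists because $f\in\cF$ and the \sqbounded property guarantees a non-empty quotient. A single run of the algorithm from Theorem~\ref{theorem:random-contraction-algorithm} outputs a collection of $n^{O(c)}$ quotients of $f$. With probability at least $p := 1/r^{O(c)}$, this collection contains $Q^*$.

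The amplified algorithm works as follows.
\begin{enumerate}
\item Run that algorithm independently $T := \lceil p^{-1}\ln n\rceil \cdot O(1) = r^{O(c)}\log n$ times.
\item Take the union of all output collections and discard the empty set.
\item For each remaining candidate, recheck quotient-hood with the oracle: a set $S$ is a quotient iff $f((\cN\setminus S)+e) > f(\cN\setminus S)$ for every $e\in S$, which costs $O(n)$ oracle calls.
\item Return a candidate of minimum cardinality.
\end{enumerate}
The recheck is not strictly needed, since the theorem already promises that outputs are quotients, but it is cheap.

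Correctness follows because every returned set is a non-empty quotient. If $Q^*$ appears in some run, the returned set has size at most $|Q^*|$, so it is a minimum non-empty quotient. The failure probability is at most $(1-p)^T \le e^{-pT} \le n^{-\Omega(1)}$. Any desired polynomial in $n$ can be reached by enlarging the constant in $T$, which gives "with high probability."

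For the running time, each run takes time polynomial in $n$ and $r$. Here I would note that the theorem's "polynomial time" should be read as $(rn)^{O(c)}$ at most, since the algorithm also lists $n^{O(c)}$ quotients. The total time is then $T\cdot (rn)^{O(c)} + T\cdot n^{O(c)}\cdot O(n) = r^{O(c)}\log n\cdot(rn)^{O(c)} = (rn)^{O(c)}$.

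The main point requiring care is this accounting. The per-run cost must be bounded by $(rn)^{O(c)}$ rather than by an unspecified polynomial independent of $c$. I would handle it by observing that the output size already forces the $n^{O(c)}$ dependence, and that the contraction phase uses only polynomially many oracle calls. No other step is delicate.
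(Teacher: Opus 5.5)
Your proposal is correct and matches the paper's argument: the paper also obtains this corollary by running the algorithm of Theorem~\ref{theorem:random-contraction-algorithm} $r^{O(c)}$ times and returning the smallest quotient produced. Your extra $\log n$ factor in the number of repetitions and your oracle recheck of quotient-hood are harmless refinements. They make the ``high probability'' and correctness claims explicit without changing the $(r\cdot n)^{O(c)}$ bound.
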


Theorem~\ref{theorem:random-contraction-algorithm} also yields an upper bound on the number of minimum quotients.
\begin{corollary}\label{coro:sqbounded-number}
    Let $\cF$ be a $c$-\sqbounded family of polymatroids and $f\in \cF$. The number of minimum non-empty quotients of $f$ is at most $(r\cdot n)^{O(c)}$, where $r$ and $n$ denote the rank and size of the ground set of $f$, respectively.
\end{corollary}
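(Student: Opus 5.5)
We derive the bound from Theorem~\ref{theorem:random-contraction-algorithm} by a standard expectation (double counting) argument, the same way the $\binom{n}{2}$ bound on the number of minimum cuts follows from Karger's algorithm. Let $\mathcal{Q}^*$ denote the set of minimum non-empty quotients of $f$. We run the algorithm of Theorem~\ref{theorem:random-contraction-algorithm} once on $f$ and let $\mathcal{O}$ be the (random) collection of quotients it returns. The theorem gives two facts, and we will combine them.
\begin{enumerate}
\item Deterministically, $|\mathcal{O}|\le n^{a c}$ for some absolute constant $a$.
\item For every fixed $Q\in\mathcal{Q}^*$, $\Pr[Q\in\mathcal{O}]\ge r^{-bc}$ for some absolute constant $b$.
\end{enumerate}

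The key step is to compute the expected number of minimum quotients that appear in the output, using linearity of expectation:
$$\mathbb{E}\bigl[|\mathcal{O}\cap \mathcal{Q}^*|\bigr]=\sum_{Q\in\mathcal{Q}^*}\Pr[Q\in\mathcal{O}]\ \ge\ |\mathcal{Q}^*|\cdot r^{-bc}.$$
We also have the upper bound $|\mathcal{O}\cap\mathcal{Q}^*|\le|\mathcal{O}|\le n^{ac}$ on every run, so the expectation is at most $n^{ac}$. Combining the two bounds gives
$$|\mathcal{Q}^*|\le r^{bc}\,n^{ac}\le (r\cdot n)^{\max(a,b)\,c}=(r\cdot n)^{O(c)}.$$
The crucial point is that the probability guarantee in the theorem holds for \emph{each} fixed minimum quotient separately, with the same lower bound for all of them. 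This is exactly what lets us sum the probabilities over $\mathcal{Q}^*$.

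It remains to handle degenerate cases, which are the only mild obstacle. If $r=0$, every element lies in $\Span(\emptyset)$, so the only closed set is $\cN$ and the only quotient is $\emptyset$. Hence there are no non-empty quotients and the bound holds trivially. If $r\ge 1$, then $r^{bc}\ge 1$ and the argument above applies; note $n\ge 1$ since $f(\cN)\ge 1$. Alternatively, one can simply appeal to Lemma~\ref{lemma:quotient-counting}: there are at most $n^r$ non-empty quotients in total, which settles any case where $r$ is bounded by a constant.
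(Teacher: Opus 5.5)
Your proposal is correct and follows the same route as the paper, which states without further detail that the bound follows from Theorem~\ref{theorem:random-contraction-algorithm}. Combining, by linearity of expectation, the deterministic output size of at most $n^{c+1}$ from Lemma~\ref{lemma:non-uniform-runtime} with the per-quotient success probability of at least $\binom{r}{c+1}^{-1}\ge r^{-(c+1)}$ (or $1$ when $r\le c$) from Lemma~\ref{lemma:non-uniform-success} gives at most $(r\cdot n)^{c+1}$ minimum quotients for $r\ge 1$, and your separate treatment of $r=0$ is also correct.
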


The rest of the section is devoted to proving Theorem \ref{theorem:random-contraction-algorithm}. Our algorithm is a non-uniform random contraction algorithm that is inspired
by the work in \cite{CXY21}.

\subsection{Non-uniform Random Contraction Algorithm}\label{section:non-uniform-algorithm}
We give an overview of the algorithm which is designed to return a set of quotients. Let $f:2^{\cN}\rightarrow \mathbb{Z}_{\geq 0}$ with rank $r$ be a polymatroid. The algorithm first removes all elements $e\in \cN$ with $f(e)=0$. Suppose there exists a "large" element $e\in \cN$ with $f(e)\geq r-c$. 
Contracting $e$ reduces the rank to $\le c$ and we can enumerate all quotients in $f_e$ via brute force in $(n-1)^{c}$ time; note that these are all quotients of $f$ that do \emph{not} contain $e$. We also wish to enumerate all quotients that contain $e$. Since contracting $e$ makes the problem easy (essentially enter a base case), we can also afford to recurse on the polymatroid $f_{\backslash e}$ obtained by deleting $e$ 
If there is no large element we randomly sample an element $e\in \cN$ with probability proportional to $\lambda_e:=r-f(e)-c$ and recurse on the polymatroid obtained by contracting $e$. This is non-uniform random contraction. The pseudocode is given in Algorithm~\ref{algo:pseudocode-nonuniform}. 

\begin{algorithm2e}[h]
\caption{Non-uniform Random Contraction Algorithm}
\label{algo:pseudocode-nonuniform}
\SetKwInput{KwInput}{Input}                
\SetKwInput{KwOutput}{Output}              
\LinesNumbered

\DontPrintSemicolon
  
    \KwInput{Evaluation oracle access to polymatroid $f:2^{\calN}\rightarrow \Z_{\geq 0}$ with $r:=f(\cN)$. 
    }
    \KwOutput{A collection $\cQ$ of quotients of polymatroid $f$.}
    \SetKwFunction{Search}{Search}
    \SetKwProg{Fn}{Function}{:}{}
    \Fn{\Search{$f$}}{
        Remove all elements $e\in \cN$ with $f(e)=0$ from $\cN$. \label{pseudocode-nonuniform-remove1}

            \If{$|\cN|=0$}{
                \KwRet $\emptyset$.
            }
        
            \If{$\exists e\in \cN: f(e)\geq r-c$}{

                Let $e\in \cN$ be an arbitrary element with $f(e)\geq r-c$.\tcp*{$f(e)$ is large.}\label{pseudocode-nonuniform-choice-of-e}
                
                Find the collection $\cQ_e$ of all minimum non-empty quotients without $e$ by brute-force. \label{pseudocode-bruteforce}

                $\mathcal{S}_1\gets \text{\Search}(f_{\backslash e})$.   \label{pseudocode-nonuniform-delete}

                $\mathcal{S}\gets \{S\cup \{e\}:S\in \mathcal{S}_1\}\cup\{e\}$.

                Remove every set $S\in \mathcal{S}$ that is not a minimal non-empty quotient of $f$. \label{pseudocode-nonuniform-remove}

                \tcp*{$\mathcal{S}$ is a collection of quotients that contain $e$.}

                \KwRet $\cQ_e \cup \mathcal{S}$. \label{pseudocode-nonuniform-return2}
            }
            \Else{ \label{pseudocode-nonuniform-else2}
                Set $\lambda_e\gets r-f(e)-c$ for each $e\in \cN$. \label{pseudocode-nonuniform-weight}
        
                Sample an element $e\in \cN$ with probability proportional to $\lambda_e$. \label{pseudocode-nonuniform-contract}
            
                \KwRet $\text{\Search}(f_e)$. \label{pseudocode-nonuniform-recurse} \tcp*{Contract $e$ and recurse}
            }
    }
\end{algorithm2e}


We prove two lemmas that together imply Theorem~\ref{theorem:random-contraction-algorithm}.
The first, and easier lemma, bounds the run-time and number of quotients returned by Algorithm~\ref{algo:pseudocode-nonuniform} and it applies to any polymatroid.

\begin{lemma}\label{lemma:non-uniform-runtime}
    Let $f:2^{\cN} \rightarrow \mathbb{Z}_+$ be an arbitrary polymatroid with rank $r$ and $n = |\cN|$. Algorithm~\ref{algo:pseudocode-nonuniform} on input $f$ returns $n^{c+1}$ quotients in $n^{O(c)}$ time.
\end{lemma}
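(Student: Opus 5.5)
We argue by induction on $n$, following the recursion tree of \Search. This tree is essentially a path. Each call either terminates at once (when $|\cN|=0$), or makes exactly one recursive call, on $f_{\backslash e}$ or on $f_e$. In both cases the ground set strictly decreases. Removing zero-valued elements also only shrinks it. So there are at most $n+1$ calls in total, and each call works on a polymatroid whose ground set has size at most $n$.

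\textbf{Bounding the output size.} Let $N(n)$ denote the maximum number of quotients returned on a ground set of size at most $n$. We show $N(n)\le n^{c+1}$, with $N(0)=0$.

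In the else-branch, the call only returns the output of $\Search(f_e)$, which has ground set of size $n-1$. Quotients of $f_e$ are quotients of $f$, since a closed set of $f_e$ plus $e$ is closed in $f$; the paper implicitly uses this. So $N(n)\le N(n-1)$ in this case.

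In the large-element branch we have $f(e)\ge r-c$, so $f_e$ has rank at most $c$ on $n-1$ elements. By Lemma~\ref{lemma:quotient-counting}, $f_e$ has at most $(n-1)^{c}$ non-empty quotients, so $|\cQ_e|\le (n-1)^c$. The collection $\mathcal{S}$ has size at most $N(n-1)+1$. Hence
\[
N(n)\le (n-1)^c+1+N(n-1).
\]
Unrolling over at most $n$ levels gives
\[
N(n)\le \sum_{m\le n}\bigl(m^c+1\bigr)\le n^{c+1}
\]
(up to handling small $n$; for $c\ge 1$ the sum $\sum_{m=1}^{n-1}(m^c+1)$ is at most $n^{c+1}$).

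\textbf{Bounding the running time.} Each call does polynomially many oracle calls for removing zero elements, detecting a large element, and computing the weights $\lambda_e$ and sampling. The brute-force step enumerates quotients of the rank-$\le c$ polymatroid $f_e$. We do this by enumerating closed sets of $f_e$ via subsets of size at most $c$ and taking their spans: every closed set of a polymatroid of rank $\rho$ is the span of some set of at most $\rho$ elements, namely a chain of increments. This takes $n^{O(c)}$ time, after which we keep only the minimum ones. The filtering step checks at most $N(n-1)+1\le n^{c+1}$ sets for being quotients, each in $O(n)$ oracle calls. Summing over at most $n+1$ calls gives $n^{O(c)}$ total time.

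\textbf{Main obstacle.} The step needing the most care is the brute-force enumeration in $(n-1)^{O(c)}$ time. I would justify it by the argument behind Lemma~\ref{lemma:quotient-counting}: each closed set of a rank-$\rho$ polymatroid equals $\Span(T)$ for some $T$ with $|T|\le\rho$. To see this, build $T$ greedily, adding at each step an element outside the current span. Each such addition increases $f$ by at least $1$, so $|T|\le \rho$. We then enumerate all such $T$.
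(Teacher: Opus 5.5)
Your proposal is correct and follows the paper's argument: the same induction on $n$ via the recurrence $N(n)\le (n-1)^c+N(n-1)$, using Lemma~\ref{lemma:quotient-counting} on the rank-$\le c$ contraction $f_e$, together with the same linear-depth recursion for the running time. You additionally count the singleton $\{e\}$, account for the filtering cost, and justify the $n^{O(c)}$ brute-force step via spans of sets of size at most $c$; these are small refinements the paper leaves implicit, and the unrolled sum is cleaner closed by the direct inductive step $(n-1)^c+1+(n-1)^{c+1}\le n^{c+1}$ (binomial expansion, $c\ge 1$) than by the loosely indexed sum you wrote.
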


\begin{proof}
 We first prove the upper bound on the number of quotients returned by induction on $n$. If there is no large element the algorithm does random contraction and reduces $n$ by $1$. Thus the interesting case is when there is a large element. Then the algorithm's output consists of the union of the quotients from the recursive call on $f_{\backslash e}$ (with one element less)  and the quotients from the brute force enumeration in $f_e$. The number of quotients in $f_e$ is at most $(n-1)^c$. Thus we obtain an easy recurrence for $T(n)$, the number of quotients returned by the algorithm, as
 $$T(n) \le (n-1)^c + T(n-1)$$
 with $T(n) \le 1$ for $n \le 1$. Thus, $T(n) \le n^{c+1}$.

 We now analyze the run-time. In each recursive call to $\text{\Search}(f)$, the size of the ground set decreases by at least $1$. At each level, the brute-force search in Line~\ref{pseudocode-bruteforce} is triggered only when $f(e)\geq r-c$, which takes at most $n^{r-f(e)}\leq n^{c}$ time. Sampling takes $O(n)$ time. Thus we obtain a recurrence for the run-time $R(n)$ as: $R(n) \le R(n-1) + O((n-1)^{c})$ with $R(1) = O(1)$ and this yields the desired upper bound.
\end{proof}

\subsection{Success Probability Analysis}\label{section:non-uniform-analysis}
We now analyze the probability of a fixed quotient being in the collection returned by  Algorithm~\ref{algo:pseudocode-nonuniform}.

\begin{lemma}\label{lemma:non-uniform-success}
    Let $\cF$ be a $c$-\sqbounded family of polymatroids and $f\in \cF$ with rank $r$. For any fixed minimum non-empty quotient $Q$ of $f$, the collection of quotients returned by 
    Algorithm~\ref{algo:pseudocode-nonuniform} on input $f$ contains $Q$ with probability at least $1/r^{O(c)}$.
\end{lemma}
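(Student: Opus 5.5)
We argue by induction on $n=|\cN|$ that, for every $f\in\cF$ of rank $r$ and every fixed minimum non-empty quotient $Q$ of $f$, the output of $\text{\Search}(f)$ contains $Q$ with probability at least $P(r):=1/\binom{r}{c+1}$. For $r\le c+1$ we interpret $P(r)$ as $1$. Since $P(r)\ge 1/r^{c+1}$, this gives the lemma. Two facts will be used throughout. First, $P$ is non-increasing in $r$. Second, $\cF$ is minor-closed, so every polymatroid reached by a recursive call is in $\cF$ and satisfies the $c$-\sqbounded inequality.

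\emph{Step 1: removing loops and the base case.} Elements with $f(e)=0$ lie in $\Span(\cN\setminus Q)=\cN\setminus Q$, so they never belong to $Q$. Deleting them therefore keeps $Q$ as a minimum non-empty quotient (Lemma~\ref{lemma:quotient-property}, part 1 together with minimality) and does not change the rank.

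\emph{Step 2: the large-element case, $f(e)\ge r-c$.} Here the algorithm is deterministic apart from the recursive call.
\begin{itemize}
\item If $e\notin Q$, then $Q$ is a minimum non-empty quotient avoiding $e$, so it is found by the brute-force search of Line~\ref{pseudocode-bruteforce} with probability $1$.
\item If $Q=\{e\}$, then $Q$ is added explicitly.
\item If $e\in Q$ and $|Q|>1$, then by Lemma~\ref{lemma:quotient-property} (part 2) the set $Q-e$ is a minimum non-empty quotient of $f_{\backslash e}$. Moreover $f_{\backslash e}$ has rank at most $r$ and ground set of size $n-1$. By induction and monotonicity of $P$, the call $\text{\Search}(f_{\backslash e})$ returns $Q-e$ with probability at least $P(r)$. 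In that event $Q=(Q-e)+e$ is in $\mathcal{S}$, and it survives the filter in Line~\ref{pseudocode-nonuniform-remove} because it is a minimum, hence minimal, non-empty quotient.
\end{itemize}

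\emph{Step 3: the contraction case, where every $e$ has $\lambda_e=r-f(e)-c\ge 1$.} If the sampled $e$ lies outside $Q$, then by Lemma~\ref{lemma:quotient-property} (part 3) $Q$ is still a minimum non-empty quotient of $f_e$, which has rank $r-f(e)$. By induction the success probability is therefore at least
\[
\frac{\sum_{e\notin Q}\lambda_e\,P(r-f(e))}{\sum_{e\in\cN}\lambda_e}.
\]
The key inequality needed is that for every integer $1\le k\le r-c-1$,
\[
\frac{P(r-k)}{P(r)}\ge\frac{r}{r-k-c}.
\]
This follows from a telescoping computation:
\[
\frac{P(r-k)}{P(r)}=\prod_{j=0}^{k-1}\frac{r-j}{r-j-c-1}\ \ge\ \frac{r}{r-c-1}\prod_{j=1}^{k-1}\frac{r-j-c}{r-j-c-1}=\frac{r}{r-k-c}.
\]
Here the $j=0$ factor is kept as is, and for $j\ge1$ we use $(r-j)/(r-j-c-1)\ge (r-j-c)/(r-j-c-1)$.

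Since $r-f(e)-c=\lambda_e$, each term satisfies $\lambda_e P(r-f(e))\ge P(r)\cdot r$. The success probability is therefore at least
\[
P(r)\cdot\frac{r(n-q)}{\sum_e\lambda_e},\qquad q=|Q|.
\]
It remains to show $r(n-q)\ge \sum_e\lambda_e = n(r-c)-\sum_e f(e)$. This rearranges to
\[
q\le \frac{\sum_{e\in\cN}(f(e)+c)}{r},
\]
which is exactly the $c$-\sqbounded condition, applied to the minimum quotient $Q$.

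I expect the main obstacle to be choosing the potential $P$ so that the per-element ratio $P(r-f(e))/P(r)$ exactly cancels the weight $\lambda_e$. This is what makes the sampling rule $\lambda_e=r-f(e)-c$ match the strong bound $\sum(f(e)+c)/r$. The binomial choice above, with the telescoping estimate, is the plan. The remaining care is bookkeeping: checking that Step 3 only occurs when $r-c-1\ge f(e)\ge 1$, so that $\lambda_e\ge 1$ and the key inequality applies, and checking that the deletion branch in Step 2 never increases the rank.
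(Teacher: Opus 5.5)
Your proposal is correct and follows essentially the same proof as the paper. It uses the same potential $\binom{r}{c+1}^{-1}$, the same case split (large element versus weighted contraction), and the same key estimate $\lambda_e P(r-f(e)) \ge r\,P(r)$ combined with the $c$-strongly-quotient-bounded inequality. The differences are cosmetic: you induct on $n$ rather than $n+r$, and you derive the key estimate by telescoping, whereas the paper cancels $\lambda_e$ against the binomial coefficient and then uses $f(e)\ge 1$.
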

\begin{proof}
        For $r\in \Z_+$, let $P(r)$ be the infimum over all $f\in \cF$ with rank $r$ and all minimum non-empty quotient $Q$ of $f$, of the probability that the collection of quotients returned by $\text{\Search}(f)$ contains $Q$. We define $B(r):=\binom{r}{c+1}^{-1}$ if $r\geq c+1$ and $B(r):=1$ otherwise. We prove that $P(r)\geq B(r)$ by induction on $n+r$, where $n$ is the size of the ground set of $f$. 
    Let $f\in \cF$ with rank $r$, ground set $\cN$ with $|\cN|=n$, and a minimum non-empty quotient $Q$. We may assume that $f(e)\ge 1\ \forall e\in \cN$ since all elements $e\in \cN$ with $f(e) = 0$ are not in $Q$ by definition of $Q$ being a quotient and such elements are removed by the algorithm in Step~\ref{pseudocode-nonuniform-remove1}. 
    
    For the base case, we consider $n+r\leq c$, which further implies that $r\leq c$. Then, for every element $e$ in the ground set of $f$, we have that $r-f(e) \leq c$, which implies that Algorithm~\ref{algo:pseudocode-nonuniform} would use a brute-force search to find all minimum quotients, and consequently, the collection returned by Algorithm \ref{algo:pseudocode-nonuniform} contains $Q$ with probability $1$. 

    Next, we prove the induction step. We only consider $r>c$, otherwise Algorithm~\ref{algo:pseudocode-nonuniform} would still use a brute-force search to find all minimum quotients, and return $Q$ with probability $1$. We first assume that there exists an element $e\in \cN$ with $f(e) \ge r-c$. Let $e$ be the element picked by the algorithm in Step~\ref{pseudocode-nonuniform-choice-of-e}.   We have two possibilities: either $e\in Q$ or $e\not\in Q$. 
    
    Suppose that $e\in Q$. Then, either $Q=\{e\}$ or $Q\supsetneq \{e\}$. If $Q=\{e\}$, then the collection $\mathcal{S}$ only contains $\{e\}$ after Step~\ref{pseudocode-nonuniform-remove}. If $Q\supsetneq \{e\}$, then $\{e\}$ is not a quotient and 
    the algorithm recurses on polymatroid $f_{\backslash e}$ in Step~\ref{pseudocode-nonuniform-delete}. Moreover, $|Q|>1$. By Lemma \ref{lemma:quotient-property}, we have that $Q-e$ is a minimum non-empty quotient of the deletion polymatroid $f_{\backslash e}$. Moreover, the rank of $f_{\backslash e}$ at most $r$ and the size of the ground set is at most $n-1$. By induction hypothesis, with probability at least $B(r)$, we have that $Q-e$ is in the collection $\mathcal{S}$ obtained in Step~\ref{pseudocode-nonuniform-delete} and consequently, $Q$ is in  the collection $\cQ_e\cup \mathcal{S}$ returned in Step~\ref{pseudocode-nonuniform-return2}. 
    
    Suppose that $e\not\in Q$. Then, $Q\in \cQ_e$, where $\cQ_e$ is the collection of all minimum quotients that do not contain $e$ as found by the algorithm in Step~\ref{pseudocode-bruteforce}. Consequently, $Q$ is in the collection returned by the algorithm.

    Finally, assume that there does not exist an element $e\in \cN$ with $f(e) \ge r-c$. Then, the algorithm executes the else-clause in Step~\ref{pseudocode-nonuniform-else2}. For each $e\not\in Q$, by Lemma \ref{lemma:quotient-property}, $Q$ is still a minimum non-empty quotient in the contracted polymatroid $f_e$, and moreover, the rank of the contracted polymatroid $f_e$ is $r-f(e)$ and hence, the set $Q$ is in the collection returned by the algorithm executed on $f_e$ with probability at least $P(r-f(e))$. The quotient $Q$ will be in the collection returned by the algorithm if the sampled element $e$ is not from $Q$ and moreover $Q$ is in the collection returned by the algorithm when executed on the contracted polymatroid $f_e$. The rank of the contracted polymatroid $f_e$ is $r-f(e)\leq r-1$. Hence,
    \begin{align}
        P(r) &\geq \frac{1}{\sum\limits_{e\in \cN}\lambda_e}\cdot \sum_{e\in \cN\setminus Q} \lambda_e\cdot P(r-f(e)) \notag \\
        &\geq \frac{1}{\sum\limits_{e\in \cN}(r-f(e)-c)}\cdot \sum_{e\in \cN\setminus Q} (r-f(e)-c)\cdot \binom{r-f(e)}{c+1}^{-1} \notag \ \text{(by induction hypothesis)}\\
        &= \frac{1}{\sum\limits_{e\in \cN}(r-f(e)-c)}\cdot \sum_{e\in \cN\setminus Q} (r-f(e)-c)\cdot \frac{(c+1)!}{\prod_{i=0}^{c}(r-f(e)-i)} \notag \\
        &= \frac{1}{\sum\limits_{e\in \cN}(r-f(e)-c)}\cdot \sum_{e\in \cN\setminus Q} \frac{(c+1)!}{\prod_{i=0}^{c-1}(r-f(e)-i)} \notag \\
        &\geq \frac{1}{\sum\limits_{e\in \cN}(r-f(e)-c)}\cdot \sum_{e\in \cN\setminus Q} \frac{(c+1)!}{\prod_{i=0}^{c-1}(r-1-i)} \ \ \text{(since $f(e)\geq 1\ \forall\ e\in \cN$)} 
        \notag \\
        &= \frac{1}{\sum\limits_{e\in \cN}(r-f(e)-c)}\cdot \sum_{e\in \cN\setminus Q} r\cdot \binom{r}{c+1}^{-1} 
        = r \cdot \binom{r}{c+1}^{-1}\cdot \frac{|\cN\setminus Q|}{\sum\limits_{e\in \cN}(r-f(e)-c)}. \label{inequality:non-uniform-analysis-1}
    \end{align}
    We recall that $\cF$ is a $c$-\sqbounded family of polymatroids and $f\in \cF$. Hence, by definition, we have that 
    \begin{equation}    
    |Q|\leq \frac{\sum_{e\in \cN}(f(e)+c)}{r} \implies |\cN\setminus Q| \geq \frac{\sum_{e\in \cN}(r-f(e)-c)}{r}. \label{ineq:c-sq-bounded}
    \end{equation}
    Substituting this bound for $|\cN\setminus Q|$ in inequality~(\ref{inequality:non-uniform-analysis-1}) gives 
    $$\begin{aligned}
        P(r) & \geq r \cdot \binom{r}{c+1}^{-1}\cdot \frac{|\cN\setminus Q|}{\sum\limits_{e\in \cN}(r-f(e)-c)} \geq \binom{r}{c+1}^{-1}=B(r).
    \end{aligned}$$ 
\end{proof}

We emphasize that the definition of $c$-\sqbounded-polymatroids is used only in inequality (\ref{ineq:c-sq-bounded}) in the analysis.

%% file: approx-random-contraction.tex
\section{\MinQuo in $c$-\qbounded polymatroids}\label{section:approx-algorithm}

In this section, we design a random contraction algorithm to return $(1+\varepsilon)$-approximate quotients in $c$-\qbounded polymatroid families. The following is the main theorem of this section. 





\begin{theorem}\label{theorem:fptas-algorithm}
  Let $\cF$ be a $c$-\qbounded family of polymatroids and $\varepsilon>0$. There exists a randomized algorithm that takes a polymatroid $f\in \cF$ as input (via its evaluation oracle) and runs in polynomial time to return a collection of $n^{O(c)}$ quotients of $f$, where $n$ denotes the size of the ground set of $f$. Moreover, the collection returned by the algorithm contains a quotient $Q$ such that 
  $|Q|\leq (1+\varepsilon)\cdot q$ with probability $1/r^{O(c\cdot\log{(1/\varepsilon)})}$, where $r$ denotes the rank of $f$ and $q$ is the cardinality of a minimum non-empty quotient of $f$.
\end{theorem}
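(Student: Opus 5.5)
We will follow the approach of Ghaffari--Karger--Panigrahi adapted to polymatroids: a non-uniform random contraction where the sampling weights are tuned so that contraction is safe as long as the current rank is large relative to the current minimum quotient, interleaved with a brute-force base case when the rank becomes small. Concretely, the algorithm works as in Algorithm~\ref{algo:pseudocode-nonuniform}. It removes elements with $f(e)=0$ and handles ``large'' elements with $f(e)\ge r-t$ via brute force on $f_e$ plus a recursive call on $f_{\backslash e}$. Here $t=\Theta(c\log(1/\varepsilon))$ is the threshold, so that brute force on $f_e$ costs $n^{O(t)}$. Otherwise it contracts an element sampled with probability proportional to $f(e)$-dependent weights; the natural choice, mirroring hypergraph contraction, is weight $\lambda_e = 1 - \binom{r-f(e)}{t}/\binom{r}{t}$, i.e.\ the probability a random $t$-subset of rank units ``hits'' $e$. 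The polynomial bound on the number of output quotients follows exactly as in Lemma~\ref{lemma:non-uniform-runtime} with $c$ replaced by the threshold. Since we only want an approximate quotient, we will additionally output, at every level, any quotient found cheaply (e.g.\ singletons/brute-force sets), so the collection has size $n^{O(c)}$ after pruning to quotients of size at most $(1+\varepsilon)$ times the best found. This pruning is not needed for correctness.

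The success analysis will be by induction on $n+r$, tracking a fixed minimum quotient $Q$ with $|Q|=q$. The $c$-\qbounded property gives, in every minor along the way (by Lemma~\ref{lemma:quotient-property} parts 3 and 4, minimum quotient size never drops and $Q$ survives contractions of elements outside $Q$), the inequality $q\le c\sum_e f(e)/r$. Thus the average $f(e)$ is at least $qr/(cn)$: on average an element has large $f(e)$ relative to how few elements lie in $Q$. The key step is to bound the probability that the sampled element lies in $Q$. If contracting one element of $Q$ is fatal, a GKP-style calculation gives a survival probability per step of roughly $1 - \Theta(c\, f(e)/r)$ in ``rank units,'' yielding overall $r^{-O(c)}$ only while the mincut is exact. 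The approximation enters as follows. We do not need $Q$ itself to survive. If an element $e\in Q$ is contracted, the new polymatroid $f_e$ still has a quotient $Q-e$, of size $q-1$, since contraction preserves quotient-ness of sets avoiding $e$ (part 3 argument). Thus losing a few elements of $Q$ only helps the remaining quotient, and the danger is just that the optimum changes. We therefore track the weaker event that the current minor retains a quotient of size $\le(1+\varepsilon)q$ which, together with the brute-force sets, lifts back to a quotient of $f$ of size $\le(1+\varepsilon)q$.

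The main obstacle is making this ``bounded loss'' accounting rigorous. I would split the execution into phases in which the rank halves. At rank $r'$ with current min-quotient value $q'$, I would argue via the \qbounded inequality that the per-phase failure probability is $2^{-O(c)}$-ish times a factor that becomes $1$ once we allow error. Over $\log r$ phases this gives $r^{-O(c)}$. The $\log(1/\varepsilon)$ in the exponent would come from an amplification step, since only the last $\log(1/\varepsilon)$ halvings are the expensive ones requiring exact preservation. Specifically, while the rank $r'$ exceeds $t/\varepsilon$ we can tolerate contracting an $\varepsilon$-fraction of $Q$, and each time the rank halves we allow an additional geometrically small loss, summing to $\varepsilon q$. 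I expect the delicate part to be choosing $t$ and the weights so that the ratio $\sum_{e\in Q}\lambda_e/\sum_e\lambda_e$ is controlled by $q/(\text{average }f)$ using only the non-strong bound $q\le c\sum f(e)/r$. This is exactly where the strong version made the previous proof a clean telescoping of binomials. Here I would first try $\lambda_e=f(e)$ (probability proportional to rank drop) and show that the expected fraction of $Q$ destroyed per unit rank decrease is at most $c/r'$, giving a multiplicative loss $\prod(1+O(c/r'))$ over rank decrements. I would then repeat the algorithm independently and apply Markov/Chernoff to reach the stated probability $1/r^{O(c\log(1/\varepsilon))}$.
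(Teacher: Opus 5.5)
\textbf{There is a genuine gap in the mechanism you use to pay for the approximation.}

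Your claim that after contracting some $e\in Q$ the set $Q-e$ is still a quotient of $f_e$ is false. In $f_e$ the complement of $Q-e$ is $\cN\setminus Q$. Closedness there requires $f((\cN\setminus Q)+e+e')>f((\cN\setminus Q)+e)$ for every $e'\in Q-e$. Closedness of $\cN\setminus Q$ in $f$ gives this only without $e$ in the base set, and submodularity goes the wrong way.

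In fact the claim contradicts part 4 of Lemma~\ref{lemma:quotient-property}, which you also invoke. Every non-empty quotient of $f_e$ is a quotient of $f$ avoiding $e$, so $f_e$ has no non-empty quotient of size $q-1$. For a concrete case, take a triangle with edges $ab,bc,ca$ and $Q=\{ab,ca\}$. Contracting $ab$ makes $bc$ and $ca$ parallel, so $ca$ lies in the span of $\{bc\}$ and $\{ca\}$ is not a quotient.

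So contracting any element of $Q$ is fatal for $Q$, and there is nothing to ``lift back.'' Quotients found in contracted minors are already quotients of $f$ that avoid the contracted elements. Your phase-by-phase ``bounded loss'' accounting therefore has no content. The only way to spend the $\varepsilon q$ slack is to \emph{include} elements outside $Q$, i.e., delete them into the output. Your algorithm deletes only elements with $f(e)\ge r-t$, which is not where the difficulty lies.

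The real obstacle under the weak inequality $q\le c\sum_e f(e)/r$ is elements of $Q$ with $f(e)=\Theta(r/c)$. Suppose $Q$ consists of elements with $f(e)=r/(2c)$. The inequality then holds as soon as the other elements have total $f$-value $rq/(2c)$, and they may all have $f(e)=1$. With $\lambda_e=f(e)$, each step destroys $Q$ with probability about $1/2$ while lowering the rank by $1$. So no choice of contraction weights alone can work.

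The missing idea is the paper's guessing step. When some element has $f(e)\ge r/(2c)$, let $L=\{e: f(e)\ge r/(4c)\}$ and branch with probability $1/2$ each:
\begin{itemize}
\item Delete all of $L$ and add it to every returned set. When $|L\setminus Q|<\frac{\varepsilon}{1+\varepsilon}|L|$, this adds fewer than $\varepsilon|L\cap Q|$ extra elements.
\item Contract a uniform element of $L$. In the complementary case this avoids $Q$ with probability at least $\frac{\varepsilon}{1+\varepsilon}$, and it shrinks the rank by a factor $1-\frac{1}{4c}$.
\end{itemize}
After deleting $L$ every remaining element is light, so another guess can occur only after the rank halves. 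Hence there are at most $4c\log r$ guesses, each costing a factor $\frac{\varepsilon}{2(1+\varepsilon)}$. This is where $r^{-O(c\log(1/\varepsilon))}$ comes from, not from the last $\log(1/\varepsilon)$ halvings. When no element is heavy, uniform contraction succeeds because $\sum_{e\notin Q}f(e)\ge rq/(2c)$, which yields the $r^{-4c}$ factor.

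Two smaller issues. Your threshold $t=\Theta(c\log(1/\varepsilon))$ makes the brute force and the output collection $n^{O(c\log(1/\varepsilon))}$ rather than $n^{O(c)}$. Independent repetition is irrelevant to the single-run success probability that the theorem asserts.
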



We observe that running the Algorithm from Theorem \ref{theorem:fptas-algorithm} $r^{O(c\cdot\log (1/\varepsilon))}$ times and returning the minimum quotient produced among all runs gives a $(1+\varepsilon)$-approximation minimum quotient with high probability. We formalize it as follows.

\begin{corollary}\label{coro:qbounded-quotient-fptas-algo}
  Let $\cF$ be a $c$-\qbounded family of polymatroids and $\varepsilon>0$. There exists a randomized algorithm that takes a polymatroid $f\in \cF$ as input (via its evaluation oracle) and  
  runs in time $r^{O(c\log{(1/\varepsilon)})}n^{O(c)}$ to return a $(1+\varepsilon)$-approximation minimum quotient of $f$ with high probability, where $r$ and $n$ denote the rank and size of the ground set, respectively.
\end{corollary}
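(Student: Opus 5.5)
The plan is to boost the success probability of Theorem~\ref{theorem:fptas-algorithm} by independent repetition, in the same way Corollary~\ref{coro:sqbounded-min-quotient-algo} follows from Theorem~\ref{theorem:random-contraction-algorithm}. Let $p \ge 1/r^{C c\log(1/\varepsilon)}$ be the success probability guaranteed by Theorem~\ref{theorem:fptas-algorithm}, for a suitable absolute constant $C$.

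\textbf{Step 1: repeat and take the best.} We run the algorithm of Theorem~\ref{theorem:fptas-algorithm} independently $T := \lceil p^{-1}\ln n\rceil$ times, or $\lceil p^{-1}\cdot t\ln n\rceil$ times for failure probability $n^{-t}$. Each run returns a collection of $n^{O(c)}$ quotients. We discard the empty set from every collection, and, to be safe, every set that is not a non-empty quotient. Over all runs we output a smallest surviving set.

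\textbf{Step 2: verification is cheap.} Checking that a set $S$ is a non-empty quotient means checking that $\cN\setminus S$ is closed. This amounts to testing $f((\cN\setminus S)+e) > f(\cN\setminus S)$ for each $e\in S$, which takes $O(n)$ oracle calls. Filtering the $n^{O(c)}$ candidates of one run therefore costs $n^{O(c)}$ time.

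\textbf{Step 3: correctness.} In a single run, with probability at least $p$ the collection contains a non-empty quotient $Q$ with $|Q|\le(1+\varepsilon)q$. The runs are independent, so the probability that no run contains such a quotient is at most
\[
(1-p)^T \le e^{-pT} \le 1/n.
\]
Whenever some run does succeed, the minimum over all surviving sets is a non-empty quotient of size at most $(1+\varepsilon)q$. It is therefore a $(1+\varepsilon)$-approximate minimum quotient.

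\textbf{Step 4: running time.} Each run takes polynomial time; by the structure of the algorithm this is $n^{O(c)}$, matching Lemma~\ref{lemma:non-uniform-runtime}. The total time is
\[
T\cdot n^{O(c)} = r^{O(c\log(1/\varepsilon))}\cdot \ln n\cdot n^{O(c)} = r^{O(c\log(1/\varepsilon))}\, n^{O(c)},
\]
since the $\ln n$ factor is absorbed into $n^{O(c)}$.

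\textbf{Expected obstacle.} The only delicate point is bookkeeping. Theorem~\ref{theorem:fptas-algorithm} says ``polynomial time'' and we need this to be $n^{O(c)}$ with the $r$-dependence confined to the number of repetitions. If the per-run bound hid an $r$-factor, I would note that $r\le \sum_e f(e)$ is polynomially bounded only under oracle-size assumptions. In that case the stated $r^{O(c\log(1/\varepsilon))}$ factor still absorbs any polynomial in $r$.
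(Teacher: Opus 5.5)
Your proposal is correct and is essentially the paper's argument: the paper repeats the algorithm of Theorem~\ref{theorem:fptas-algorithm} $r^{O(c\log(1/\varepsilon))}$ times and returns the smallest quotient found, and your extra $\ln n$ boosting factor and quotient-filtering step are harmless refinements. One small fix: the per-run $n^{O(c)}$ bound you need is Lemma~\ref{lemma:gkp-runtime} (for Algorithm~\ref{algo:pseudocode-gkp}), not Lemma~\ref{lemma:non-uniform-runtime}, which concerns Algorithm~\ref{algo:pseudocode-nonuniform}.
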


Moreover, if we know the minimum size $q$ of a quotient of $f$, then by setting $\varepsilon<1/q$ and running the Algorithm from Theorem \ref{theorem:fptas-algorithm} $r^{O(c\log (1/\varepsilon))}$ times and returning the minimum quotient produced among all runs leads to an algorithm to find a minimum quotient with high probability with a run-time of $r^{O(c\log{q})}n^{O(c)}$. If we do not know the minimum size $q$ of a quotient of $f$, then we can find a min-sized quotient in time $r^{O(c\log{q})}n^{O(c)}$ as follows: run the previously mentioned algorithm for each guess $q'=1, 2, 3, \ldots, n$ of $q$ and terminate after the least value of $q'$ for which the algorithm returns a quotient of size at most $q'$. Thus, we have a quasi-polynomial time algorithm to find a minimum quotient of polymatroids in $c$-\qbounded families as summarized in the following corollary. 

\begin{corollary}\label{coro:qbounded-quotient-algo-quasi-polytime}
    Let $\cF$ be a $c$-\qbounded family of polymatroids. There exists a randomized algorithm that takes a polymatroid $f\in \cF$ as input (via its evaluation oracle) and runs in time $r^{O(c\log{q})}n^{O(c)}$ to return a minimum quotient of $f$ with high probability, where $r$, $n$, and $q$ denote the rank, size of the ground set, and size of a minimum quotient of $f$, respectively.
\end{corollary}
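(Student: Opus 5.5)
The plan is to derive this corollary from Theorem~\ref{theorem:fptas-algorithm}, using integrality of quotient sizes together with a guessing loop over $q$. The central observation is integrality. If $\varepsilon < 1/q$ and $Q'$ is a non-empty quotient with $|Q'| \le (1+\varepsilon) q$, then $|Q'| < q+1$. Since $|Q'|$ is an integer and $q$ is the minimum size of a non-empty quotient, this forces $|Q'| = q$. So a $(1+\varepsilon)$-approximate quotient with $\varepsilon = 1/(q+1)$ is already a minimum quotient.

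\textbf{The algorithm.} For each guess $q' = 1, 2, \ldots, n$ in increasing order, proceed as follows.
\begin{itemize}
\item Set $\varepsilon_{q'} := 1/(q'+1)$.
\item Run the algorithm of Theorem~\ref{theorem:fptas-algorithm} independently $K_{q'} := r^{C c \log(q'+1)} \cdot \lceil \log n \rceil$ times, for a suitable constant $C$.
\item Check every returned set for being a non-empty quotient, i.e., that its complement is closed. This takes $O(n)$ oracle calls per set.
\item If some verified quotient has size at most $q'$, output the smallest one found and stop.
\end{itemize}

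\textbf{Correctness.} Every output is a verified non-empty quotient, so it has size at least $q$. Hence for every guess $q' < q$ the stopping test never triggers, and the loop reaches $q' = q$.

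At $q' = q$, Theorem~\ref{theorem:fptas-algorithm} says a single run contains a quotient of size at most $(1+\varepsilon_q) q$ with probability at least $1/r^{O(c \log(1/\varepsilon_q))} = 1/r^{O(c\log(q+1))}$. With $K_q$ repetitions (choosing $C$ to dominate the hidden constant), the probability that no run succeeds is at most $(1 - r^{-O(c\log(q+1))})^{K_q} \le e^{-\log n}$, i.e., polynomially small. If some run succeeds, the integrality observation gives a quotient of size exactly $q \le q'$. The algorithm then stops and outputs a minimum quotient. Further amplification by a constant-factor increase in the exponent of $\log n$ gives any desired polynomial failure bound.

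\textbf{Running time.} Each run takes polynomial time and returns $n^{O(c)}$ sets, so verification costs $n^{O(c)}$ per run. The loop stops by iteration $q$ with high probability, so the total time is
\[
\sum_{q'=1}^{q} r^{O(c\log(q'+1))} \, n^{O(c)} \log n \;\le\; q \cdot r^{O(c\log(q+1))}\, n^{O(c)} \;=\; r^{O(c\log q)}\, n^{O(c)}.
\]
Here the factors $q \le n$ and $\log n$ are absorbed into $n^{O(c)}$. We read $\log q$ as $\log(q+1)$, or equivalently as $\max(1,\log q)$, so that the bound is meaningful when $q = 1$.

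The only step needing care is matching the repetition count to the success probability in Theorem~\ref{theorem:fptas-algorithm}. The hidden constant in its exponent is fixed in advance, and the number of repetitions must not depend on the unknown $q$; this is exactly why each guess $q'$ uses its own $K_{q'}$, and why early stopping is safe.
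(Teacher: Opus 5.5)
Your proposal is correct and follows essentially the same route as the paper. The paper also takes $\varepsilon<1/q$, so that integrality turns a $(1+\varepsilon)$-approximate quotient into a minimum one, and it amplifies Theorem~\ref{theorem:fptas-algorithm} with $r^{O(c\log(1/\varepsilon))}$ repetitions. It handles unknown $q$ the same way, trying guesses $q'=1,2,\ldots,n$ and stopping at the first one that yields a quotient of size at most $q'$. Your added details make explicit what the paper leaves implicit: the per-guess repetition counts, verifying returned sets so that early stopping is safe, and noting that the time bound holds with high probability because the loop stops by iteration $q$.
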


The algorithm from Theorem \ref{theorem:fptas-algorithm} also provides an upper bound on the number of minimum quotients as follows, which needs a slightly more careful analysis. We defer the proof of the following corollary to Appendix~\ref{appendix:gkp-counting-bound}.
\begin{corollary}\label{coro:qbounded-number-1}
    Let $\cF$ be a $c$-\qbounded family of polymatroids and $f\in \cF$. The number of minimum non-empty quotients of $f$ is $r^{O(c\log{q})}n^{O(c)}$, where $r$, $n$, and $q$ denote the rank, size of the ground set, and size of a minimum quotient of $f$, respectively.
\end{corollary}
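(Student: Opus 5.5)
We plan to prove Corollary~\ref{coro:qbounded-number-1} by the usual counting argument: design a (modified) run of the algorithm behind Theorem~\ref{theorem:fptas-algorithm} with $\varepsilon$ small enough that only minimum quotients qualify. We then show that every \emph{fixed} minimum non-empty quotient $Q$ appears in the output with probability at least $p:=1/r^{O(c\log q)}$, while the output always has size at most $N:=n^{O(c)}$. Linearity of expectation then gives
\[
p\cdot \#\{\text{minimum quotients}\}\le \mathbb{E}[\#\text{minimum quotients in output}]\le N,
\]
hence the number of minimum quotients is at most $N/p=r^{O(c\log q)}n^{O(c)}$.

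The key difference from Theorem~\ref{theorem:fptas-algorithm} is that the theorem only guarantees that \emph{some} near-minimum quotient survives. For counting we need the stronger statement that each fixed minimum quotient $Q$ survives. So we redo the analysis with $Q$ fixed, following the template of Lemma~\ref{lemma:non-uniform-success}.

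In the contraction phase, each step contracts an element sampled proportionally to weights depending on $f(e)$, and $Q$ survives if the sampled element is outside $Q$. By Lemma~\ref{lemma:quotient-property}(3), $Q$ then remains a minimum non-empty quotient of the contracted polymatroid, which stays in $\cF$ by minor-closedness. The $c$-\qbounded condition gives
\[
|Q|\le c\sum_e f(e)/r,
\]
which lower-bounds the weighted mass outside $Q$ in terms of rank reduction. In the GKP-style argument, a phase-based recursion that halves the rank costs a factor $r^{O(c)}$ per halving step of the ``slack''. Setting $\varepsilon<1/q$ forces the number of such phases to be $O(\log(1/\varepsilon))=O(\log q)$, giving total probability $r^{-O(c\log q)}$.

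Where the algorithm deletes heavy elements (the analogue of the $f(e)\ge r-c$ branch), we use Lemma~\ref{lemma:quotient-property}(2) when $e\in Q$. When $e\notin Q$, the brute-force enumeration over the rank-$O(c)$ contraction catches $Q$, and Lemma~\ref{lemma:quotient-counting} bounds that enumeration by $n^{O(c)}$. This also yields $N=n^{O(c)}$ via the same recurrence as Lemma~\ref{lemma:non-uniform-runtime}.

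The main obstacle is the step where the approximate analysis used the freedom to output any near-optimal quotient. In GKP-type schemes one typically argues that either few elements of the current optimum are destroyed, or one switches to a different quotient of size at most $(1+\varepsilon)q$. For a fixed $Q$ we cannot switch. Instead we will use that $\varepsilon<1/q$ makes every quotient of size at most $(1+\varepsilon)q$ have size exactly $q$. We then carry out the ``careful analysis'' by tracking $Q$ itself through each phase: the probability that contractions in a phase avoid $Q$ is bounded directly by the weighted-sampling inequality, replacing the existence-of-good-quotient argument. If this fails, the fallback is to charge probability over the at most $q$ elements of $Q$ lost per phase, still within $r^{O(c)}$ per phase.
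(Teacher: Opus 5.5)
Your outer frame matches the paper's: bound the count by the output size $n^{4c}$ of Algorithm~\ref{algo:pseudocode-gkp} (Lemma~\ref{lemma:gkp-runtime}) divided by the probability that a \emph{fixed} minimum quotient $Q$ survives. The gap is that you never carry out this survival analysis, and the pieces you sketch do not fit the algorithm.

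\textbf{The sketched pieces are wrong for this setting.} Algorithm~\ref{algo:pseudocode-gkp} has no weighted sampling and no brute force after contracting a heavy element. Heavy means $f(e)\ge r/(2c)$, so $f_e$ can still have rank about $(1-\frac{1}{2c})r$, and enumerating its quotients is not $n^{O(c)}$. Suppose instead you intend the structure of Algorithm~\ref{algo:pseudocode-nonuniform} (threshold $r-c$, weights $r-f(e)-c$). Its analysis rests on inequality~(\ref{ineq:c-sq-bounded}), i.e.\ $|Q|\le\sum_e (f(e)+c)/r$. A $c$-\qbounded family only gives $|Q|\le c\sum_e f(e)/r$, which is weaker by up to a factor $c$. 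Your accounting is also reversed. Nothing makes the number of phases $O(\log q)$: the only available bound is Lemma~\ref{lemma:gkp-depth}'s $4c\log r$ branching steps, independent of $q$. With that many steps, a loss of $r^{O(c)}$ per step would only give $r^{O(c^2\log r)}$. Finally, your fallback of tolerating up to $q$ lost elements of $Q$ per phase cannot work. Once an element of $Q$ is contracted, $Q$ can never appear in the output.

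\textbf{The missing idea: how a branching step treats a fixed $Q$.} Noting that $\varepsilon<1/q$ makes every $(1+\varepsilon)$-approximate quotient minimum only shows that \emph{some} minimum quotient survives, not $Q$. The paper instead splits on $L=L_1\cup L_2$.
\begin{itemize}
\item If $L\subseteq Q$, the deletion branch ($X=0$, probability $\frac12$) preserves $Q$ exactly. Either $L=Q$ and $\{L\}$ is returned, or $L\subsetneq Q$. In the latter case, minimality of $Q$ forces $f(\cN\setminus L)=r$. Then $Q\setminus L$ is a minimum quotient of $f_{\backslash L}$ by Lemma~\ref{lemma:quotient-property}(2), applied element by element.
\item If $L\not\subseteq Q$, then $|L\cap Q|\le q$ gives $|L\setminus Q|/|L|\ge \frac{1}{q+1}$. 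So the contraction branch ($X=1$) avoids $Q$ with probability at least $\frac{1}{2(q+1)}$, and Lemma~\ref{lemma:quotient-property}(3) applies.
\end{itemize}
Equivalently, in the approximate analysis with $\varepsilon=1/q$, the case-2 condition $|L\setminus Q|<|L|/(q+1)$ already forces $L\subseteq Q$. It is this fact, not the size argument, that lets you track $Q$ itself. The light case $L_1=\emptyset$ is handled exactly as in Lemma~\ref{lemma:gkp-success-probability}.

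Induction on $r+\ell$ then yields $P(r,\ell)\ge r^{-4c}(2(q+1))^{-\ell}$. With $\ell\le 4c\log r$ this is $r^{-4c}\cdot r^{-4c\log(2(q+1))}=r^{-O(c\log q)}$. The $\log q$ in the exponent comes from a per-step factor $q+1$ raised to $O(c\log r)$ steps, not from $O(\log q)$ phases.
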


The rest of the section is devoted to proving Theorem \ref{theorem:fptas-algorithm}. Our algorithm generalizes the random contraction algorithm due to Ghaffari, Karger, and Panigrahi \cite{GKP17} to find $(1+\varepsilon)$-approximate min-cuts in hedgegraphs. 
In Section~\ref{subsection:approx-algorithm}, we present the random contraction algorithm. In Section~\ref{subsection:approx-prob}, we analyze the success probability for polymatroids in $c$-\qbounded families and prove Theorem~\ref{theorem:fptas-algorithm}.

\subsection{Randomized Algorithm}\label{subsection:approx-algorithm}

We first recall the randomized algorithm of Ghaffari, Karger, and Panigrahi \cite{GKP17} to find $(1+\varepsilon)$-approximate min-cuts in hedgegraphs. Let $G=(V,E)$ be the given hedgegraph. The hedges are categorized into three groups according to the number of vertices contained: a hedge $e\in E$ is  \emph{large} if $|e|\geq |V|/2$, \emph{moderate} if $|V|/2> |e| \geq |V|/4$, and \emph{small} if $|e| < |V|/4$, where $|e|$ is the number of vertices in $e$. The algorithm iteratively performs the following steps until there are only two vertices in $G$: If none of the hedges in $E$ is large, then a hedge is chosen uniformly at random and contracted. Otherwise, the algorithm finds all hedges $e\in E$ that are large or moderate, i.e.,
$$L:=\{e\in E: |e|\geq \frac{|V|}{4}\}.$$
Then, the algorithm branches with equal probability to one of the following: either all these hedges are removed from $G$ and added to the output cut, or one of these hedges in $L$ is contracted uniformly at random.

The analysis in \cite{GKP17} showed that the uniform random contraction behaves well when none of the hedges is large. In the presence of large hedges, we compare $L$ to the minimum cut $C$. If the minimum cut contains at least a $\frac{1}{1+\varepsilon}$ fraction of hedges in $L$, i.e., $|L\cap C|\geq \frac{1}{1+\varepsilon}\cdot |L|$, we may add all hedges in $L$ to the output cut. Otherwise, $|L\cap C|<\frac{1}{1+\varepsilon}\cdot |L|$, which implies that the probability of a randomly chosen hedge from $L$ belonging to $C$ is quite low. We emphasize that there is no need to know how many large or moderate hedges are in the minimum cut $C$, since the algorithm always chooses the right branch with a probability of $\frac{1}{2}$.

We now extend this approach to $c$-\qbounded polymatroid families. Let $\cF$ be a $c$-\qbounded family of polymatroids and $f\in \cF$ be a polymatroid of rank $r$. The algorithm first removes all elements $e\in \cN$ with $f(e)=0$. If $r\leq 4c$, by Lemma~\ref{lemma:quotient-counting}, there are at most $n^{4c}$ non-empty quotients of $f$ and the algorithm does a brute-force search to find all minimum quotients. If $n\leq 4c$, there are at most $2^n\leq n^{4c}$ quotients of $f$ and a brute-force search is applied again. We now assume that $n>4c$ and $r>4c$. For every element $e\in \cN$, we call it \emph{heavy} if $f(e)\geq \frac{1}{2c}\cdot r$, \emph{moderate} if $\frac{1}{2c}\cdot r > f(e)\geq \frac{1}{4c}\cdot r$, and \emph{light} if $f(e)<\frac{1}{4c}\cdot r$. The algorithm checks whether there exists a heavy element. If none of the elements are heavy, then we recurse on the polymatroid obtained by contracting a uniformly random element. Otherwise, the algorithm finds the set $L$ of elements that are heavy or moderate and branches with equal probability to one of the following: either recurse on the polymatroid obtained by deleting $L$ to obtain a collection of quotients and add $L$ to all of them (and perform a clean-up of the resulting collection to ensure that all of them are quotients), or recurse on the polymatroid obtained by contracting a uniformly random element from $L$. The pseudocode is given in Algorithm~\ref{algo:pseudocode-gkp}. Lemmas~\ref{lemma:gkp-runtime} and \ref{lemma:gkp-success-probability} that are shown below together imply Theorem~\ref{theorem:fptas-algorithm}.

\begin{algorithm2e}[h]
\caption{Approximation Algorithm for \MinQuo}
\label{algo:pseudocode-gkp}
\SetKwInput{KwInput}{Input}                
\SetKwInput{KwOutput}{Output}              
\LinesNumbered

\DontPrintSemicolon
  
    \KwInput{evaluation oracle access to polymatroid $f:2^{\calN}\rightarrow \Z_{\geq 0}$ with $r=f(\cN)$. 
    }
    \KwOutput{a collection $\cQ$ of quotients of polymatroid $f$.}
    \SetKwFunction{Search}{Search}
    \SetKwProg{Fn}{Function}{:}{}
    
    \Fn{\Search{$f$}}{
        Remove all elements $e\in \cN$ with $f(e)=0$. \label{pseudocode-gkp-deleting}
        
        \If{$r\leq 4c$ or $n\leq 4c$}{
            Find the collection $\cQ$ of all minimum non-empty quotients by brute-force. \label{pseudocode-gkp-bruteforce}

            \KwRet $\cQ$.
        }

        $L_1\gets \{e\in \cN: f(e)\geq \frac{1}{2c}\cdot r\}$. \tcp*{heavy elements}
        
        $L_2\gets \{e\in \cN: \frac{1}{2c}\cdot r > f(e)\geq \frac{1}{4c}\cdot r\}$. \tcp*{moderate elements}

        \If{$L_1=\emptyset$}{ \label{pseudocode-gkp-if}
        
            Sample element $e\in \cN$ uniformly at random.

            \KwRet $\text{\Search}(f_e)$.
        }
        \Else{ \label{pseudocode-gkp-else}
        
            $L\gets L_1\cup L_2$.

            Sample $X\in \{0,1\}$ uniformly at random. \tcp*{branching step} \label{pseudocode-gkp-branching}

            \If{$X=0$}{ \label{pseudocode-gkp-branch1}

                \If{$f(\cN\setminus L)<r$}{
                
                    $\cQ\gets \{L\}$.  \tcp*{$L$ is a superset of quotient.} \label{pseudocode-gkp-singleton}
                }
                \Else{
                
                    $\cQ\gets \text{\Search}(f_{\backslash L})$.

                    Add all elements in $L$ into every set in $\cQ$.
                }
                
                \For{$Q\in \cQ$}{
                
                    $Q\gets \cN\setminus (\Span(\cN\setminus Q))$. \tcp*{This step ensures that $Q$ is a quotient.} \label{pseudocode-gkp-clean}
                }

                \KwRet $\cQ$.
            }

            \Else{ \label{pseudocode-gkp-branch2}
            
                Sample element $e\in L$ uniformly at random.

                \KwRet $\text{\Search}(f_e)$.
            }
        }
    }
\end{algorithm2e}

The following lemma bounds the run-time and the number of quotients returned by Algorithm~\ref{algo:pseudocode-gkp}.

\begin{lemma}\label{lemma:gkp-runtime}
    Let $\cF$ be a $c$-\qbounded family of polymatroids and $f\in \cF$. 
    Algorithm~\ref{algo:pseudocode-gkp} on input $f$ returns $n^{4c}$ quotients in $n^{O(c)}$ time, where $n$ denotes the size of the ground set of $f$.
\end{lemma}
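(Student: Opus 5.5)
The plan is to follow the template of Lemma~\ref{lemma:non-uniform-runtime}: induct on $n$, tracking both the number of returned sets and the running time.

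\emph{Structure of the recursion.} Algorithm~\ref{algo:pseudocode-gkp} never makes two recursive calls in one invocation. Every non-base branch either contracts a single element (on Line~\ref{pseudocode-gkp-if} or in the branch of Line~\ref{pseudocode-gkp-branch2}) or deletes the nonempty set $L$. Either way it makes exactly one recursive call on a polymatroid whose ground set has at most $n-1$ elements. Since $L_1\neq\emptyset$ in the else-branch, $L\neq\emptyset$ there. Contraction and deletion preserve membership in $\cF$ because $\cF$ is minor-closed, although the bound itself does not even need this. Also, the removal on Line~\ref{pseudocode-gkp-deleting} only shrinks $n$. Hence the recursion is a single path of depth at most $n$, and the returned collection is created only at the leaf.

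\emph{Counting the output.} The collection is produced either by the brute-force step or as the singleton $\{L\}$. Every later operation only transforms sets one by one: adding $L$ to each set, and the clean-up of Line~\ref{pseudocode-gkp-clean}. So the size of the collection is bounded by what the leaf produces.

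At a brute-force leaf, let $n'\le n$ be the current ground-set size and $r'$ the current rank. If $r'\le 4c$, Lemma~\ref{lemma:quotient-counting} gives at most $n'^{r'}\le n^{4c}$ nonempty quotients. If $n'\le 4c$, there are at most $2^{n'}\le 2^{4c}$ subsets. To bring this under $n^{4c}$, I would note that $2^{4c}\le n^{4c}$ whenever $n\ge 2$. The degenerate cases $n'\le1$, or $n\le 1$, give at most one quotient, so they also satisfy the bound. A singleton leaf contributes a single set.

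One point needs care. After Line~\ref{pseudocode-gkp-clean}, a set $Q$ becomes $\cN\setminus\Span(\cN\setminus Q)$. This set is a quotient of $f$ because the span of any set is closed, which follows from submodularity. This justifies calling the output a collection of quotients.

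\emph{Running time.} Each invocation does polynomially many oracle calls to classify elements, sample, and test whether $f(\cN\setminus L)<r$. The clean-up on each returned set costs $O(n)$ oracle calls per set, at each of at most $n$ levels. That gives $O(n^2\cdot n^{4c})$ in total.

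\emph{Main obstacle.} The only nontrivial cost is the brute force. I would implement it by enumerating all sets of at most $r'$ elements when $r'\le 4c$. The justification is that a nonempty quotient $Q$ is determined by a base of the closed set $\cN\setminus Q$; this is presumably how Lemma~\ref{lemma:quotient-counting} is proved, since such a base can be chosen with at most $r'$ elements, each of positive value. Each candidate's closure is computed with $O(n)$ oracle calls. When $n'\le 4c$, I would simply enumerate all $2^{n'}$ subsets. This takes $n^{O(c)}$ time.

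Summing over at most $n$ levels gives the recurrence $R(n)\le R(n-1)+\mathrm{poly}(n)$ plus one leaf cost of $n^{O(c)}$. Hence the total running time is $n^{O(c)}$.
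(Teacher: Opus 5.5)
Your proposal is correct and is essentially the paper's argument. The paper inducts on $n$ and bounds each branch by $\max\{1,(n-1)^{4c}\}<n^{4c}$, handles the brute-force base cases via Lemma~\ref{lemma:quotient-counting} and $2^n\le n^{4c}$, and bounds the running time by recursion depth at most $n$ times the $n^{4c}$ brute-force cost; this is exactly your single-path unrolling. Your version is slightly more careful than the paper's on three points it skips: the degenerate case $n\le 1$ (where $2^n\le n^{4c}$ fails), the cost of the clean-up step, and how the brute force is actually implemented.
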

\begin{proof}
    We first prove the upper bound on the number of quotients returned by induction on $n$. For the base case of $n\leq 4c$, the number of minimum quotients is at most $2^n\leq n^{4c}$ and Algorithm~\ref{algo:pseudocode-gkp} would use a brute-force search to find them in Step~\ref{pseudocode-gkp-bruteforce}.

    We now prove the induction step. If $r\leq 4c$, then by Lemma~\ref{lemma:quotient-counting}, there are at most $n^{4c}$ minimum non-empty quotients. Algorithm~\ref{algo:pseudocode-gkp} would use a brute-force search to find them in Step~\ref{pseudocode-gkp-bruteforce}. We now assume that $r>4c$. First, suppose there is no heavy element, i.e., $L_1=\emptyset$. The algorithm executes the if-clause in Step~\ref{pseudocode-gkp-if}. By induction hypothesis, the number of quotients returned is at most
    $$(n-1)^{4c}<n^{4c}.$$

    Next, assume that $L_1\neq \emptyset$. The algorithm executes the else-clause in Step~\ref{pseudocode-gkp-else} and branches to $X=0$ or $X=1$. For the case of $X=0$, the algorithm executes the if-clause in Step~\ref{pseudocode-gkp-branch1}; it either returns $\{L\}$ or it recurses on $f_{\backslash L}$ to obtain a collection $\cQ$, adds $L$ to each set in $\cQ$, and a subset of each such set that is a quotient. By induction hypothesis, the number of quotients returned is at most
    $$\max\{1, (n-1)^{4c}\}<n^{4c}.$$
    For the case of $X=1$, the algorithm executes the else-clause in Step~\ref{pseudocode-gkp-branch2}, where it recurses on $f_e$ for a uniform random element $e\in L$. By induction hypothesis, the number of quotients returned is at most
    $$(n-1)^{4c}<n^{4c}.$$

    We now analyze the runtime. At each recursive call to $\Search(f)$, the size of the ground set decreases by at least $1$, so the recursion depth is at most $n$. At each level, the brute-force search in Line~\ref{pseudocode-gkp-bruteforce} is triggered only when $r\leq 4c$ or $n\leq 4c$, which takes at most $n^{4c}$ time. Thus, the overall runtime is $n^{O(c)}$.
\end{proof}

\subsection{Success Probability Analysis}\label{subsection:approx-prob}

We now analyze the probability that a $(1+\varepsilon)$-approximation minimum quotient is contained in the collection returned by Algorithm~\ref{algo:pseudocode-gkp}. We first show that a single execution of \text{\Search}$(f)$ performs $O(c\cdot \log r)$ branching steps in Step~\ref{pseudocode-gkp-branching}.

\begin{lemma}\label{lemma:gkp-depth}
Let $\cF$ be a $c$-\qbounded family of polymatroids and $f\in \cF$ with $r$ being the rank of $f$.
    Then, a single execution of 
    Algorithm~\ref{algo:pseudocode-gkp} on input $f$ 
    performs at most $4c\cdot \log r$ branching steps in Step~\ref{pseudocode-gkp-branching}.
\end{lemma}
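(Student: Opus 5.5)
Each branching step reduces the rank by a constant factor depending only on $c$, and the rank never increases. With the factor $(1-\frac{1}{4c})$, at most $\log_{4c/(4c-1)} r \le 4c\ln r$ branching steps can happen before the rank drops to at most $4c$, which triggers the base case.

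\textbf{Monotonicity of the rank.} Along any execution path, every recursive call is on a minor $f_e$ or $f_{\backslash L}$ of the current polymatroid. Removing elements with $f(e)=0$ does not change the rank, because monotonicity and submodularity give $f(\cN\setminus Z)=f(\cN)$ for the set $Z$ of such elements. Contraction gives rank $r-f(e)\le r$, and deletion gives rank $f(\cN\setminus L)\le r$. So the rank is non-increasing along the path.

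\textbf{Effect of one branching step.} Consider a call with rank $r>4c$ in which the else-branch at Step~\ref{pseudocode-gkp-else} is taken, so $L_1\neq\emptyset$.
\begin{itemize}
\item If $X=1$, we contract some $e\in L=L_1\cup L_2$. Since $f(e)\ge r/(4c)$, the new rank is at most $(1-\frac{1}{4c})r$.
\item If $X=0$ and $f(\cN\setminus L)<r$, the algorithm returns and no further branching occurs.
\item If $X=0$ and $f(\cN\setminus L)=r$, we recurse on $f_{\backslash L}$, whose rank is still $r$. Here the rank does not drop, which is the main obstacle.
\end{itemize}

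\textbf{Handling the third case.} I would first show this case is not vacuous, and then show it cannot be followed immediately by another branching step at the same rank.

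For the first part, $c$-\qbounded{}ness should give a contradiction with the existence of a heavy element $e$ being ``forced'' into every quotient. I am not certain this route works.

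The cleaner route is the second part. After deleting $L$, every remaining element has $f(e)<r/(4c)$, and the rank is still $r$. So the next call has $L_1=\emptyset$ unless the rank has decreased by then; rank can only decrease, and the heavy threshold $r'/(2c)$ scales with the current rank $r'$. My plan is therefore to charge each deletion-type branching step to a subsequent rank decrease. Between two consecutive branching steps, either a contraction of an element in $L$ occurred, giving a $(1-\frac{1}{4c})$ factor drop, or a deletion occurred. After a deletion, the rank must fall below $\frac{1}{2}$ of its value before any element can become heavy again. The reason is that every remaining element has $f(e)<r/(4c)$; for such an element to be heavy at rank $r'$, we need $f(e)\ge r'/(2c)$, which forces $r'<r/2$, using that $f$-values only decrease under contraction by submodularity.

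Hence every branching step is followed, before the next branching step, by a rank reduction by a factor of at least $1-\frac{1}{4c}$. With at most $\log_{1/(1-1/(4c))} r\le 4c\ln r$ such reductions possible before the rank reaches at most $4c$, the bound on the number of branching steps follows.
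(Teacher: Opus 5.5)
Your proof is correct and follows the paper's proof. It uses the same two observations: contracting an element of $L$ multiplies the rank by at most $1-\frac{1}{4c}$, and after deleting $L$ only light elements remain, so the next branching step needs the rank to drop below $r/2$. The only difference is packaging: you bound the rank decrease between consecutive branching steps, whereas the paper inducts on $r$, and your version handles the same-rank call on $f_{\backslash L}$ a bit more transparently. Drop the speculative ``first part'' about showing the deletion case is vacuous: that case can genuinely occur, and your second route alone suffices.
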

\begin{proof}
    We prove by induction on $r$. For the base case of $r\leq 4c$, Algorithm~\ref{algo:pseudocode-gkp} will use a brute-force search and return the collection of minimum quotients. Hence, the lemma holds since there is no recursion.

    We now prove the induction step. If $n\leq 4c$, then the algorithm will again use a brute-force search in Step~\ref{pseudocode-gkp-bruteforce} and no recursion occurs. We may now assume that $n>4c$. If there is no heavy element in $\cN$, i.e., $L_1=\emptyset$, then Algorithm~\ref{algo:pseudocode-gkp} will execute the if-clause in Step~\ref{pseudocode-gkp-if}, where it contracts an element $e\in \cN$ and recurses on the minor $f_e$ of rank strictly smaller than $r$. By induction hypothesis, the execution of Algorithm \ref{algo:pseudocode-gkp} on input $f_e$ performs at most $4c\log{(r-1)}$ branching steps and hence, the execution of Algorithm \ref{algo:pseudocode-gkp} on input $f$ performs at most 
    $4c\cdot \log (r-1) < 4c\cdot \log r$ 
    branching steps. 
    For the case where $L_1$ is non-empty, the algorithm executes the else-clause in Step~\ref{pseudocode-gkp-else} and branches to $X=0$ or $X=1$. We consider the two branches separately:
    \begin{enumerate}
        \item $X=0$: the algorithm executes the if-clause in Step~\ref{pseudocode-gkp-branch1}, where it sets $\cQ=\{L\}$ or recurses on $f_{\backslash L}$. We observe that every remaining element $e\in \cN\setminus L$ is light, i.e., $f(e)<\frac{1}{4c}\cdot r$. This implies that the next branching step can be performed only when the rank has decreased to $r'<r/2$, at which point there may exist heavy elements. According to the inductive hypothesis, the total number of branching steps performed is at most
        $$1 + 4c\cdot \log r' < 1 + 4c\cdot \log (r/2) \leq 4c\cdot \log r.$$

        \item $X=1$: the algorithm executes the else-clause in Step~\ref{pseudocode-gkp-branch2}, where it recurses on $f_e$ for a random element $e\in L$. Since $f(e)\geq \frac{1}{4c}\cdot r$, the rank of $f_e$ is at most $r'\leq r - \frac{1}{4c}\cdot r$. By the inductive hypothesis, the total number of branching steps performed is at most
        $$1 + 4c\cdot \log r' \leq 1 + 4c\cdot \log \left(r-\frac{r}{4c}\right)<4c\cdot \log r,$$
        where the last inequality holds since $1+4x\cdot \log\left(1-\frac{1}{4x}\right)<0$ for every $x>1$.
    \end{enumerate}
\end{proof}


We now analyze the probability of a $(1+\varepsilon)$-approximation minimum quotient being in the collection returned by the algorithm.

\begin{lemma}\label{lemma:gkp-success-probability}
  Let $\cF$ be a $c$-\qbounded family of polymatroids and $f\in \cF$ with $r$ being the rank of $f$ and $q$ being the cardinality of a minimum non-empty quotient. For every constant $\varepsilon>0$, the collection of quotients returned by 
    Algorithm~\ref{algo:pseudocode-gkp} on input $f$ contains a quotient of size at most $(1+\varepsilon)\cdot q$ with probability at least $1/r^{O(c\log{(1/\varepsilon)})}$. 
  \end{lemma}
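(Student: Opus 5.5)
The plan is to prove, by induction on $n+r$, a statement that controls a fixed \emph{target} quotient rather than only a minimum one. Fix a minimum non-empty quotient $Q^*$ of $f$. I will track a pair (current polymatroid $g$, current target quotient $Q$ of $g$), together with a set $A$ of elements already committed to the output. The invariant I want is
\[
|A|+|Q|\le q+\varepsilon\,|A\cap Q^*|, \qquad Q\subseteq Q^*\setminus A,
\]
and that $Q\cup A$, after the clean-up in Step~\ref{pseudocode-gkp-clean}, yields a quotient of $f$ of size at most $|Q\cup A|$. Since $\Span$ only shrinks the complement's complement, the clean-up never increases size.

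I will use two structural facts to keep the invariant along the recursion. First, if $e\notin Q$, then $Q$ is still a quotient of $g_e$: $\cN\setminus Q$ is closed and contains $e$, so it stays closed after contracting $e$. Second, by Lemma~\ref{lemma:quotient-property}(1), $Q\setminus L$ is a quotient of $g_{\backslash L}$. Moreover, by Lemma~\ref{lemma:quotient-property}(4), the minimum quotient size of every contracted minor is at least $q$; deletions only remove elements of $Q^*$ that were put into $A$. Hence the target always satisfies $|Q|\le(1+\varepsilon)q(g)\le 2q(g)$. Applying $c$-\qbounded to the minor $g\in\cF$ then gives $|Q|\le 2c\sum_e g(e)/g(\cN)$.

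The proof then splits by the step the algorithm takes.

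\textbf{(a) No heavy element.} A uniform contraction avoids $Q$ with probability at least $1-|Q|/n\ge 1-2c\,\bar f/r$, where $\bar f$ is the average singleton value. Since every $f(e)<r/(2c)$, Karger-style bookkeeping with the drop $r\to r-f(e)$ should give $1-2c\,\bar f/r\ge \mathbb{E}_e\big[((r-f(e))/r)^{O(c)}\big]$. Telescoping along the path of non-branching contractions then contributes $r^{-O(c)}$ overall. I expect this to be the main obstacle, because the target $Q$ is no longer guaranteed to be minimum and the heavy/moderate thresholds must absorb the factor $2$. I will first try proving $1-x\ge(1-x/(2c))^{4c}$-type inequalities, using that each individual ratio $f(e)/r$ is below $1/(2c)$.

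\textbf{(b) Heavy element present, so we branch.} There are two sub-cases according to how much of $L$ lies in $Q$.

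If $|L\cap Q|\ge |L|/(1+\varepsilon)$, take the branch $X=0$, which happens with probability $1/2$. We set $A\gets A\cup L$ and $Q\gets Q\setminus L$. The extra cost is $|L\setminus Q|\le\varepsilon|L\cap Q|$, so the invariant holds. If $f(\cN\setminus L)<r$, the set $L$ itself already gives a quotient within budget.

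Otherwise, take the branch $X=1$, again with probability $1/2$. The uniformly chosen $e\in L$ avoids $Q$ with probability greater than $1-1/(1+\varepsilon)=\varepsilon/(1+\varepsilon)\ge\varepsilon/2$. Since $\sum_{A}|A\cap Q^*|\le q$, the final quotient has size at most $(1+\varepsilon)q$.

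\textbf{Combining.} By Lemma~\ref{lemma:gkp-depth}, along any execution there are at most $4c\log r$ branching steps. Each branching step costs a factor at least $\tfrac12\cdot\tfrac{\varepsilon}{2}$ in the worst case. This gives $(\varepsilon/4)^{4c\log r}=r^{-O(c\log(1/\varepsilon))}$. Multiplying by the $r^{-O(c)}$ from the uniform steps yields the claimed bound $1/r^{O(c\log(1/\varepsilon))}$. In the base case ($r\le4c$ or $n\le4c$), brute force returns all minimum quotients of the current minor. To make this fit, I will strengthen the base case to return all quotients of size at most the target, which costs only $n^{O(c)}$ by Lemma~\ref{lemma:quotient-counting}. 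Alternatively, I will note that the minimum quotient of the minor plus $A$ is within budget.
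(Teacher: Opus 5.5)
Your accounting for the branching steps is fine, but the no-heavy-element case relies on the claim that the tracked target satisfies $|Q|\le(1+\varepsilon)\,q(g)$ for the current minor $g$. That claim is false.

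\textbf{Why the claim fails.} Lemma~\ref{lemma:quotient-property}(4) only controls contractions. After a deletion step $g\mapsto g_{\backslash L}$ you only have $q(g_{\backslash L})\ge q(g)-|L|$, and deleting $L$ can create quotients much smaller than your target $Q\setminus L$. Nothing rules out, say, a second minimum quotient $Q_2\supseteq L$ of $g$ with $|Q_2\setminus L|=1$. Then $q(g_{\backslash L})=1$ by Lemma~\ref{lemma:quotient-property}(1). Meanwhile $|Q\setminus L|=1+|L\setminus Q|$, which your branching rule allows to be about $\frac{\varepsilon}{1+\varepsilon}|L|$.

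\textbf{Consequences for case (a).} Once the target is far from minimum, the bound $1-|Q|/n\ge 1-2c\bar f/r$ has no justification. A uniform contraction may hit $Q$ with large probability; some other near-minimum quotient may survive, but your invariant does not track it. Even granting $|Q|\le 2q(g)$, the factor $2$ cannot be absorbed by the algorithm's fixed thresholds. The Karger-type step needs $\sum_{e\notin Q}g(e)=\Omega(r|Q|/c)$. From $|Q|\le 2c\sum_e g(e)/r$ and $g(e)<r/(2c)$ on $Q$ you only get $\sum_{e\notin Q}g(e)>0$. The estimate $\sum_{e\notin Q}f(e)\ge\frac{r}{c}|Q|-\frac{r}{2c}|Q|$ needs the quotient-bounded inequality with constant exactly $c$. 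You also cannot change thresholds or enlarge the brute-force base case, since the lemma concerns Algorithm~\ref{algo:pseudocode-gkp} as written.

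\textbf{The missing idea: re-anchor at every call.} Do not carry a fixed target from the original $f$. Let $P(r,\ell)$ be the probability that the output on the \emph{current} polymatroid contains a $(1+\varepsilon)$-approximate minimum quotient \emph{of that polymatroid}, and in each call let $Q$ be a minimum quotient of the current $f$.
\begin{itemize}
\item In the $X=0$ branch, $Q\setminus L$ is a quotient of $f_{\backslash L}$, so $q(f_{\backslash L})\le|Q\setminus L|$. A recursively found $Q'$ with $|Q'|\le(1+\varepsilon)q(f_{\backslash L})$ then gives $|Q'|+|L|\le(1+\varepsilon)|Q\setminus L|+|L\cap Q|+|L\setminus Q|\le(1+\varepsilon)|Q|$, using $|L\setminus Q|<\varepsilon|L\cap Q|$. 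The budget composes locally, so no global invariant is needed.
\item In the no-heavy case, $Q$ is now a genuine minimum quotient, so $\sum_{e\notin Q}f(e)\ge\frac{r}{2c}|Q|$. Convexity together with $(1-\frac{x}{2c})^{4c}\le 1-x$ on $[0,\frac12)$ then preserves the potential $r^{-4c}$.
\end{itemize}
Your remark that ``the minimum quotient of the minor plus $A$ is within budget'' is exactly this idea. It must be applied at every recursive call, not only at the leaves.
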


\begin{proof}
  For every integer $r\geq 1$ and $\ell\geq 0$, let $P(r, \ell)$ be the infimum over all $f\in \cF$ with rank $r$ on which \text{\Search}$(f)$ performs at most $\ell$ branching steps, of the probability that the collection of quotients returned by \text{\Search}$(f)$ contains a quotient $Q'$ with $|Q'|\leq (1+\varepsilon)\cdot q$. By Lemma~\ref{lemma:gkp-depth}, we may assume that $\ell \leq 4c\cdot \log r$. We now prove that
    \begin{align}\label{inequality:gkp-analysis-probability}
        P(r, \ell)\geq \left(\frac{1}{r}\right)^{4c} \cdot \left(\frac{\varepsilon}{2(1+\varepsilon)}\right)^{\ell}    
    \end{align}
    by induction on $r+\ell$. We may also assume that every element $e\in \cN$ has $f(e)\geq 1$ since all elements $e\in \cN$ with $f(e)=0$ do not belong to any non-empty quotient, and such elements are removed by the algorithm in Step~\ref{pseudocode-gkp-deleting}.
    
    For the base case, we consider $r+\ell\leq 4c$. Algorithm~\ref{algo:pseudocode-gkp} performs a brute-force search to find minimum quotients of $f$ in Step~\ref{pseudocode-gkp-bruteforce} since $r\leq 4c$. Consequently, the collection returned by Algorithm~\ref{algo:pseudocode-gkp} contains a quotient of size $q$
 with probability $1$.

    We now prove the induction step where $r+\ell > 4c$. If $r\leq 4c$, Algorithm~\ref{algo:pseudocode-gkp} would terminate in Step~\ref{pseudocode-gkp-bruteforce} via brute-force search with no branching steps, contradicting $r+\ell > 4c$. Thus, $r>4c$. Moreover, we recall that $\cF$ is a $c$-\qbounded family of polymatroids and $f\in \cF$. This implies that
    \begin{align}
        q \leq c\cdot \frac{\sum_{e\in \cN}f(e)}{r}. \label{inequality:gkp-quotient-bound}
    \end{align}

    Let $Q$ be an arbitrary minimum non-empty quotient of $f$ with cardinality $q$.
    We consider the following three cases:
    \begin{enumerate}
    \item $L_1=\emptyset$: In this case, all elements are light or moderate, i.e., $f(e)< \frac{r}{2c}$.  We have
        \begin{align}
            \sum_{e \in \cN\setminus Q} f(e) &= \sum_{e \in \cN} f(e) - \sum_{e\in Q} f(e) \notag \\
            &\geq \sum_{e\in \cN} f(e) - \frac{r}{2c}\cdot |Q| \ \ \text{(since $f(e)<\frac{r}{2c}$)} \notag \\
            &\geq \frac{r}{c}\cdot |Q| - \frac{r}{2c}\cdot |Q| \ \ \text{(by inequality~(\ref{inequality:gkp-quotient-bound}))} \notag \\
            &= \frac{r}{2c}\cdot q. \label{inequality:gkp-analysis-smallcase}
        \end{align}
        This further shows that
        $$\frac{r}{2c}\cdot |Q| \leq \sum_{e\in \cN\setminus Q}f(e)< \frac{r}{2c}\cdot |\cN\setminus Q|,$$
        which implies that $|Q|< |\cN\setminus Q|$ and $|Q|< |\cN|/2$. We observe that Algorithm~\ref{algo:pseudocode-gkp} will execute the if-clause in Step~\ref{pseudocode-gkp-if}. For each $e\not\in Q$, by Lemma~\ref{lemma:quotient-property}, $Q$ is still a minimum non-empty quotient in the contracted polymatroid $f_e$, and moreover, the rank of the contracted polymatroid $f_e$ is $r-f(e)$ and hence, a quotient $Q'$ with $|Q'|\leq (1+\varepsilon)\cdot |Q|$ is in the collection returned by the algorithm executed on $f_e$ with probability at least $P(r-f(e),\ell)$. If the sampled element $e$ is not from $Q$ and moreover, a quotient $Q'$ with $|Q'|\leq (1+\varepsilon)\cdot |Q|$ is in the collection returned by the algorithm when executed on the contracted polymatroid $f_e$, then $Q'$ would also be contained in the collection returned by the algorithm. Therefore, 
        $$\begin{aligned}
            P(r, \ell) &\geq\frac{1}{|\cN|}\cdot \sum_{e\in \cN\setminus Q} P(r-f(e), \ell) \\
            &\geq \frac{1}{|\cN|} \cdot \sum_{e\in \cN\setminus Q} \frac{1}{(r-f(e))^{4c}} \cdot \left(\frac{\varepsilon}{2(1+\varepsilon)}\right)^{\ell} \ \ \text{(by induction hypothesis)} \\
            &\geq \frac{1}{|\cN|}\cdot \frac{|\cN\setminus Q|}{\left(r-\frac{\sum_{e\in \cN\setminus Q}f(e)}{|\cN\setminus Q|}\right)^{4c}}\cdot \left(\frac{\varepsilon}{2(1+\varepsilon)}\right)^{\ell} \ \ \text{(by convexity of the function $x^{-4c}$)} \\
            &\geq \frac{1}{|\cN|}\cdot \frac{|\cN\setminus Q|}{\left(r-\frac{r}{2c}\cdot \frac{|Q|}{|\cN\setminus Q|}\right)^{4c}}\cdot \left(\frac{\varepsilon}{2(1+\varepsilon)}\right)^{\ell} \ \ \text{(by inequality~(\ref{inequality:gkp-analysis-smallcase}))} \\
            &\geq \frac{1}{|\cN|}\cdot \frac{|\cN\setminus Q|}{\left(r-\frac{r}{2c}\cdot \frac{|Q|}{|\cN|}\right)^{4c}}\cdot \left(\frac{\varepsilon}{2(1+\varepsilon)}\right)^{\ell} \\
            &= \frac{1}{r^{4c}}\cdot \left(\frac{\varepsilon}{2(1+\varepsilon)}\right)^{\ell} \cdot \frac{1-\frac{|Q|}{|\cN|}}{\left(1-\frac{1}{2c}\cdot \frac{|Q|}{|\cN|}\right)^{4c}} \\
            &\geq \frac{1}{r^{4c}}\cdot \left(\frac{\varepsilon}{2(1+\varepsilon)}\right)^{\ell},
        \end{aligned}$$
        where the last inequality is because $\frac{|Q|}{|\cN|}\leq \frac{1}{2}$ and $(1-\frac{1}{2c}\cdot x)^{4c}\leq 1-x$ for every $x\in [0, 1/2)$.
        
        \item $L_1\neq \emptyset$ and $|L\setminus Q|< \frac{\varepsilon}{1+\varepsilon}\cdot |L|$: In this case, Algorithm~\ref{algo:pseudocode-gkp} branches to $X=0$ with probability $\frac{1}{2}$. Since $|L\setminus Q|<\frac{\varepsilon}{1+\varepsilon}\cdot |L|$, we have that $|L\setminus Q|<\varepsilon\cdot |L\cap Q|$ and $|L|<(1+\varepsilon)|Q|$. Suppose $Q$ is a subset of $L$ or $f(\cN\setminus L)<r$. Then,  Algorithm~\ref{algo:pseudocode-gkp} would execute Step~\ref{pseudocode-gkp-singleton} and perform a clean-up in Step~\ref{pseudocode-gkp-clean}. Since $|L\setminus Q|<\frac{\varepsilon}{1+\varepsilon}\cdot |L|$ and $|L|<(1+\varepsilon)\cdot |Q|$, there is a quotient of size at most $(1+\varepsilon)\cdot |Q|$ in the collection returned by the algorithm. Next, suppose $Q$ is not a subset of $L$ and $f(\cN\setminus L)=r$. This implies that $|Q\setminus L|>0$. Algorithm~\ref{algo:pseudocode-gkp} would recurse on polymatroid $f_{\backslash L}$ to obtain a collection of quotients of $f_{\backslash L}$ and add all elements in $L$ to every quotient in the collection. By Lemma~\ref{lemma:quotient-property} part (1), $Q\setminus L$ is a non-empty quotient of $f_{\backslash L}$.
          This implies that the probability that a quotient $Q'$ with $|Q'|\leq (1+\varepsilon)\cdot |Q\setminus L|$ is in the collection returned by the algorithm when executed on polymatroid $f_{\backslash L}$ is at least $P(r, \ell-1)$. Moreover, if such a quotient $Q'$ is contained in the collection returned by the recursive call, then a quotient of size at most $|Q'|+|L|$ is in the output. We bound its size as follows:
        \begin{align*}
            |Q'|+|L|
            &\leq (1+\varepsilon)\cdot |Q\setminus L|+|L|\\
            &=|Q\setminus L|+|L\cap Q|+\varepsilon|Q\setminus L|+|L\setminus Q|\\
            &=|Q|+\varepsilon|Q\setminus L | + |L\setminus Q|\\
            &\le |Q|+\varepsilon|Q\setminus L | + \varepsilon|L\cap Q|\\
            &= (1+\varepsilon)\cdot |Q|.
        \end{align*}
Therefore,
        $$\begin{aligned}
            P(r, \ell) &\geq \frac{1}{2}\cdot P(r, \ell -1) \\
            &\geq \frac{1}{2}\cdot \left(\frac{1}{r}\right)^{4c} \cdot \left(\frac{\varepsilon}{2(1+\varepsilon)}\right)^{\ell-1} \ \ \text{(by induction hypothesis)}\\
            &\geq \left(\frac{1}{r}\right)^{4c} \cdot \left(\frac{\varepsilon}{2(1+\varepsilon)}\right)^{\ell}.
        \end{aligned}$$

        \item $L_1\neq \emptyset$ and $|L\setminus Q|\geq \frac{\varepsilon}{1+\varepsilon}\cdot |L|$: In this case, Algorithm~\ref{algo:pseudocode-gkp} branches to $X=1$ with probability $\frac{1}{2}$. Algorithm~\ref{algo:pseudocode-gkp} contracts an element $e\in L$, which is sampled uniformly at random. Thus, the probability that $e\not\in Q$ equals $\frac{|L\setminus Q|}{|L|}$. Since $e$ is heavy or moderate, $f(e)\geq \frac{r}{4c}$, which implies that contracting element $e$ reduces the rank by at least $\frac{r}{4c}$. By Lemma \ref{lemma:quotient-property} part (3), $Q$ is still a minimum non-empty quotient of $f_e$. Hence, a quotient $Q'$ with $|Q'|\leq (1+\varepsilon)\cdot |Q|$ is in the collection returned by the algorithm executed on $f_e$ with probability at least $P(r-\frac{r}{4c},\ell)$. If the sampled element $e$ is not from $Q$ and moreover a quotient $Q'$ with $|Q'|\leq (1+\varepsilon)\cdot |Q|$ is in the collection returned by the algorithm when executed on the contracted polymatroid $f_e$, then $Q'$ would be contained in the collection returned by the algorithm. Therefore,
        $$\begin{aligned}
            P(r, \ell) & \geq \frac{1}{2}\cdot \frac{|L\setminus Q|}{|L|} \cdot P\left(r-\frac{r}{4c}, \ell-1\right) \\
            &\geq \frac{1}{2}\cdot \frac{|L\setminus Q|}{|L|}\cdot \left(\frac{1}{\left(1-\frac{1}{4c}\right)r}\right)^{4c} \cdot \left(\frac{\varepsilon}{2(1+\varepsilon)}\right)^{\ell-1} \ \ \text{(by induction hypothesis)}\\
            &\geq \frac{1}{2}\cdot \frac{\varepsilon}{1+\varepsilon}\cdot \left(\frac{1}{\left(1-\frac{1}{4c}\right)r}\right)^{4c} \cdot \left(\frac{\varepsilon}{2(1+\varepsilon)}\right)^{\ell-1} \ \ \text{(since $|L\setminus Q|\geq \frac{\varepsilon}{1+\varepsilon}\cdot |L|$)}\\
            &\geq \left(\frac{1}{r}\right)^{4c} \cdot \left(\frac{\varepsilon}{2(1+\varepsilon)}\right)^{\ell}.
        \end{aligned}$$
    \end{enumerate}
    Thus, inequality~(\ref{inequality:gkp-analysis-probability}) holds for every polymatroid $f\in \cF$. This implies that the collection of quotients returned by \text{\Search}$(f)$ contains a $(1+\varepsilon)$-approximation minimum non-empty quotient with probability at least
    $$P(r, \ell) \geq \left(\frac{1}{r}\right)^{4c} \cdot \left(\frac{\varepsilon}{2(1+\varepsilon)}\right)^{\ell}\geq \left(\frac{1}{r}\right)^{4c} \cdot \left(\frac{\varepsilon}{2(1+\varepsilon)}\right)^{4c\log r}=\left(\frac{1}{r}\right)^{O(c\log(1/\varepsilon))}.$$

\end{proof}

%% file: parallel-random-contraction.tex
\section{FPT of \MinQuo in $c$-\qbounded polymatroids}\label{section:parallel-contraction}
In this section, we design a fixed-parameter tractable algorithm for \MinQuo in $c$-\qbounded polymatroid families when parameterized by the solution size. The following is the main theorem of this section. 
\begin{theorem}\label{theorem:fpt-algorithm}
  Let $\cF$ be a $c$-\qbounded family of polymatroids. There exists a randomized algorithm that takes a polymatroid $f\in \cF$ as input (via its evaluation oracle) and runs in polynomial time to return a collection of $n^{O(c)}$ quotients of $f$, where $n$ denotes the size of the ground set of $f$. Moreover, for every fixed minimum non-empty quotient $Q$, the collection of quotients returned by the algorithm contains $Q$ with probability $\frac{1}{2^q \cdot r^{O(c\log c)}}$, where $r$ and $q$ denote the rank of the ground set and size of a minimum quotient of $f$, respectively.
\end{theorem}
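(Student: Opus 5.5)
We follow the approach of Fomin et al.\ and modify Algorithm~\ref{algo:pseudocode-gkp}. Instead of the two-way branch on the whole set $L$, we handle heavy and moderate elements one at a time, which we call parallel contraction. The recursion has the same shape. First delete elements with $f(e)=0$. If $r\le 4c$ or $n\le 4c$, enumerate all minimum quotients by brute force; Lemma~\ref{lemma:quotient-counting} bounds their number by $n^{4c}$. If there is no heavy element, contract a uniformly random element. Otherwise take some $e$ with $f(e)\ge r/(4c)$ and branch with probability $1/2$ each:
\begin{itemize}
\item[(a)] put $e$ into the solution and recurse on $f_{\backslash e}$, then clean up with $Q\gets \cN\setminus\Span(\cN\setminus Q)$ as in Step~\ref{pseudocode-gkp-clean};
\item[(b)] contract $e$ and recurse on $f_e$.
\end{itemize}
The bounds of $n^{O(c)}$ on running time and on the number of returned quotients follow as in Lemma~\ref{lemma:gkp-runtime}. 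Each call reduces the ground set by one element, and only the base case performs enumeration.

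For the success probability we fix a minimum quotient $Q$ and prove by induction on $n+r$ a bound of the form
\[
P\ \ge\ 2^{-a}\cdot 2^{-b}\cdot r^{-O(c\log c)},
\]
where $a$ is the number of branch-(a) steps and $b$ the number of branch-(b) steps along the correct path.

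The light case is handled exactly as in Case 1 of Lemma~\ref{lemma:gkp-success-probability}. There all $f(e)<r/(2c)$, and inequality~(\ref{inequality:gkp-quotient-bound}) together with convexity gives a factor $\ge 1$ against the target $r^{-4c}$. So uniform contraction costs nothing in the exponent of $r$.

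In the heavy case, if $e\in Q$ we take branch (a). By Lemma~\ref{lemma:quotient-property}(2), $Q-e$ is a minimum quotient of $f_{\backslash e}$, or $Q=\{e\}$ is found after cleanup. This costs a factor $1/2$, and it happens at most $q$ times in total, since each occurrence consumes one element of $Q$. That accounts for the $2^{-q}$ factor. If $e\notin Q$ we take branch (b). By Lemma~\ref{lemma:quotient-property}(3), $Q$ remains a minimum quotient of $f_e$, and the rank drops by a factor $(1-1/(4c))$. Hence branch (b) can happen at most $O(c\log r)$ times.

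The main obstacle is that $2^{-O(c\log r)}=r^{-O(c)}$ combined naively with the light-phase bound gives only $r^{-O(c)}$. This is actually acceptable, since $r^{-O(c)}\ge r^{-O(c\log c)}$. But the bookkeeping must make sure the light-phase potential $r^{-4c}$ composes with the multiplicative rank drops. The danger is that the rank drop in branch (b) is only the bare factor $(1-1/(4c))$ while the target exponent is $4c$. We then have
\[
\Bigl(\bigl(1-\tfrac{1}{4c}\bigr)r\Bigr)^{-4c}\ \ge\ e\cdot r^{-4c},
\]
and the factor $e$ only partially offsets the branching factor $1/2$ ($e/2\approx 1.36>1$, so it suffices in this form). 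To get slack robustly, I would instead use the target
\[
r^{-\gamma}\,2^{-a},\qquad \gamma=C\,c\log c .
\]
Then $(1-1/(4c))^{-\gamma}\ge 2$ absorbs the $1/2$ of branch (b) entirely. The exponent $\gamma$ still satisfies the convexity inequality in the light case, because $(1-x/(2c))^{\gamma}\le 1-x$ holds for $x\le 1/2$ whenever $\gamma\ge 4c$. This explains the $r^{O(c\log c)}$ in the statement.

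The delicate point I expect to need the most care is the threshold choice $r/(4c)$ versus $r/(2c)$. The case split must keep $|Q|\le|\cN|/2$ in the light case, so that the $(1-x)$ numerator in Case 1 remains valid. I would verify this by repeating inequality~(\ref{inequality:gkp-analysis-smallcase}) with the heavy threshold.
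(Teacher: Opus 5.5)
Your route is correct up to one small fix, and it differs genuinely from the paper's.

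\textbf{The paper's route.} It follows Fomin et al. On a heavy element ($f(e)\ge r/(2c)$) it branches with biased probabilities $q/(\lfloor 2c\log r\rfloor+q)$ and $\lfloor 2c\log r\rfloor/(\lfloor 2c\log r\rfloor+q)$. In the all-light case it contracts a random set $A$ that contains each element independently with probability $1/q$. That step needs $q$ as input. It relies on Lemma~\ref{lemma:fpt-random-sampling}, a permutation-coupling argument showing that $A\cap Q=\emptyset$ and $f(A)\ge r/(8c)$ with probability at least $1/(32c)$. Paying $1/(32c)$ in each of $O(c\log r)$ such rounds is where $r^{O(c\log c)}$ comes from. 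The resulting bound is $r^{-80c\log(c+1)}\binom{\lfloor 2c\log r\rfloor+q}{q}^{-1}$.

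\textbf{Your route.} You keep the single-element uniform contraction of Algorithm~\ref{algo:pseudocode-gkp} in the light phase, whose Case~1 analysis loses nothing against $r^{-4c}$. You pay $1/2$ per heavy element. State the induction hypothesis in terms of the current state, $P(f,Q)\ge r^{-4c}2^{-|Q|}$, rather than via path counts $a,b$. Then all three cases close directly.
\begin{itemize}
\item The light case follows from Case~1.
\item The case $e\in Q$ follows from Lemma~\ref{lemma:quotient-property}(2), with the rank unchanged.
\item The case $e\notin Q$ follows because $(1-\tfrac{1}{4c})^{-4c}\ge \exp(1)>2$.
\end{itemize}
So $\gamma=4c$ already suffices, and no $\log c$ factor is needed. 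The threshold issue you flag is not delicate: any light threshold at most $r/(2c)$ keeps $|Q|<|\cN|/2$.

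\textbf{Comparison.} Your argument avoids Lemma~\ref{lemma:fpt-random-sampling}, does not need $q$, and gives the stronger bound $2^{-q}r^{-4c}$. What the paper's biased branching buys is the binomial form, which also interpolates toward a quasi-polynomial bound when $q$ is large. That feature is independent of how the light phase is handled. You can recover it by putting the same biased probabilities on top of your uniform light contraction, with $O(c\log r)$ matched to your heavy threshold. This works because the light case only needs the binomial to be monotone in $r$.

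\textbf{The fix.} If $Q=\{e\}$, your cleanup cannot produce it. Every set returned by the recursion on $f_{\backslash e}$ is a nonempty quotient $Q'$, and cleaning up $Q'\cup\{e\}$ yields either $Q'\cup\{e\}$ or $Q'$. Add the explicit test ``if $f(\cN-e)<r$, output $\{e\}$'' as in Algorithm~\ref{algo:pseudocode-fpt}.
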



We observe that Theorem \ref{theorem:fpt-algorithm} leads to a fixed-parameter algorithm to find a minimum quotient of a polymatroid from a $c$-\qbounded family when parameterized by solution size for every fixed constant $c$: running the Algorithm from Theorem \ref{theorem:fpt-algorithm} $2^q \cdot r^{O(c\log c)}$ times and returning the minimum quotient produced among all runs gives a minimum quotient with high probability. If we do not know the minimum size $q$ of a quotient of $f$, then we can find a min-sized quotient in time $2^q \cdot r^{O(c\log{c})}\cdot n^{O(c)}$ as follows: run the previously mentioned algorithm for each guess $q'=1, 2, 3, \ldots, n$ of $q$ and terminate after the least value of $q'$ for which the algorithm returns a quotient of size at most $q'$. This leads to the corollary stated below. 

\begin{corollary}\label{coro:qbounded-algo-fpt}
    Let $\cF$ be a $c$-\qbounded family of polymatroids. There exists a randomized algorithm that takes a polymatroid $f\in \cF$ as input (via its evaluation oracle) and runs in time $2^q\cdot r^{O(c\log c)} \cdot n^{O(c)}$ to return a minimum quotient of $f$ with high probability, where $r$, $n$, and $q$ denote the rank, size of the ground set, and size of a minimum quotient of $f$, respectively.
\end{corollary}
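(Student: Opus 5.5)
The plan is to prove the corollary by wrapping the algorithm of Theorem~\ref{theorem:fpt-algorithm} in amplification plus a search over guesses of $q$, following the same reduction used after Theorem~\ref{theorem:fptas-algorithm} for Corollary~\ref{coro:qbounded-quotient-algo-quasi-polytime}.

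\textbf{Step 1 (known $q$).} First assume $q$ is known. Fix any minimum non-empty quotient $Q$. By Theorem~\ref{theorem:fpt-algorithm}, one run takes time $n^{O(c)}$, outputs $n^{O(c)}$ quotients, and contains $Q$ with probability at least $p:=1/(2^q r^{O(c\log c)})$. Run it $N:=\lceil p^{-1}\ln(1/\delta)\rceil$ times independently. Over all runs, output the smallest non-empty set in the union of the collections that we verify, with oracle calls, to be a quotient. Checking that $\cN\setminus S$ is closed takes $O(n)$ oracle calls. The probability that $Q$ never appears is at most $(1-p)^N\le \delta$. Whenever $Q$ appears, the returned set is a non-empty quotient of size at most $q$, hence a minimum one. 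The total time is $N\cdot n^{O(c)} = 2^q r^{O(c\log c)} n^{O(c)}\log(1/\delta)$.

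\textbf{Step 2 (unknown $q$).} For $q'=1,2,\dots,n$ in order, run the Step~1 procedure with $q'$ in place of $q$, using $N(q')$ repetitions. Stop at the first $q'$ for which some verified quotient of size at most $q'$ is found, and return it.

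Correctness rests on two observations. First, every set we return is a verified non-empty quotient, so its size is at least $q$. Hence no guess $q'<q$ can ever trigger termination, because the stopping rule requires a quotient of size at most $q'<q$. Second, at $q'=q$, Step~1 succeeds with probability at least $1-\delta$, so we stop there with a quotient of size exactly $q$. We never reach $q'>q$ except with probability at most $\delta$. Taking $\delta=1/\mathrm{poly}(n)$ gives high probability.

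\textbf{Step 3 (running time).} The time spent is dominated by
\[
\sum_{q'\le q} 2^{q'} r^{O(c\log c)} n^{O(c)}\log n \le 2^{q+1} r^{O(c\log c)} n^{O(c)},
\]
which is the claimed bound. The only delicate point is the one noted in Step~2. Because every candidate is verified to be a quotient, a wrong small guess can never cause a premature stop. This is what lets the unknown-$q$ search stay within the $2^q$ budget rather than $2^n$.
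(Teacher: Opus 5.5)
Your proposal is correct and is the same argument the paper gives. Repeat the algorithm of Theorem~\ref{theorem:fpt-algorithm} $2^q r^{O(c\log c)}$ times and keep the smallest quotient; when $q$ is unknown, try $q'=1,2,\dots,n$ and stop at the first guess that produces a quotient of size at most $q'$. Your explicit check that each candidate's complement is closed is a worthwhile addition the paper leaves implicit, since Algorithm~\ref{algo:pseudocode-fpt} has no clean-up step and its outputs need not be quotients; this check is exactly what rules out a premature stop at some $q'<q$.
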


Theorem~\ref{theorem:fpt-algorithm} also provides an upper bound on the number of minimum quotients. We formalize it as follows.
\begin{corollary}\label{coro:qbounded-number-2}
    Let $\cF$ be a $c$-\qbounded family of polymatroids and $f\in \cF$ be a polymatroid. The number of minimum non-empty quotients of $f$ is at most $2^q\cdot r^{O(c\log c)} \cdot n^{O(c)}$, where $r$, $n$, and $q$ denote the rank, size of the ground set, and size of a minimum quotient of $f$, respectively.
\end{corollary}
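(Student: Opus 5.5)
The counting bound follows from Theorem~\ref{theorem:fpt-algorithm} by the standard argument that turns a random-contraction success probability into a bound on the number of objects. The plan is to run the randomized algorithm of Theorem~\ref{theorem:fpt-algorithm} once on $f$, and to let $\mathcal{Q}_{\min}$ denote the set of all minimum non-empty quotients of $f$, each of size $q$.

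First, for each $Q\in\mathcal{Q}_{\min}$, let $A_Q$ be the event that $Q$ belongs to the returned collection. Theorem~\ref{theorem:fpt-algorithm} gives $\Pr[A_Q]\ge p:=1/(2^q\cdot r^{O(c\log c)})$. The same theorem also guarantees that, on every execution, the returned collection has at most $n^{O(c)}$ quotients; call this deterministic bound $M$. Consequently
\[
\sum_{Q\in\mathcal{Q}_{\min}} \mathbf{1}[A_Q] \le M
\]
holds pointwise, for every realization of the algorithm's randomness.

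Second, take expectations and use linearity:
\[
|\mathcal{Q}_{\min}|\cdot p \;\le\; \sum_{Q\in\mathcal{Q}_{\min}}\Pr[A_Q] \;\le\; M .
\]
This gives $|\mathcal{Q}_{\min}|\le M/p = 2^q\cdot r^{O(c\log c)}\cdot n^{O(c)}$, which is the claimed bound.

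The only step that needs care is confirming that the bound $n^{O(c)}$ on the size of the output collection holds deterministically, for every run, and not merely in expectation. It also needs to hold uniformly over the branching, in the same way Lemma~\ref{lemma:non-uniform-runtime} and Lemma~\ref{lemma:gkp-runtime} bound the output size by induction on $n$. If the FPT algorithm's proof provides this only as a worst-case recursion bound, that is exactly what is needed, and the argument goes through unchanged. If instead the bound were only in expectation, the inequality $\mathbb{E}\bigl[\sum_Q \mathbf{1}[A_Q]\bigr]\le \mathbb{E}[\text{output size}]$ still suffices, so the conclusion holds either way.
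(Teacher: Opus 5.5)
Your proposal is correct and is the same argument the paper relies on when it says Theorem~\ref{theorem:fpt-algorithm} yields the count. On every run the output has at most $n^{2c}$ sets, a worst-case bound proved by induction on $n$ in Lemma~\ref{lemma:fpt-runtime}. Each fixed minimum quotient appears with probability at least the bound of Lemma~\ref{lemma:fpt-success-probability}, so dividing the first by the second gives the claim, and your fallback to an expectation bound is not needed.
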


The rest of this section is devoted to proving Theorem \ref{theorem:fpt-algorithm}. 
Our algorithm generalizes the random contraction based fixed-parameter algorithm for \HedgeMinCut due to Fomin, Golovach, Korhonen, Lokshtanov, and Saurabh \cite{fomin2025fixed}. 
In Section~\ref{subsection:fpt-algorithm}, we present the random contraction algorithm. In Section~\ref{subsection:fpt-prob}, we analyze the success probability for $c$-\qbounded polymatroid families and prove Theorem~\ref{theorem:fpt-algorithm}.

\subsection{Randomized Algorithm}\label{subsection:fpt-algorithm}

We first recall the FPT algorithm of Fomin et al. \cite{fomin2025fixed}. Let $G=(V,E)$ be the given hedgegraph and $\ell$ be the size of the minimum cut. A hedge $e\in E$ is \emph{large} if it contains at least $|V|/8$ vertices, i.e., $|e|\geq |V|/8$ and \emph{small} otherwise. If there exists a large hedge $e\in E$, the algorithm guesses whether $e$ belongs to the optimum cut (with non-uniform probability that depends on the number of large hedges): either hedge $e$ is added to the output cut, in which case we remove it from $G$ and decrease the parameter $\ell$ by one, or hedge $e$ is not added to the output cut, in which case we contract $e$ and the number of vertices in $G$ decreases by at least $|V|/8$. If all hedges are small, the algorithm contracts every hedge $e\in E$ with probability $1/\ell$ independently. The algorithm repeats this process until the number of vertices is small, at which point we use a brute-force search to find the minimum cut. It was shown in \cite{fomin2025fixed} that the algorithm finds the minimum cut with probability $|V|^{-O(1)}\cdot \binom{O(\log |V|)+\ell}{\ell}^{-1}$, which leads to an FPT algorithm parameterized by $\ell$.

We now extend this framework to \MinQuo. Let $\cF$ be a $c$-\qbounded polymatroid family and $f\in \cF$ be a polymatroid of rank $r$. Let $q>0$ be the size of a minimum non-empty quotient of $f$. The algorithm first removes all elements $e\in \cN$ with $f(e)=0$. If $r\leq 2c$, by Lemma~\ref{lemma:quotient-counting}, there are at most $n^{2c}$ non-empty quotients of $f$ and the algorithm does a brute-force search to find all minimum quotients. If $q\leq 2c$, there are at most $\binom{n}{2c}\leq n^{2c}$ subsets of $\cN$ of size $q$ and a brute-force search is applied again. We now assume that $r>2c$ and $q>2c$. An element $e\in \cN$ is \emph{heavy} if $f(e)\geq \frac{1}{2c}\cdot r$, and \emph{light} otherwise. Then, the algorithm checks whether there exists a heavy element. If there exists a heavy element $e\in \cN$, then it branches to one of the following: with probability $\frac{q}{\lfloor 2c\cdot \log r\rfloor+q}$, the algorithm decreases $q$ by one, recurses on the deleted polymatroid $f_{\backslash e}$, and adds $e$ to all quotients returned by the recursion; with probability $\frac{\lfloor 2c\cdot \log r\rfloor}{\lfloor 2c\cdot \log r\rfloor+q}$, the algorithm recurses on the contracted polymatroid $f_e$. If all elements are light, the algorithm samples every element $e\in \cN$ with probability $\frac{1}{q}$ independently and recurses on the polymatroid obtained by contracting the collection of sampled elements. The pseudocode is given in Algorithm~\ref{algo:pseudocode-fpt}. Lemmas~\ref{lemma:fpt-runtime} and \ref{lemma:fpt-success-probability} that are shown below together imply Theorem~\ref{theorem:fpt-algorithm}.

\begin{algorithm2e}[h]
\caption{Fixed-Parameter Tractable Algorithm}
\label{algo:pseudocode-fpt}
\SetKwInput{KwInput}{Input}                
\SetKwInput{KwOutput}{Output}              
\LinesNumbered

\DontPrintSemicolon
  
    \KwInput{evaluation oracle access to polymatroid $f:2^{\cN}\rightarrow \Z_{\geq 0}$ with $r=f(\cN)$, minimum non-empty quotient size $q>0$.
    }
    \KwOutput{a collection $\cQ$ of quotients of polymatroid $f$.}
    \SetKwFunction{Search}{Search}
    \SetKwProg{Fn}{Function}{:}{}
    
    \Fn{\Search{$f, q$}}{
        Remove all elements $e\in \cN$ with $f(e)=0$. \label{pseudocode-fpt-deleting}
        
        \If{$r\leq 2c$ or $q\leq 2c$}{
            Use a brute-force search to find all minimum non-empty quotients $\cQ$ of $f$. \label{pseudocode-fpt-bruteforce}
            
            If $\cQ$ is non-empty, then return $\cQ$, else return FAIL.
        }

        \If{$\exists \ e \in \cN$ with $f(e)\geq \frac{1}{2c}\cdot r$}{ \label{pseudocode-fpt-if}
            Let $e\in \cN$ be an arbitrary element with $f(e)\geq \frac{1}{2c}\cdot r$.\tcp*{$f(e)$ is large.} \label{pseudocode-fpt-select}
            
            Sample $X \in \{0,1\}$, where $\Pr[X=0]=\frac{q}{\lfloor 2c\cdot \log r\rfloor +q}$ and $\Pr[X=1]=\frac{\lfloor 2c\cdot \log r\rfloor}{\lfloor 2c\cdot \log r\rfloor +q}$.

            \If{$X=0$}{ \label{pseudocode-fpt-if1}
                \If{$f(\cN-e)<r$}{
                    $\cQ \gets \{\{e\}\}$.   \tcp*{$\{e\}$ is a quotient.} \label{pseudocode-fpt-singleton}
                }
                \Else{ \label{pseudocode-fpt-else2}
                    $\cQ \gets \Search(f_{\backslash e}, q-1)$.

                    Add element $e$ into every set in $\cQ$ if $\cQ$ is not FAIL.
                }

                \KwRet $\cQ$.
            }
            \Else{ \label{pseudocode-fpt-else1}

                \KwRet $\Search(f_e, q)$.
                
            }
        }
        \Else{ \label{pseudocode-fpt-else}
            $A\gets \emptyset$.
        
            \For{each $e\in \cN$}{
                Add element $e$ into $A$ with probability $\frac{1}{q}$.
            }

            \If{$f(A) < \frac{1}{8c}\cdot r$}{
                \KwRet FAIL.
            }

            \KwRet $\Search(f_A, q)$.
        }
    }
\end{algorithm2e}

The following lemma bounds the run-time and the number of quotients returned by Algorithm~\ref{algo:pseudocode-fpt}.

\begin{lemma}\label{lemma:fpt-runtime}
    Let $\cF$ be a $c$-\qbounded family of polymatroids and $f\in \cF$ with $q$ being the minimum size of a non-empty quotient of $f$. Algorithm~\ref{algo:pseudocode-fpt} on input $f$ and $q$ returns $n^{2c}$ quotients in $n^{O(c)}$ time, where $n$ denotes the size of the ground set of $f$.
\end{lemma}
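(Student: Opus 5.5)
We mirror the proofs of Lemma~\ref{lemma:non-uniform-runtime} and Lemma~\ref{lemma:gkp-runtime}: bound the number of returned quotients by induction on $n$, then bound the running time through the recursion depth and the cost per level.

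For the count, we show that $\Search(f,q)$ returns FAIL or at most $n^{2c}$ quotients, by induction on $n$ (note the ground set may shrink in Step~\ref{pseudocode-fpt-deleting}, which only helps). In the base case, Step~\ref{pseudocode-fpt-bruteforce} is reached with $r\le 2c$ or $q\le 2c$.
\begin{itemize}
\item If $r\le 2c$, Lemma~\ref{lemma:quotient-counting} gives at most $n^{r}\le n^{2c}$ non-empty quotients.
\item If $q\le 2c$, every minimum non-empty quotient is a set of size $q$, so there are at most $\binom{n}{q}\le n^{q}\le n^{2c}$ of them.
\end{itemize}
In the inductive step, exactly one recursive call is made, so no collections are ever unioned.
\begin{itemize}
\item Branch $X=0$ either returns the single set $\{e\}$, or calls $\Search(f_{\backslash e},q-1)$ on $n-1$ elements and adds $e$ to each set, which does not change the number of sets.
\item Branch $X=1$ calls $\Search(f_e,q)$ on $n-1$ elements.
\item The all-light branch either fails or calls $\Search(f_A,q)$ with $f(A)\ge r/(8c)>0$. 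Hence $A\neq\emptyset$, and the ground set has at most $n-1$ elements.
\end{itemize}
In every case the output has at most $\max\{1,(n-1)^{2c}\}\le n^{2c}$ sets.

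For the running time, the recursion is a single path, because every call makes at most one recursive call. Each step shrinks the ground set by at least one, so the depth is at most $n$. Each non-leaf level costs $\mathrm{poly}(n)$ oracle calls: scanning for a heavy element, sampling $A$, and evaluating $f(A)$ and $f(\cN-e)$. Adding $e$ to the returned sets costs $O(n\cdot n^{2c})$, using the count bound above.

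The leaf brute force is the main obstacle, since it must run in $n^{O(c)}$ time.
\begin{itemize}
\item When $q\le 2c$, enumerate all subsets of size at most $2c$, i.e.\ $O(n^{2c})$ subsets. Test each for being a quotient with $O(n)$ oracle calls: $Q$ is a quotient iff $f(\cN\setminus Q+e)>f(\cN\setminus Q)$ for all $e\in Q$. Keep the non-empty quotients of minimum size.
\item When $r\le 2c$, enumerate closed sets via Lemma~\ref{lemma:quotient-counting}. Its proof (deferred to the appendix) presumably describes each quotient by a small set of at most $r$ elements whose span is the closed complement. So we would enumerate all subsets $S$ with $|S|\le r$, compute $\cN\setminus\Span(S)$ in $O(n)$ calls, and keep the minimum non-empty ones, for $n^{O(c)}$ time in total.
\end{itemize}
I expect this step to be the delicate point: it needs the enumerative form of Lemma~\ref{lemma:quotient-counting}, not just its counting statement.

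Summing over at most $n$ levels gives total time $n\cdot n^{O(c)}=n^{O(c)}$.
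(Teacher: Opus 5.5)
Your proposal is correct and follows essentially the same argument as the paper. For the count, it inducts on $n$, using that each branch makes a single recursive call on a ground set smaller by at least one. For the running time, it uses a recursion path of depth at most $n$ ending in an $n^{O(c)}$ brute-force leaf.

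The enumerative form of Lemma~\ref{lemma:quotient-counting} you anticipate is exactly what the paper proves: every non-empty quotient is $\cN\setminus\Span(R)$ for some $|R|<r$. In recursive calls the parameter $q$ is only guaranteed to be a lower bound on the true minimum quotient size. So your claim that ``every minimum non-empty quotient is a set of size $q$'' can fail there; what keeps the $n^{2c}$ count valid is your restriction of the $q\le 2c$ brute force to sets of size at most $2c$.
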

\begin{proof}
    We first prove the upper bound on the number of quotients returned by induction on $n$. For the base case of $n\leq 2c$, the number of minimum quotients is at most $2^n\leq n^{2c}$ and Algorithm~\ref{algo:pseudocode-fpt} would use a brute-force search to find them in Step~\ref{pseudocode-fpt-bruteforce} since $q\leq n \leq 2c$.

    We now prove the induction step. If $r\leq 2c$, then by Lemma~\ref{lemma:quotient-counting}, there are at most $n^{2c}$ minimum non-empty quotients. Algorithm~\ref{algo:pseudocode-fpt} would use a brute-force search to find them in Step~\ref{pseudocode-fpt-bruteforce}. We now assume that $r>2c$. First, suppose there exists a heavy element $e\in \cN$. The algorithm executes the if-clause in Step~\ref{pseudocode-fpt-if} and branches to $X=0$ or $X=1$. For the case of $X=0$, the algorithm executes the if-clause in Step~\ref{pseudocode-fpt-if1}; it either returns $\{\{e\}\}$ or it recursively calls \Search$(f_{\backslash e}, q-1)$. By induction hypothesis, the number of quotients returned is at most
    $$\max\{1, (n-1)^{2c}\}<n^{2c}.$$
    For the case of $X=1$, the algorithm executes the else-clause in Step~\ref{pseudocode-fpt-else1}, where it recursively calls \Search$(f_e, q)$. By induction hypothesis, the number of quotients returned is at most
    $$(n-1)^{2c}<n^{2c}.$$

    Next, suppose that there is no heavy element. The algorithm executes the else-clause in Step~\ref{pseudocode-fpt-else}. We observe that the set $A$ of contracted elements has $f(A)\geq \frac{1}{8c}\cdot r$, which implies that $|A|\geq 1$. By induction hypothesis, the number of quotients returned is at most
    $$(n-1)^{2c}<n^{2c}.$$

    We now analyze the runtime. At each recursive call to $\Search(f, q)$, the size of the ground set decreases by at least $1$, so the recursion depth is at most $n$. At each level, the brute-force search in Line~\ref{pseudocode-fpt-bruteforce} is triggered only when $r\leq 2c$ or $q\leq 2c$, which takes at most $n^{2c}$ time. Thus, the overall runtime is $n^{O(c)}$.
\end{proof}

\subsection{Success Probability Analysis}\label{subsection:fpt-prob}

We now analyze the probability of a fixed quotient being in the collection returned by Algorithm~\ref{algo:pseudocode-fpt}. We first show the following lemma. Although the lemma is similar to that of the intermediate lemma shown by \cite{fomin2025fixed} for \HedgeMinCut, our proof of the lemma differs substantially from their proof since polymatroids do not have an underlying vertex-hedge structure. 

\begin{lemma}\label{lemma:fpt-random-sampling}
    Let $\cF$ be a $c$-\qbounded family of polymatroids and $f:2^{\cN}\rightarrow \mathbb{Z}_{\geq 0}$ be a polymatroid in $\cF$ of rank $r$. Let $Q$ be a minimum non-empty quotient of $f$ of size $q>2$ and $A\subseteq \cN$ be a random set, which includes every element $e\in \cN$ independently with probability $\frac{1}{q}$. Suppose $f(e)<\frac{r}{2c}$ for every element $e\in \cN$. Then,
    \begin{align}
        \Pr_A\left[A\cap Q=\emptyset \ \text{and} \ f(A)\geq \frac{r}{8c}\right] &\geq \frac{1}{32c}. \label{inequality:sampling-probability}
    \end{align}
\end{lemma}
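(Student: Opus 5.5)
The plan is to split the event in \eqref{inequality:sampling-probability} into two independent pieces. Since $A\cap Q$ and $A\setminus Q$ are determined by disjoint independent coin flips, I will condition on $A\cap Q=\emptyset$. Then $A$ is simply a random subset of $\cN\setminus Q$ with marginals $1/q$. First, $\Pr[A\cap Q=\emptyset]=(1-1/q)^q\geq 1/4$ for $q\geq 2$. So it suffices to show that a random subset $A'\subseteq \cN\setminus Q$, with each element included independently with probability $1/q$, satisfies $\Pr[f(A')\geq r/(8c)]\geq 1/(8c)$.

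\textbf{Step 1: mass outside $Q$.} I reuse the computation from case 1 of Lemma~\ref{lemma:gkp-success-probability}. The $c$-\qbounded property gives $\sum_{e\in\cN}f(e)\geq rq/c$. Every element is light, so $\sum_{e\in Q}f(e)<qr/(2c)$. Hence $\sum_{e\in\cN\setminus Q}f(e)\geq qr/(2c)$. In expectation, $A'$ therefore picks up singleton mass at least $r/(2c)$, while each individual element contributes less than $r/(2c)$.

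\textbf{Step 2: from singleton mass to $f(A')$.} This is where submodularity loses information, and I expect it to be the main obstacle. I will try two routes.

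\emph{Route (a): truncation and concavity.} Work with the truncated polymatroid $h(S):=\min\{f(S),\,r/(4c)\}$. Since $f(e)<r/(2c)$, every singleton is at most a constant factor above the threshold. The aim is a lower bound on $\mathbb{E}[h(A')]$ via the multilinear extension: it is concave along nonnegative directions, and the singleton-mass bound should force it to be at least a constant fraction of $r/(4c)$. A reverse-Markov inequality then finishes, because $h\leq r/(4c)$: $\Pr[h(A')\geq r/(8c)]\geq (\mathbb{E}[h(A')]-r/(8c))/(r/(4c))$.

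\emph{Route (b), which I would try first if (a) stalls: minors.} Lemma~\ref{lemma:quotient-property} part 4 says that contracting any set keeps the minimum quotient size at least $q$. Applying the $c$-\qbounded property to the minor $f_{A'}$, which lies in $\cF$, gives
\[
q\,(r-f(A'))\leq c\sum_{e\notin A'}f_{A'}(e).
\]
The idea is to combine this with Step 1, and with the fact that $f_{A'}(e)$ drops below $f(e)$ precisely when $e$ is spanned by $A'$, to argue that $f(A')$ cannot stay below $r/(8c)$ with probability more than $1-1/(8c)$. Concretely, I would bound $\mathbb{E}[r-f(A')]\leq r(1-1/(4c))$ and then apply Markov to $r-f(A')$.

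\textbf{Step 3: combine.} Multiplying the two bounds, using independence across $Q$ and $\cN\setminus Q$, gives $\tfrac14\cdot\tfrac{1}{8c}=\tfrac{1}{32c}$, which is \eqref{inequality:sampling-probability}. The delicate part is getting exactly the constant $1/(8c)$ in Step 2, which is why the threshold there is $r/(8c)$ rather than $r/(4c)$.
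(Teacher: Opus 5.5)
Your outer framework is correct and is the paper's. Since $A\cap Q$ and $A'=A\setminus Q$ come from disjoint independent coins, the target probability equals $(1-1/q)^q\cdot\Pr[f(A')\ge r/(8c)]$. Once $\mathbb{E}[f(A')]\ge r/(4c)$ is known, Markov applied to $r-f(A')\ge 0$ gives $\Pr[f(A')\ge r/(8c)]\ge\frac{1/(8c)}{1-1/(8c)}\ge\frac{1}{8c}$. So the constant is not the delicate part. The whole content of the lemma is the bound $\mathbb{E}[f(A')]\ge r/(4c)$, and neither of your routes establishes it.

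\textbf{Route (a) cannot work.} Concavity of the multilinear extension along $\mathbf{1}_{\cN\setminus Q}$ only yields $\mathbb{E}[f(A')]\ge f(\cN\setminus Q)/q$, and singleton mass bounds $f$ from above, never from below. In fact, Step 1 together with lightness does not imply the claim. Take the direct sum of $M\ge r^3$ parallel copies of a rank-$1$ element with the uniform matroid $U_{r-1,r^2}$, with $r>16c$. Then:
\begin{itemize}
\item $q=r^2-r+2>2$, and $Q$ is the uniform part minus $r-2$ of its elements;
\item every $f(e)=1<r/(2c)$;
\item $q\le c\,|\cN|/r$ holds, so Step 1 holds.
\end{itemize}
Yet $\mathbb{E}[f(A')]\le 1+(r-2)/q<2<r/(8c)$, and the conclusion of the lemma fails. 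The only hypothesis this $f$ violates is quotient-boundedness of a minor (delete the parallel class). So any valid argument must invoke the bound on minors, not just on $f$.

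\textbf{Route (b) names the right inequality but applies it only to the final set $A'$.} As a static statement it says only this: if $f(A')$ is small, the unsampled elements still carry large marginal value. After discarding $Q$ via lightness, $\sum_{e\in\cN\setminus(Q\cup A')}f_{A'}(e)\ge\frac{q}{c}\bigl(\frac r2-f(A')\bigr)$. Your plan gives no mechanism turning this into a lower bound on $\mathbb{E}[f(A')]$. (Also, for polymatroids, $f_{A'}(e)<f(e)$ does not mean that $e$ is spanned by $A'$.)

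\textbf{The missing idea} is to apply the minor bound along a gradual growth of the sample, so that the leftover marginal value is exactly what the next step is likely to collect. The paper reveals $A$ one element at a time along a uniformly random permutation and applies $c$-quotient-boundedness to every prefix minor $f_{A_{i-1}}$. Writing $h_i=r-f(A_i)$, this gives $\mathbb{E}[h_{i-1}-h_i\mid\cdot\,]\ge\frac{1}{cn}\bigl(h_{i-1}-\frac r2\bigr)$. Hence, conditioned on $A\cap Q=\emptyset$, $\mathbb{E}[r-f(A)]\le\frac r2+\frac r2\bigl(1-\frac1{cn}\bigr)^n\le\bigl(1-\frac1{4c}\bigr)r$.

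Your multilinear-extension setup can be rescued the same way in continuous time. Let $A'_p$ be a rate-$p$ sample of $\cN\setminus Q$ and $F(p)=\mathbb{E}[f(A'_p)]$. Then $F'(p)\ge\mathbb{E}\bigl[\sum_{e\in\cN\setminus(Q\cup A'_p)}f_{A'_p}(e)\bigr]\ge\frac qc\bigl(\frac r2-F(p)\bigr)$. Integrating gives $F(1/q)\ge\frac r2(1-e^{-1/c})\ge\frac r{4c}$. Combined with your independence split, this version even avoids the paper's swapping argument for the conditional law of $e_i$.
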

\begin{proof}
    The probability that all elements in $Q$ are not contained in $A$ is
    $$\Pr_A[A\cap Q =\emptyset] = \left(1-\frac{1}{q}\right)^{q} \geq \frac{1}{4}.$$
    We now analyze the probability that $f(A)\geq \frac{r}{8c}$ conditioning on $A\cap Q=\emptyset$. Let $n:=|\cN|$ and $\sigma:=(e_1, e_2, \ldots, e_n)$ be a uniformly random permutation of elements in $\cN$. For every $i\in [0, n]$, we define $T_i:=\{e_j:j\in [1,i]\}$, $A_i:=A\cap T_i$, and $h_i:=r-f(A_i)$. We note that $A_0=T_0=\emptyset$ and $h_0=r$.

    Let $i\in [n]$. We consider the distribution of element $e_i$ conditioned on $A_{i-1}$ and $A\cap Q=\emptyset$. We claim that for every $e\in \cN\setminus (Q\cup A_{i-1})$, 
    $$\begin{aligned}
        \Pr_{\sigma, A}[e_i=e|A_{i-1} \ \text{and} \ A\cap Q=\emptyset]\geq \frac{1}{n-|A_{i-1}|}.
    \end{aligned}$$
    Indeed, for every element $x\in Q$, swapping the positions of $x$ and $e$ in the permutation, while leaving $A$ unchanged, gives a probability-preserving injection from conditioned outcomes with $e_i=x$ to conditioned outcomes with $e_i=e$. The conditioning is preserved because, if $e$ originally appeared after position $i$, then the first $i-1$ positions are unchanged by the swap, and hence $A_{i-1}$ is unchanged. If $e$ originally appeared before position $i$, then $e\notin A$, since otherwise $e\in A_{i-1}$. After the swap, $x$ occupies this earlier position, but $x\notin A$ because $A\cap Q = \emptyset$; hence $A_{i-1}$ is again unchanged. Moreover, $A\cap Q=\emptyset$ remains true because $A$ itself is unchanged. Therefore, $e$ is at least as likely to be $e_i$ as every element of $Q$. By symmetry, it is equally likely as every other element of $\cN \setminus (Q\cup A_{i-1})$. Thus, its conditional probability is at least the average over the $n-|A_{i-1}|$ possible elements of $e_i$.
    
    We note that if $e_i\in Q$, then it is not contained in $A_i$. Otherwise, $e_i\in \cN\setminus (Q\cup A_{i-1})$ and is contained in $A_i$ with probability $\frac{1}{q}$. Hence, we have
    \begin{align}
        &\E_{\sigma, A}[h_{i-1}-h_i|A_{i-1} \ \text{and} \ A\cap Q =\emptyset] \notag\\
        \geq& \sum_{e\in \cN\setminus (Q\cup A_{i-1})} \frac{1}{n-|A_{i-1}|}\cdot \frac{1}{q} \cdot (f(A_{i-1}+e)-f(A_{i-1})) \notag\\
        \geq& \frac{1}{n}\cdot \frac{1}{q}\cdot \sum_{e\in \cN\setminus (Q\cup A_{i-1})} (f(A_{i-1}+e)-f(A_{i-1})) \notag\\
        =& \frac{1}{n}\cdot \frac{1}{q}\cdot \left(\sum_{e\in \cN} (f(A_{i-1}+e)-f(A_{i-1})) - \sum_{e\in Q}(f(A_{i-1}+e) - f(A_{i-1}))\right) \notag\\
        \geq& \frac{1}{n}\cdot \frac{1}{q}\cdot \left(\sum_{e\in \cN} (f(A_{i-1}+e)-f(A_{i-1})) - q\cdot \frac{r}{2c}\right), \label{inequality:fpt-permutation-analysis}
    \end{align}
    where the last inequality holds since $f(A_{i-1}+e)-f(A_{i-1})\leq f(e)<\frac{r}{2c}$. By Lemma~\ref{lemma:quotient-property}, $Q$ is also the minimum non-empty quotient of $f_{A_{i-1}}$ since $Q\cap A_{i-1}=\emptyset$. We recall that $\cF$ is a $c$-\qbounded family of polymatroids with $f\in \cF$ and $f_{A_{i-1}}$ is a minor of $f$. Thus,
    $$q\leq c\cdot \frac{\sum_{e\in \cN\setminus A_{i-1}}f_{A_{i-1}}(e)}{f_{A_{i-1}}(\cN\setminus A_{i-1})}=c\cdot \frac{\sum_{e\in \cN}(f(A_{i-1}+e)-f(A_{i-1}))}{f(\cN)-f(A_{i-1})},$$
    which further implies that
    $$\sum_{e\in \cN}(f(A_{i-1}+e)-f(A_{i-1})) \geq \frac{q}{c}\cdot (f(\cN)-f(A_{i-1}))=\frac{q}{c}\cdot h_{i-1}.$$
    Therefore, plugging back into inequality~(\ref{inequality:fpt-permutation-analysis}), we have
    $$\begin{aligned}
        \E_{\sigma, A}[h_{i-1}-h_i|A_{i-1} \ \text{and} \ A\cap Q =\emptyset] \geq& \frac{1}{n}\cdot \frac{1}{q}\cdot \left(\sum_{e\in \cN} (f(A_{i-1}+e)-f(A_{i-1})) - q\cdot \frac{r}{2c}\right) \\
        \geq& \frac{1}{n}\cdot \frac{1}{q}\cdot \left(\frac{q}{c}\cdot h_{i-1} - q\cdot \frac{r}{2c}\right) \\
        =& \frac{1}{n}\cdot \frac{1}{c}\cdot \left(h_{i-1}-\frac{r}{2}\right).
    \end{aligned}$$

    This implies that
    $$\begin{aligned}
        \E_{\sigma,A}\left[h_i-\frac{r}{2}|A\cap Q =\emptyset\right] &= \E_{\sigma,A}\left[\left(h_{i-1}-\frac{r}{2}\right) - \E_{\sigma,A}\left[h_{i-1}-h_i|A_{i-1}\ \text{and}\ A\cap Q =\emptyset\right]|A\cap Q =\emptyset\right] \\
        &\leq \E_{\sigma,A}\left[\left(h_{i-1}-\frac{r}{2}\right) -\frac{1}{n}\cdot \frac{1}{c}\cdot \left(h_{i-1}-\frac{r}{2}\right)|A\cap Q =\emptyset\right] \\
        &= \left(1-\frac{1}{c\cdot n}\right) \cdot \E_{\sigma, A}\left[h_{i-1}-\frac{r}{2}|A\cap Q =\emptyset\right],
    \end{aligned}$$
    and moreover,
    $$\begin{aligned}
        \E_{\sigma,A}\left[h_n-\frac{r}{2}|A\cap Q =\emptyset\right] &\leq \left(1-\frac{1}{c\cdot n}\right) \cdot \E_{\sigma, A}\left[h_{n-1}-\frac{r}{2}|A\cap Q =\emptyset\right] \\
        &\leq \left(1-\frac{1}{c\cdot n}\right)^2 \cdot \E_{\sigma, A}\left[h_{n-2}-\frac{r}{2}|A\cap Q =\emptyset\right] \\
        &\leq \left(1-\frac{1}{c\cdot n}\right)^n \cdot \E_{\sigma, A}\left[h_{0}-\frac{r}{2}|A\cap Q =\emptyset\right] \\
        &= \left(1-\frac{1}{c\cdot n}\right)^n \cdot \frac{r}{2}.
    \end{aligned}$$
    This further shows that
    $$\begin{aligned}
        \E_{A}[f(\cN)-f(A)|A\cap Q =\emptyset] &= \frac{r}{2} + \E_{\sigma, A}\left[h_n-\frac{r}{2}|A\cap Q =\emptyset\right] \\
        &\leq \frac{r}{2}+\left(1-\frac{1}{c\cdot n}\right)^{n} \cdot \frac{r}{2} \\
        &\leq \frac{r}{2}+\exp\left(-\frac{1}{c}\right)\cdot \frac{r}{2} \\
        &\leq \left(1-\frac{1}{4c}\right)\cdot r. \ \ \text{(since $\exp(-\frac{1}{x})\leq 1-\frac{1}{2x}$ for $x\geq1$)}\\
    \end{aligned}$$
    Therefore,
    $$\E_{A}[f(A)|A\cap Q=\emptyset] \geq \frac{r}{4c}.$$
    We note that
    $$\begin{aligned}
        \E_A[f(A)|A\cap Q =\emptyset] &= \Pr_A\left[f(A)<\frac{r}{8c}|A\cap Q =\emptyset\right] \cdot \E_A\left[f(A)|f(A)<\frac{r}{8c} \ \text{and} \ A\cap Q =\emptyset\right] \\
        &+ \Pr_A\left[f(A)\geq\frac{r}{8c}|A\cap Q =\emptyset\right] \cdot \E_A\left[f(A)|f(A)\geq\frac{r}{8c} \ \text{and} \ A\cap Q =\emptyset\right] \\
        &\leq 1\cdot \frac{r}{8c} + \Pr_A\left[f(A)\geq\frac{r}{8c}|A\cap Q =\emptyset\right] \cdot r,
    \end{aligned}$$
    which implies that
    $$\Pr_A\left[f(A)\geq \frac{r}{8c}|A\cap Q =\emptyset\right]\geq \frac{1}{8c}$$
    and
    $$\begin{aligned}
        \Pr_A\left[A\cap Q=\emptyset \ \text{and} \ f(A)\geq \frac{r}{8c}\right] &= \Pr_A[A\cap Q=\emptyset] \cdot \Pr_A\left[f(A)\geq \frac{r}{8c}|A\cap Q =\emptyset\right] \\
        &\geq \frac{1}{4}\cdot \frac{1}{8c}=\frac{1}{32c}.
    \end{aligned}$$    
\end{proof}

We now analyze the probability of a fixed quotient being in the collection as follows.

\begin{lemma}\label{lemma:fpt-success-probability}
    Let $\cF$ be a $c$-\qbounded family of polymatroids and $f\in \cF$. For every fixed minimum non-empty quotient $Q$ of $f$ with size $q$, the collection of quotients returned by 
    Algorithm~\ref{algo:pseudocode-fpt} on input $(f, q)$ contains $Q$ with probability at least
    $$\frac{1}{r^{80c\log (c+1)}\cdot \binom{\lfloor 2c\cdot \log r\rfloor+q}{q}},$$
    where $r$ denotes the rank of $f$.
\end{lemma}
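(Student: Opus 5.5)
We prove the bound by induction on $r+q$. Define $P(r,q)$ as the infimum, over all $f\in\cF$ of rank $r$ and all minimum non-empty quotients $Q$ of size $q$, of the probability that $\Search(f,q)$ returns a collection containing $Q$. Write $k(r):=\lfloor 2c\log r\rfloor$. The claim to establish is
\[
P(r,q)\ \ge\ B(r,q):=\frac{1}{r^{80c\log(c+1)}\binom{k(r)+q}{q}}.
\]
As in the earlier analyses, we may assume $f(e)\ge 1$ for all $e$, since Step~\ref{pseudocode-fpt-deleting} removes zero elements and these lie in no quotient.

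\textbf{Base case.} If $r\le 2c$ or $q\le 2c$, brute force returns every minimum quotient, so the probability is $1$. Hence assume $r>2c$ and $q>2c\ge 2$, so that Lemma~\ref{lemma:fpt-random-sampling} applies.

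\textbf{Heavy element present.} Let $e$ be the heavy element fixed in Step~\ref{pseudocode-fpt-select}.

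If $e\in Q$, the algorithm chooses $X=0$ with probability $q/(k+q)$. If $Q=\{e\}$, Step~\ref{pseudocode-fpt-singleton} returns it. Otherwise, by Lemma~\ref{lemma:quotient-property}(2), $Q-e$ is a minimum quotient of $f_{\backslash e}$ of size $q-1$. Its rank $r'\le r$, and both $k(\cdot)$ and $B$ are monotone in $r$, so the recursion succeeds with probability at least $B(r,q-1)$. The identity $\frac{q}{k+q}\binom{k+q-1}{q-1}^{-1}=\binom{k+q}{q}^{-1}$ then closes this case. The rank in $f_{\backslash e}$ might drop below $r$, but this only helps: I will treat $B$ as nonincreasing in $r$, which it is.

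If $e\notin Q$, the algorithm chooses $X=1$ with probability $k/(k+q)$. By Lemma~\ref{lemma:quotient-property}(3), $Q$ remains a minimum quotient of $f_e$, whose rank is $r'\le r(1-\frac1{2c})$. The analogous identity gives $\frac{k}{k+q}\binom{k-1+q}{q}^{-1}=\binom{k+q}{q}^{-1}$. So it suffices to have $k(r')\le k(r)-1$, which holds since $2c\log(r(1-\frac1{2c}))\le 2c\log r-1$ when $\log$ is base~$2$ (because $(1-\frac1{2c})^{2c}\le \frac12$). The factor $r^{-80c\log(c+1)}$ only improves as $r$ decreases.

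\textbf{No heavy element.} By Lemma~\ref{lemma:fpt-random-sampling}, with probability at least $\frac1{32c}$ the sampled set $A$ avoids $Q$ and satisfies $f(A)\ge r/(8c)$. Then Lemma~\ref{lemma:quotient-property}(3), applied element by element, shows $Q$ is still a minimum quotient of $f_A$, whose rank is $r'\le r(1-\frac1{8c})$. The recursion gives
\[
P(r,q)\ \ge\ \frac{1}{32c}\,B(r',q)\ \ge\ \frac{1}{32c}\cdot\frac{1}{r'^{\,80c\log(c+1)}\binom{k(r)+q}{q}},
\]
using $k(r')\le k(r)$. It remains to absorb $\frac1{32c}$, i.e. to show
\[
\Bigl(1-\tfrac{1}{8c}\Bigr)^{-80c\log(c+1)}\ \ge\ 32c.
\]
Since $(1-\frac1{8c})^{-8c}\ge e$, the left side is at least $e^{10\log(c+1)}\ge (c+1)^{10}\ge 32c$ for all $c\ge1$; I will double-check the base of the logarithm here.

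\textbf{Main obstacle.} The key step is this last calculation: the exponent $80c\log(c+1)$ must simultaneously pay for the $1/(32c)$ loss in every unheavy round, while the heavy-round losses are charged entirely to the binomial coefficient via the pairing of probabilities $q/(k+q)$ and $k/(k+q)$. Care is also needed to check that $k$ decreases by at least one in each heavy contraction, including at the floor boundary. Finally, since $\binom{k+q}{q}\le 2^{k+q}\le 2^q r^{2c}$, Theorem~\ref{theorem:fpt-algorithm} follows.
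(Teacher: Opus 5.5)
Your proof is correct and follows essentially the same argument as the paper: induction on $r+q$, and in the heavy case the branch probabilities $\frac{q}{k+q}$ and $\frac{k}{k+q}$ are absorbed by the binomial identities. In the light case you use Lemma~\ref{lemma:fpt-random-sampling} together with $(1-\frac{1}{8c})^{-80c\log(c+1)}\ge 32c$. Your additional checks are all valid for either base of the logarithm; these are that $q>2c\ge 2$ so the sampling lemma applies, that the bound is monotone in $r$, and that $\lfloor 2c\log r\rfloor$ drops by at least one after contracting a heavy element.
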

\begin{proof}
    For every positive integers $r$ and $q$, let $P(r,q)$ be the infimum over all $f\in \cF$ with rank $r$ with minimum non-empty quotient of size $q$ and all minimum non-empty quotient $Q$ of $f$, of the probability that the collection of quotients returned by 
    Algorithm~\ref{algo:pseudocode-fpt} on input $(f, q)$ contains $Q$. We now prove that
    \begin{align}\label{inequality:fpt-analysis-probability}
        P(r, q)\geq \frac{1}{r^{80c\log (c+1)} \cdot \binom{\lfloor 2c\cdot \log r\rfloor+q}{q}}
    \end{align}
    by induction on $r+q$. We also assume that $f(e)>0$ for every $e\in \cN$ since Algorithm~\ref{algo:pseudocode-fpt} will remove the other elements in Line~\ref{pseudocode-fpt-deleting}.
    
    For the case $r+q\leq 4c$, Algorithm~\ref{algo:pseudocode-fpt} uses a brute-force search in Step~\ref{pseudocode-fpt-bruteforce} to find minimum quotients of $f$ since $r\leq 2c$ or $q\leq 2c$. Consequently, the collection returned by Algorithm~\ref{algo:pseudocode-fpt} contains $Q$ with probability $1$. 

    We now prove the induction step when $r+q>4c$. We may assume that $r>2c$ and $q>2c$; otherwise, Algorithm~\ref{algo:pseudocode-fpt} would still use a brute-force search to find minimum quotients. Moreover, we recall that $\cF$ is a $c$-\qbounded family of polymatroids and $f\in \cF$. First suppose that there exists a heavy element. Let $e$ be such an element fixed in Step~\ref{pseudocode-fpt-select}. We consider the following two cases:
    \begin{enumerate}
        \item Suppose $e\in Q$: In this case, with probability $\frac{q}{\lfloor 2c\cdot \log r\rfloor+q}$, Algorithm~\ref{algo:pseudocode-fpt} branches to $X=0$ in Step~\ref{pseudocode-fpt-if1}. We observe that either $Q=\{e\}$ or $Q\supsetneq \{e\}$. If $Q=\{e\}$, then it follows that $f(\cN)>f(\cN-e)$ since $Q=\{e\}$ is a quotient and consequently, the algorithm will return $Q=\{e\}$ in Step~\ref{pseudocode-fpt-singleton}. If $Q\supsetneq \{e\}$, then $\{e\}$ is not a quotient and hence, the algorithm executes the else-clause in Step~\ref{pseudocode-fpt-else2}.
        It recursively calls \Search$(f_{\backslash e}, q-1)$ and adds element $e$ into all returned quotients. By Lemma~\ref{lemma:quotient-property}, $Q-e$ is the minimum non-empty quotient of $f_{\backslash e}$. This implies that the probability that $Q-e$ is contained in the collection of quotients returned by \Search$(f_{\backslash e}, q-1)$ is at least $P(r, q-1)$. Moreover, if $Q-e$ is contained in the collection returned, then $Q$ will be contained in the collection returned by the algorithm. Therefore,
        $$\begin{aligned}
            P(r, q) &\geq \frac{q}{\lfloor 2c\cdot \log r\rfloor+q}\cdot P(r, q-1) \\
            &\geq \frac{q}{\lfloor 2c\cdot \log r\rfloor+q}\cdot \frac{1}{r^{80c\log (c+1)} \cdot \binom{\lfloor 2c\cdot \log r\rfloor+q-1}{q-1}} \ \ \text{(by induction hypothesis)}\\
            &= \frac{1}{r^{80c\log (c+1)} \cdot \binom{\lfloor 2c\cdot \log r\rfloor+q}{q}}.
        \end{aligned}$$

        \item Suppose $e\not\in Q$: In this case, with probability $\frac{\lfloor 2c\cdot \log r\rfloor}{\lfloor 2c\cdot \log r\rfloor+q}$, Algorithm~\ref{algo:pseudocode-fpt} branches to $X=1$ in Step~\ref{pseudocode-fpt-else1}. The algorithm recursively calls \Search$(f_e, q)$. By Lemma~\ref{lemma:quotient-property}, $Q$ is still the minimum non-empty quotient of $f_e$. This implies that the probability that $Q$ is contained in the collection of quotients returned by \Search$(f_e, q)$ is at least $P(r-f(e), q)$. Moreover, if $Q$ is contained in the collection returned, then it will also be contained in the collection returned by the algorithm. Therefore,
        $$\begin{aligned}
            P(r, q) &\geq \frac{\lfloor 2c\cdot \log r\rfloor}{\lfloor 2c\cdot \log r\rfloor+q} \cdot P(r-f(e), q) \\
            &\geq \frac{\lfloor 2c\cdot \log r\rfloor}{\lfloor 2c\cdot \log r\rfloor+q} \cdot \frac{1}{(r-f(e))^{80c\log (c+1)} \cdot \binom{\lfloor 2c\cdot \log (r-f(e))\rfloor+q}{q}} \ \ \text{(by induction hypothesis)}\\
            &\geq \frac{\lfloor 2c\cdot \log r\rfloor}{\lfloor 2c\cdot \log r\rfloor+q} \cdot \frac{1}{(r-\frac{r}{2c})^{80c\log (c+1)} \cdot \binom{\lfloor 2c\cdot \log (r-\frac{r}{2c})\rfloor+q}{q}} \ \ \text{(since $f(e)\geq \frac{r}{2c}$)}\\
            &\geq \frac{1}{r^{80c\log (c+1)}} \cdot \frac{\lfloor 2c\cdot \log r\rfloor}{\lfloor 2c\cdot \log r\rfloor+q} \cdot \frac{1}{\binom{\lfloor 2c\cdot \log r + 2c\cdot \log(1-\frac{1}{2c})\rfloor+q}{q}} \\
            &\geq \frac{1}{r^{80c\log (c+1)}} \cdot \frac{\lfloor 2c\cdot \log r\rfloor}{\lfloor 2c\cdot \log r\rfloor+q} \cdot \frac{1}{\binom{\lfloor 2c\cdot \log r\rfloor-1+q}{q}} \ \ \text{(since $\log (1-\frac{1}{2c})\leq -\frac{1}{2c}$)}\\
            &= \frac{1}{r^{80c\log (c+1)}} \cdot \frac{1}{\binom{\lfloor 2c\cdot \log r\rfloor+q}{q}}.
        \end{aligned}$$
    \end{enumerate}
    Next, suppose every element $e\in \cN$ is light. Then, Algorithm~\ref{algo:pseudocode-fpt} will execute the else-clause in Step~\ref{pseudocode-fpt-else}. It constructs a random set $A$, which includes every element $e\in \cN$ with probability $\frac{1}{q}$ independently, and recursively calls \Search$(f_A, q)$. By Lemma~\ref{lemma:fpt-random-sampling}, the probability that none of the elements in $Q$ is contracted and $f(A)\geq \frac{r}{8c}$ is at least
        \begin{align}
            \Pr_A\left[A\cap Q=\emptyset \ \text{and} \ f(A)\geq \frac{r}{8c}\right] &\geq \frac{1}{32c}. \label{inequality:fpt-sampling-probability}
        \end{align}
        By Lemma~\ref{lemma:quotient-property}, $Q$ is still the minimum non-empty quotient of $f_A$ if $A\cap Q=\emptyset$. This implies that the probability that $Q$ is contained in the collection of quotients returned by \Search$(f_A, q)$ is at least $P(r-f(A), q)$. Moreover, if $Q$ is contained in the collection returned, then it will also be contained in the collection returned by the algorithm. Thus,
        $$\begin{aligned}
            P(r, q)&\geq \Pr_A\left[A\cap Q=\emptyset \ \text{and} \ f(A)\geq \frac{r}{8c}\right]\cdot P\left(r-\frac{1}{8c}\cdot r, q\right) \\
            &\geq \frac{1}{32c}\cdot P\left(r-\frac{1}{8c}\cdot r, q\right) \ \ \text{(by inequality~(\ref{inequality:fpt-sampling-probability}))}\\
            &\geq \frac{1}{32c}\cdot \frac{1}{((1-\frac{1}{8c})\cdot r)^{80c\log (c+1)} \cdot \binom{\lfloor 2c\cdot \log ((1-\frac{1}{8c})\cdot r)\rfloor+q}{q}} \ \ \text{(by induction hypothesis)}\\
            &\geq \frac{1}{32c}\cdot \frac{1}{(1-\frac{1}{8c})^{80c\log (c+1)}} \cdot \frac{1}{r^{80c\log (c+1)} \cdot \binom{\lfloor 2c\cdot \log r\rfloor+q}{q}} \\
            &\geq \frac{1}{r^{80c\log (c+1)} \cdot \binom{\lfloor 2c\cdot \log r\rfloor+q}{q}},
        \end{aligned}$$
        where the last inequality is because
        $$(1-\frac{1}{8c})^{80c\log (c+1)}\leq \exp(-10\log (c+1))\leq \frac{1}{32c}.$$
        
\end{proof}

%% file: strong-uniform-random-contraction-bounded-marginal.tex
\section{\MinQuo in $c$-\sqbounded $k$-polymatroids}\label{section:uniform-contraction-bounded-marginal}

In this section, we consider $c$-\sqbounded $k$-polymatroid families. We design a random contraction algorithm to find a subset $\cN'$ of the ground set such that the polymatroid obtained by contracting the complement of $\cN'$ has rank at most $\alpha(c+k)$ and moreover, every $\alpha$-approximate minimum quotient $Q$ of $f$ is contained in $\cN'$ with probability $\Omega(k^{-1} r^{-(c+1)\alpha})$. Our algorithm generalizes the random contraction for $k$-hypergraphs by Kogan and Krauthgamer \cite{kogan2015sketching}.
The following is the main theorem of this section.

\begin{theorem}\label{thm:uniform-random-contraction-strong}
    For an integer $k\ge 1$, let $\cF$ be a $c$-\sqbounded $k$-polymatroid family and $\alpha \geq 1$. 
    There exists a randomized algorithm that takes a polymatroid $f\in \cF$ as input (via its evaluation oracle) and runs in polynomial time to return a subset $\cN'$ of the ground set $\cN$ of $f$ such that 
    \begin{enumerate}
        \item the polymatroid obtained from $f$ by contracting $\cN\setminus \cN'$ has rank at most $\alpha (c+k)$ and 
        \item  every $\alpha$-approximate minimum quotient $Q$ of $f$ is contained in $\cN'$ with probability at least 
    $$\frac{(c+1)\alpha+1}{(c+k)\alpha+1}\cdot \binom{r-\alpha(k-1)}{(c+1)\alpha}^{-1}.$$
    \end{enumerate}
\end{theorem}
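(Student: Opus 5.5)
We will use a Kogan--Krauthgamer style non-uniform random contraction that stops as soon as the rank is small. Let $f\in\cF$ have rank $r$ and fix an $\alpha$-approximate minimum quotient $Q$, so $|Q|\le \alpha q$. The algorithm first deletes all elements with $f(e)=0$. While the current rank $r'$ exceeds $\alpha(c+k)$, it samples an element $e$ with probability proportional to $\lambda_e:=r'-f(e)-\alpha(c+1)+\ldots$ and contracts it. The exact shift constant is chosen below; each $\lambda_e$ is nonnegative because $f(e)\le k$ and $r'>\alpha(c+k)$. When the rank drops to at most $\alpha(c+k)$, it outputs the current ground set as $\cN'$. The algorithm only contracts, so the contracted polymatroid has rank at most $\alpha(c+k)$, which gives item 1 deterministically. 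The running time is polynomial since every step removes an element.

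For item 2, we first need the quotient bound to persist. As long as no element of $Q$ has been contracted, Lemma~\ref{lemma:quotient-property} parts (3)/(4) show that $Q$ remains a quotient of the current minor $g$. Moreover, the minimum quotient size of $g$ is at least $q$, so $|Q|\le \alpha q \le \alpha q_g$. The family is minor-closed and $c$-\sqbounded, so $q_g\le \sum_{e}(g(e)+c)/r'$. Hence
\[ |\cN_g\setminus Q| \;\ge\; \frac{\sum_{e}(r'-\alpha(g(e)+c))}{r'}. \]
Since $g(e)\le k$, this means an $\alpha$-scaled weight $\lambda_e := r' - \alpha(g(e)+c)$ (or a variant) makes the success probability of a single step satisfy
\[ \frac{\sum_{e\notin Q}\lambda_e}{\sum_e\lambda_e}\ \ge\ \frac{|\cN_g\setminus Q|\cdot \min_{e\notin Q}\lambda_e}{r'|\cN_g\setminus Q|}. \]
Because weights differ per element, I would instead follow Lemma~\ref{lemma:non-uniform-success}: define $B(r)$ as the claimed bound and prove $P(r)\ge B(r)$ by induction on $r$, plugging $P(r-f(e))\ge B(r-f(e))$ into the weighted average. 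I would choose $\lambda_e$ so that $\lambda_e B(r-f(e))$ telescopes into $\ge$ a common quantity times $r$, exactly as the factor $(r-f(e)-c)$ cancelled one term of the binomial in Section~\ref{section:non-uniform-contraction}. With the ratio $\binom{r-\alpha(k-1)}{(c+1)\alpha}^{-1}$, the natural choice is $\lambda_e = r-f(e)-\alpha(k-1)-(c+1)\alpha+1$, which is the top factor removed when going from $\binom{x}{m}$ to $\binom{x-1}{m-1}$. Since $f(e)\ge1$, monotonicity in $f(e)$ reduces each term to the $f(e)=1$ case. The sum $\sum_e\lambda_e$ is then related to $|\cN\setminus Q|\cdot r$ using the displayed inequality together with $f(e)\le k$, which absorbs the $\alpha(k-1)$ shift.

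The main obstacle is the base of the induction and the non-integrality of the stopping threshold. When $r'$ is just above $\alpha(c+k)$, a single contraction can drop the rank by up to $k$, landing anywhere in $(\alpha(c+k)-k,\alpha(c+k)]$. The binomial $\binom{r-\alpha(k-1)}{(c+1)\alpha}$ evaluated at these values can be smaller than $1$ or undefined, so the bound must be taken as $1$ there. This creates a boundary loss; I expect it to be exactly the prefactor $\frac{(c+1)\alpha+1}{(c+k)\alpha+1}$. Concretely, I would prove the stronger inductive statement $P(r)\ge \binom{r-\alpha(k-1)}{(c+1)\alpha}^{-1}$ only for $r>\alpha(c+k)$. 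I would handle the first step from the range $r\le\alpha(c+k)+k$ separately by bounding $\binom{r-\alpha(k-1)}{(c+1)\alpha}$ in that range by $\frac{(c+k)\alpha+1}{(c+1)\alpha+1}$ times its value at the threshold. I would first try to verify this boundary calculation with $r=\alpha(c+k)+1$ and $f(e)\in\{1,\dots,k\}$, adjusting the shift in $\lambda_e$ if the telescoping fails by a constant.
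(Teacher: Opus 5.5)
\textbf{There is a genuine gap in the core telescoping step.} It fails exactly in the regime the theorem is about, namely $\alpha>1$ and $k\ge 2$. (Using a non-uniform algorithm is legitimate, since the theorem only asserts that some algorithm exists; the problem is the analysis.)

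Write $x:=r-\alpha(k-1)$ and $m:=(c+1)\alpha$. With your weight $\lambda_e=x-f(e)-m+1$ you correctly get $\lambda_e\binom{x-f(e)}{m}^{-1}=m\binom{x-f(e)}{m-1}^{-1}\ge m\binom{x-1}{m-1}^{-1}=x\binom{x}{m}^{-1}$. So after collapsing every term to the case $f(e)=1$, you need $\sum_e\lambda_e\le x\,|\cN\setminus Q|$.

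The only information about $Q$ is $|\cN\setminus Q|\ge \frac{1}{r}\sum_e\bigl(r-\alpha(f(e)+c)\bigr)$. Here $f(e)$ has coefficient $\alpha$, whereas in $\lambda_e$ it has coefficient $1$. In Section~\ref{section:non-uniform-contraction} there is no $\alpha$, which is why the weight and the bound matched exactly there. If every element has $f(e)=k$, your requirement becomes $\frac{x}{r}\,\alpha(k+c)\le k+(c+1)\alpha-1$, which as $r\to\infty$ reads $(\alpha-1)(k-1)\le 0$. Concretely, $\alpha=2$, $c=1$, $k=2$, $r=20$ gives $x=18$, $\lambda_e=13$ and $|\cN\setminus Q|\ge 0.7|\cN|$, but $18\cdot 0.7=12.6<13$.

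The shift $\alpha(k-1)$ only contributes an $O(1/r)$ correction, so it cannot absorb the per-element deficit $(\alpha-1)(f(e)-1)$. The loss comes from discarding the faster rank drop of elements with $f(e)>1$ when you reduce to $f(e)=1$. The paper keeps that gain:
\begin{itemize}
\item it uses \emph{uniform} contraction over $\{e:f(e)>0\}$;
\item it encodes the value profiles of $S$ and $Q$ as an LP whose optimal vertices use at most two values (Lemma~\ref{lemma:lp-analysis-1}), giving $P(r)\ge\min_{i\in[k]}\frac{r-\alpha(c+k)}{r+\alpha(i-k)}P(r-i)$;
\item it then checks each $i$ separately, using $\frac{H(r-i,k,\alpha)}{H(r,k,\alpha)}\ge 1+\frac{(c+1)\alpha i}{r-\alpha(c+k)}=\frac{r+\alpha(i-k)+\alpha c(i-1)}{r-\alpha(c+k)}$.
\end{itemize}

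\textbf{Your boundary plan also does not close.} First, for $k\ge 3$ your $\lambda_e$ can be negative: $r=\alpha(c+k)+1$ and $f(e)=k$ give $\lambda_e=2-k$. More importantly, the ``stronger'' hypothesis $P(r)\ge\binom{r-\alpha(k-1)}{(c+1)\alpha}^{-1}$ cannot be derived near the threshold. At $r=\alpha(c+k)+1$ with all $f(e)=k$, any weighting that depends only on $f(e)$ is uniform, and one contraction already ends the process. The quotient bound then only guarantees $\Pr[e\notin Q]\ge\frac{1}{\alpha(c+k)+1}$, which is smaller than $\binom{(c+1)\alpha+1}{(c+1)\alpha}^{-1}=\frac{1}{(c+1)\alpha+1}$ when $k>1$.

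Since the induction at larger $r$ feeds on these near-threshold values, the prefactor cannot be confined to a separately handled first step; it must be part of the inductive hypothesis at every level. In the paper the constant prefactor sits inside $H$ for every $r>\alpha(c+k)$, so it cancels in $H(r-i)/H(r)$. In the base case it gives $H(r)\le\frac{1}{\alpha(c+k)+1}$, because $\binom{r-\alpha(k-1)}{(c+1)\alpha}\ge(c+1)\alpha+1$. This matches the one-step bound $\frac{r-\alpha(c+k)}{r+\alpha(i-k)}\ge\frac{1}{\alpha(c+k)+1}$.
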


We first describe the uniform random contraction algorithm. Let $f:2^{\cN}\rightarrow \mathbb{Z}_{\geq 0}$ be the given polymatroid from a $c$-\sqbounded $k$-polymatroid family. Let $\alpha\geq 1$. If $f(\cN)\leq \alpha\cdot (c+k)$, then the algorithm returns the ground set $\cN$. Otherwise, among all elements $e\in \cN$ with $f(e)>0$, the algorithm samples one such element uniformly at random and recurses on the contracted polymatroid $f_e$. The pseudocode is given in Algorithm~\ref{algo:pseudocode-uniform-bounded-marginal-strong}.

\begin{algorithm2e}[H]
\caption{Uniform Random Contraction Algorithm}
\label{algo:pseudocode-uniform-bounded-marginal-strong}
\SetKwInput{KwInput}{Input}                
\SetKwInput{KwOutput}{Output}              
\LinesNumbered

\DontPrintSemicolon
  
    \KwInput{evaluation oracle access to polymatroid $f:2^{\cN}\rightarrow \Z_{\geq 0}$.
    }
    \KwOutput{a subset of the ground set $\cN$.}
    \SetKwFunction{Contract}{Contract}
    \SetKwProg{Fn}{Function}{:}{}
    
    \Fn{\Contract{$f$}}{

        \If{$f(\cN)\leq \alpha\cdot (c+k)$}{
            \KwRet $\cN$.
        }

        $S\gets \{e\in \cN: f(e)>0\}$. \label{pseudocode-kk-delete-strong}

        Sample element $e\in S$ uniformly at random.

        \KwRet $\Contract(f_e)$.
    }
\end{algorithm2e}

Let $\cN'\subseteq \cN$ be the subset returned by Algorithm~\ref{algo:pseudocode-uniform-bounded-marginal-strong}. We observe that the polymatroid obtained from $f$ by contracting $\cN\setminus \cN'$ has rank at most $\alpha\cdot (c+k)$.
Moreover, Algorithm~\ref{algo:pseudocode-uniform-bounded-marginal-strong} takes polynomial time since at each recursive call to \Contract$(f)$, the size of the ground set decreases by at least $1$. We now analyze the probability that a fixed $\alpha$-approximate minimum quotient is contained in the returned subset. For positive integers $i,k,\alpha\geq 1$, we define
\[
H(i,k,\alpha):=
\begin{cases}
    \frac{(c+1)\alpha+1}{(c+k)\alpha+1}\cdot \binom{i-\alpha(k-1)}{(c+1)\alpha}^{-1} & \text{ if }i>\alpha(c+k), \\
    1 & otherwise.
\end{cases}
\]
The following lemma will be used later. We defer the proof to Appendix~\ref{appendix:lp-analysis}.

\begin{restatable}{lemma}{lpanalysisstrong}\label{lemma:lp-analysis-1}
    Let $r, k, c, \alpha\geq 1$ be positive integers with $r>\alpha\cdot (c+k)$. Let $P:\mathbb{N}\rightarrow \mathbb{R}_{+}$ be a positive-valued function defined over the natural numbers. Then, the optimum value of the linear program~(\ref{lp-1}) defined below is at least $\min_{i=1}^{k} \frac{r-\alpha\cdot (c+k)}{r+\alpha\cdot (i-k)} \cdot P(r-i)$.
    \begin{equation}
    \label{lp-1}
    \begin{aligned}
        \min\quad &\sum_{i=1}^{k}(x_i-y_i)\cdot P(r-i) \\
        s.t.\quad & 0\leq y_i\leq x_i \quad \forall\ i \in [1, k] \\
            &\sum_{i=1}^{k}x_i=1\\
            &\sum_{i=1}^{k}y_i\leq \frac{\alpha}{r}\cdot \sum_{i=1}^{k}(i+c)\cdot x_i
    \end{aligned}
    \end{equation}
\end{restatable}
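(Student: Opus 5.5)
The plan is a direct weak-duality style argument: lower bound each coefficient $P(r-i)$ by a multiple of the target minimum, then use the LP constraints to show that the resulting linear expression is at least $1$ times that minimum.

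Set
\[
w_i:=\frac{r-\alpha(c+k)}{r+\alpha(i-k)}, \qquad m:=\min_{i=1}^{k} w_i\, P(r-i).
\]
Since $r>\alpha(c+k)\ge \alpha(k-i)$ for every $i\in[1,k]$, both the numerator and the denominator of $w_i$ are positive. Hence $P(r-i)\ge m/w_i = m\cdot\frac{r+\alpha(i-k)}{r-\alpha(c+k)}$ for every $i$.

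Take any feasible $(x,y)$ for LP~(\ref{lp-1}). Because $x_i-y_i\ge 0$, the objective satisfies
\[
\sum_{i=1}^{k}(x_i-y_i)P(r-i)\;\ge\;\frac{m}{r-\alpha(c+k)}\sum_{i=1}^{k}(x_i-y_i)\bigl(r+\alpha(i-k)\bigr).
\]
So it suffices to show $\sum_i (x_i-y_i)(r+\alpha(i-k))\ge r-\alpha(c+k)$.

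I split this sum into the $x$-part and the $y$-part.
\begin{itemize}
\item For the $y$-part, use $i\le k$ to get $r+\alpha(i-k)\le r$, together with $y_i\ge 0$ and the budget constraint:
\[
\sum_i y_i\bigl(r+\alpha(i-k)\bigr)\le r\sum_i y_i\le \alpha\sum_i (i+c)x_i .
\]
\item For the $x$-part, subtracting this bound gives
\[
\sum_i x_i\bigl(r+\alpha i-\alpha k-\alpha i-\alpha c\bigr)=\bigl(r-\alpha(c+k)\bigr)\sum_i x_i=r-\alpha(c+k),
\]
using $\sum_i x_i=1$.
\end{itemize}
This gives the required inequality, so every feasible point has objective at least $m$, and hence so does the optimum.

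There is no real obstacle here. The only points needing care are the sign conditions: positivity of $r+\alpha(i-k)$ and of $r-\alpha(c+k)$ (both from $r>\alpha(c+k)$), and $x_i-y_i\ge 0$, which justifies replacing $P(r-i)$ by its lower bound. The key idea is choosing the weights $w_i$ so that the $\alpha i$ terms cancel exactly against the $(i+c)$ in the budget constraint.
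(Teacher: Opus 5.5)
Your proof is correct, and it takes a different route from the paper. The paper argues through the structure of an optimal vertex of the LP. It first shows that the budget constraint $\sum_i y_i\le\frac{\alpha}{r}\sum_i(i+c)x_i$ must be tight, since otherwise some $y_i$ could be raised, strictly lowering the objective because $P>0$. It then counts constraints ($2k$ variables, $2k+2$ constraints) to deduce that at most two of the $2k$ constraints $0\le y_i\le x_i$ are slack. Finally it runs a four-case analysis on which constraints are slack, solving for $x$ explicitly in each case and bounding $x_iP(r-i)$ using $j\le k$.

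You instead give a single weak-duality certificate that holds at every feasible point. The weights $w_i$ are chosen so that the $\alpha i$ terms cancel against the $(i+c)$ in the budget constraint. In LP terms, $t=m$ on the equality $\sum_i x_i=1$, together with multiplier $\frac{rm}{r-\alpha(c+k)}$ on the budget constraint, is a feasible dual solution of value $m$.

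Your route is shorter and needs no case analysis. It also avoids a soft spot in the paper's write-up: the tight-constraint count is valid for a basic optimal solution, not for \emph{every} optimal solution as stated. In degenerate instances, for example $k=2$ with $P(r-2)=\frac{r}{r-\alpha}P(r-1)$, there are optimal points with three slack box constraints, so one must first pass to a vertex. The paper's route does yield extra structure, namely that some worst case is supported on at most two values of $f(e)$, but the lemma does not need it.

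One point to state explicitly: your last step multiplies $\sum_i(x_i-y_i)(r+\alpha(i-k))\ge r-\alpha(c+k)$ by $m/(r-\alpha(c+k))$, which requires $m\ge 0$. This holds because $P$ is positive and each $w_i>0$.

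The same certificate, with weights $\frac{r-c\alpha k}{r+c\alpha(i-k)}$, proves Lemma~\ref{lemma:lp-analysis-2} verbatim.
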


Lemma~\ref{lemma:uniform-random-contraction-strong} that is shown below implies Theorem~\ref{thm:uniform-random-contraction-strong}.

\begin{lemma}\label{lemma:uniform-random-contraction-strong}
    Let $\cF$ be a $c$-\sqbounded $k$-polymatroid family with $\alpha,k \geq 1$. Let $f:2^{\cN}\rightarrow \mathbb{Z}_{\geq 0}$ be a polymatroid in $\cF$ with rank $r>\alpha(c+k)$ and $q$ be the minimum size of non-empty quotients of $f$. Then, every $\alpha$-approximate minimum quotient $Q$ of $f$ is contained in the subset $\cN'$ returned by Algorithm~\ref{algo:pseudocode-uniform-bounded-marginal-strong} on input $f$ with probability at least $H(r,k,\alpha)$.
\end{lemma}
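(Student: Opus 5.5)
We argue by strong induction on the rank $r$, with the inductive claim being: for every $f\in\cF$ of rank $r$ and every $\alpha$-approximate minimum quotient $Q$ of $f$, the returned set $\cN'$ contains $Q$ with probability at least $H(r,k,\alpha)$. When $r\le\alpha(c+k)$ the algorithm returns $\cN$ and $H=1$, so the base case is immediate. In the inductive step let $S=\{e:f(e)>0\}$; since $Q$ is a quotient, every element of $Q$ lies in $S$. If the sampled $e\in S\setminus Q$, then contracting $e$ preserves $Q$ as a quotient of $f_e$. By Lemma~\ref{lemma:quotient-property}(4), the minimum quotient size of $f_e$ is at least that of $f$, so $Q$ remains an $\alpha$-approximate minimum quotient of $f_e$. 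Moreover $f_e\in\cF$ by minor-closedness, and $f_e$ has rank $r-f(e)$ with $f(e)\in\{1,\dots,k\}$. Hence
\[
\Pr[Q\subseteq\cN']\ \ge\ \frac{1}{|S|}\sum_{e\in S\setminus Q}P(r-f(e)),
\]
where $P(\cdot)=H(\cdot,k,\alpha)$ is the inductive bound. This requires checking that $H$ is nonincreasing in its first argument, which is clear from the binomial.

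Next we set up the linear program~(\ref{lp-1}). Group the elements of $S$ by their value $i=f(e)\in[1,k]$. Let $x_i$ be the fraction of $S$ with $f(e)=i$, and let $y_i$ be the fraction of $S$ lying in $Q$ with $f(e)=i$. Then $0\le y_i\le x_i$ and $\sum_i x_i=1$, and the probability bound above equals $\sum_i (x_i-y_i)P(r-i)$. The remaining constraint comes from the $c$-\sqbounded property. Some minimum quotient $Q^*$ has
\[
|Q^*|\le \frac{\sum_{e\in\cN}(f(e)+c)}{r}.
\]
Elements with $f(e)=0$ contribute $c$ each to this sum and so do not obviously fit the constraint. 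To avoid this, we would first delete those elements; deletion keeps us in the family, has the same rank, and has the same quotients since zero-rank elements lie in no quotient. We then apply the definition to $f_{\backslash(\cN\setminus S)}$, whose minimum quotient size is at least that of $f$ by Lemma~\ref{lemma:quotient-property}(1). This gives
\[
|Q|\le\alpha q\le \frac{\alpha}{r}\sum_{e\in S}(f(e)+c).
\]
Dividing by $|S|$ yields $\sum_i y_i\le\frac{\alpha}{r}\sum_i(i+c)x_i$, exactly the last LP constraint. Lemma~\ref{lemma:lp-analysis-1} then gives
\[
\Pr[Q\subseteq\cN']\ \ge\ \min_{1\le i\le k}\frac{r-\alpha(c+k)}{r+\alpha(i-k)}\,H(r-i,k,\alpha).
\]

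The main obstacle is verifying that, for every $i\in[1,k]$, we have $\frac{r-\alpha(c+k)}{r-\alpha(k-i)}H(r-i,k,\alpha)\ge H(r,k,\alpha)$. Write $m=(c+1)\alpha$ and $N=r-\alpha(k-1)$, so that $H(r)=\frac{m+1}{(c+k)\alpha+1}\binom{N}{m}^{-1}$. We split into two cases.

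In the first case $r-i>\alpha(c+k)$, and $H(r-i)=\frac{m+1}{(c+k)\alpha+1}\binom{N-i}{m}^{-1}$. The ratio $\binom{N}{m}/\binom{N-i}{m}=\prod_{j=0}^{i-1}\frac{N-j}{N-j-m}$ must be compared against $\frac{r-\alpha(k-i)}{r-\alpha(c+k)}$. We plan to use $i\le k$ and $\alpha\ge1$ to telescope or bound each factor, and expect that the $\alpha$-scaling in the definition makes the products match. The worst case is $i=1$, where the needed inequality is $\frac{N}{N-m}\le \frac{r-\alpha(k-1)}{r-\alpha(c+k)}$, which is an equality, so the case $i=1$ is tight.

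In the second case $r-i\le\alpha(c+k)$, so $H(r-i)=1$. Here we instead show directly that
\[
\frac{r-\alpha(c+k)}{r-\alpha(k-i)}\ \ge\ \frac{m+1}{(c+k)\alpha+1}\binom{N}{m}^{-1}.
\]
This uses $r\in(\alpha(c+k),\alpha(c+k)+k]$ and $r-\alpha(c+k)\ge1$, together with the fact that $\binom{N}{m}$ grows at least like $N-m+1$. We believe this calculation is where the prefactor $\frac{(c+1)\alpha+1}{(c+k)\alpha+1}$ in $H$ is designed to make the bound go through.
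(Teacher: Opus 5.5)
Your skeleton matches the paper's: induction on the rank, the reduction to LP~(\ref{lp-1}) via $x_i,y_i$, Lemma~\ref{lemma:lp-analysis-1}, and the split on whether $r-i>\alpha(c+k)$. Your detour through $f_{\backslash(\cN\setminus S)}$, to justify summing only over $S$, is more careful than the paper, which writes the bound over $S$ directly. One slip there: Lemma~\ref{lemma:quotient-property}(1) gives the opposite inequality to the one you cite. What you need is that every quotient of the deletion is a quotient of $f$, which holds because $f(A\cup Z)=f(A)$ for a set $Z$ of rank-zero elements.

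The real gap is the final step, which is the only place the specific form of $H$ matters, and you leave it as a plan. In the first case you check only $i=1$, and you state the inequality in the wrong direction: the binomial ratio must be \emph{at least} $\frac{r-\alpha(k-i)}{r-\alpha(c+k)}$. The general case is short. Put $r'=r-\alpha(c+k)=N-m$. Each factor satisfies $\frac{N-j}{N-j-m}=1+\frac{m}{r'-j}\ge 1+\frac{m}{r'}$, so the product is at least $1+\frac{im}{r'}$. Since $im-\alpha(c+i)=\alpha c(i-1)\ge 0$, this gives $1+\frac{im}{r'}\ge \frac{r'+\alpha(c+i)}{r'}=\frac{r-\alpha(k-i)}{r-\alpha(c+k)}$.

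In the second case, the estimate you propose, $\binom{N}{m}\ge N-m+1$, is too weak and the step would fail. Take $r=\alpha(c+k)+1$ and $i=k$, which does fall in this case. There the desired inequality holds with equality, both sides being $\frac{1}{\alpha(c+k)+1}$, and this needs the exact value $\binom{N}{m}=\binom{m+1}{m}=m+1$. Your bound only gives $N-m+1=2$, which is less than $m+1$ because $m=(c+1)\alpha\ge 2$.

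The argument that works has two parts:
\begin{itemize}
\item From $r\ge \alpha(c+k)+1$ you get $r-\alpha(k-i)\ge \alpha(c+i)+1$. Hence $\frac{r-\alpha(c+k)}{r-\alpha(k-i)}=1-\frac{\alpha(c+i)}{r-\alpha(k-i)}\ge \frac{1}{\alpha(c+i)+1}\ge\frac{1}{\alpha(c+k)+1}$.
\item From $N\ge m+1$ you get $\binom{N}{m}\ge m+1$, so $H(r,k,\alpha)\le \frac{1}{\alpha(c+k)+1}$.
\end{itemize}
With these two computations filled in, your proof coincides with the paper's.
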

\begin{proof}
    For every positive integer $r$, let $P(r)$ be the infimum over all $f\in \cF$ with rank $r$ and $f(e)\leq k$ for every element $e$ in the ground set of $f$ and all $\alpha$-approximate minimum quotient $Q$ of $f$, of the probability that $Q$ is contained in the subset $\cN'$ returned by $\Contract(f)$. We now prove that $$P(r)\geq H(r,k,\alpha)$$ by induction or $r$. We recall that $f(e)\leq k$ for every $e\in \cN$.
    
    For the base case, we consider $r\leq \alpha\cdot (c+k)$. Algorithm~\ref{algo:pseudocode-uniform-bounded-marginal-strong} returns the whole ground set $\cN$, which implies that $P(r)=1\geq H(r,k,\alpha)$.
    Next, we prove the induction step. Let $q$ be the minimum size of non-empty quotients of $f$ and $|Q|\leq \alpha\cdot q$. Let $S$ be the set obtained in Line~\ref{pseudocode-kk-delete-strong}, where $f(e)\geq 1$ for every $e\in S$. We observe that $Q\subseteq S$, since every element $e\in \cN$ with $f(e)=0$ cannot be in the quotient. For every integer $1\leq i \leq k$, we define
    $$x_i:=\frac{|\{e\in S: f(e)=i\}|}{|S|} \text{ \ and \ } y_i:=\frac{|\{e\in Q: f(e)=i\}|}{|S|},$$
    which further shows that $0\leq y_i\leq x_i$. Moreover, since $f\in \cF$, which is a $c$-\sqbounded polymatroid family, we have 
    $$\begin{aligned}
        q \leq \sum_{e\in S}\frac{f(e)+c}{r}.
    \end{aligned}$$
    This further shows that
    $$\begin{aligned}
        \sum_{i=1}^{k}y_i & = \frac{|Q|}{|S|} \leq \frac{\alpha\cdot q}{|S|} \leq \frac{\alpha}{r} \cdot \sum_{e\in S}\frac{f(e)+c}{|S|} = \frac{\alpha}{r} \cdot \sum_{i=1}^{k}(i+c)\cdot x_i.
    \end{aligned}$$
    We observe that Algorithm~\ref{algo:pseudocode-uniform-bounded-marginal-strong} contracts an element $e\in S$, which is sampled uniform at random, and recurses on the contracted polymatroid $f_e$. By Lemma~\ref{lemma:quotient-property}, the minimum non-empty quotient of $f_e$ still has size at least $q$.
    Thus, if $e\not\in Q$, $Q$ is still a $\alpha$-approximate minimum quotient in $f_e$, and moreover, the rank of $f_e$ is $r-f(e)$ and hence, the set $Q$ is contained in the subset returned by the algorithm executed on $f_e$ with probability at least $P(r-f(e))$. The quotient $Q$ is contained in the subset returned by the algorithm if (i) the sampled element $e$ is not from $Q$ and (ii) $Q$ is contained in the subset returned by the algorithm when executed on the contracted polymatroid $f_e$. This implies that
    $$\begin{aligned}
        P(r)&\geq \sum_{i=1}^{k}\Pr[f(e)=i \text{ \ and \ } e\notin Q ] \cdot P(r-i) \\
        &= \sum_{i=1}^{k}(x_i-y_i)\cdot P(r-i).
    \end{aligned}$$
    Therefore, $P(r)$ is at least the optimum value of the following linear program:
    \begin{equation}
    \label{kogan-lp-strong}
    \begin{aligned}
        \min\quad &\sum_{i=1}^{k}(x_i-y_i)\cdot P(r-i) \\
        s.t.\quad & 0\leq y_i\leq x_i \quad \forall\ i \in [1, k] \\
            &\sum_{i=1}^{k}x_i=1\\
            &\sum_{i=1}^{k}y_i\leq \frac{\alpha}{r}\cdot \sum_{i=1}^{k}(i+c)\cdot x_i
    \end{aligned}
    \end{equation}
    
    By Lemma~\ref{lemma:lp-analysis-1}, there exists an index $1\leq i \leq k$ with
    $$P(r) \geq \frac{r-\alpha\cdot (c+k)}{r+\alpha\cdot (i-k)} \cdot P(r-i).$$
    
    If $r-i> \alpha\cdot (c+k)$, then by defining $r':=r-\alpha\cdot (c+k)$, we have
    $$\begin{aligned}
        \frac{H(r-i, k, \alpha)}{H(r, k, \alpha)} &= \frac{\binom{r'+(c+1)\alpha}{(c+1)\alpha}}{\binom{r'-i+(c+1)\alpha}{(c+1)\alpha}} = \frac{(r'+(c+1)\alpha)\ldots(r'-i+(c+1)\alpha+1)}{r'\ldots (r'-i+1)} \\
        &\geq \left(1+\frac{(c+1)\alpha}{r'}\right)^i \geq 1+\frac{(c+1)\alpha}{r'}\cdot i\\
        &= \frac{r+\alpha\cdot (i-k)+\alpha\cdot c\cdot (i-1)}{r-\alpha\cdot (c+k)} \\
        &\geq \frac{r+\alpha\cdot (i-k)}{r-\alpha\cdot (c+k)},
    \end{aligned}$$
    which further shows that
    $$\begin{aligned}
        P(r) &\geq \frac{r-\alpha\cdot (c+k)}{r+\alpha\cdot (i-k)} \cdot P(r-i) \\
        &\geq \frac{r-\alpha\cdot (c+k)}{r+\alpha\cdot (i-k)} \cdot H(r-i, k, \alpha)\ \ \text{(by induction hypothesis)}\\
        &\geq H(r, k, \alpha).
    \end{aligned}$$

    If $r-i\leq\alpha \cdot (c+k)$, then $P(r-i)=1$ and we have
    $$\begin{aligned}
        P(r) &\geq \frac{r-\alpha\cdot (c+k)}{r+\alpha\cdot (i-k)} \cdot P(r-i) = \frac{r-\alpha\cdot (c+k)}{r+\alpha\cdot (i-k)}\\
        &= 1 - \frac{\alpha\cdot (i+c)}{r+\alpha\cdot (i-k)} \geq 1 - \frac{\alpha\cdot (i+c)}{1+\alpha\cdot (i+c)} \ \ \text{(since $r\geq \alpha \cdot (c+k) +1$)} \\
        &\geq \frac{1}{\alpha\cdot (k+c) +1} \geq \frac{1}{\alpha\cdot (k+c) +1}\cdot \frac{(c+1)\alpha+1}{\binom{r-\alpha(k-1)}{(c+1)\alpha}} = H(r, k, \alpha),
    \end{aligned}$$
    where the last inequality is because $r\geq \alpha \cdot (c+k) +1$ and $\binom{r-\alpha(k-1)}{(c+1)\alpha}\geq \binom{(c+1)\alpha+1}{(c+1)\alpha}=(c+1)\alpha+1$.

\end{proof}

For $c$-\qbounded $k$-polymatroid families, we obtain similar results; see Appendix~\ref{section:uniform-contraction-bounded-marginal-appendix} for details.

%% file: quotient-bounded-instance.tex
\section{Graphs, Hypergraphs, and Hedgegraphs}\label{section:quotient-bounded-instance}
In this section, we discuss graphic matroids, hypergraphic polymatroids, and hedgegraph polymatroids. We first show that if a hedgegraph has span at most $s$, then the corresponding hedgegraph polymatroid is $s$-\sqbounded. We formalize it as follows.

\begin{lemma}\label{lemma:hedgegraph-sqbounded}
    For a hedgegraph $G$, let $f_G$ denote the hedgegraph polymatroid associated with $G$. Let $\mathcal{G}_s$ denote the collection of hedgegraphs of span at most $s$ and let $\mathcal{F}_s:=\{f_G: G\in \mathcal{G}_s\}$. Then, the family $\mathcal{F}_s$ is minor-closed and moreover, $\mathcal{F}_s$ is $s$-\sqbounded.
\end{lemma}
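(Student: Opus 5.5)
We prove the two claims separately: first that $\mathcal{F}_s$ is minor-closed, then that every $f_G\in\mathcal{F}_s$ has a small quotient coming from a vertex-degree averaging argument.

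\textbf{Minor-closedness.} We realize deletion and contraction of a hedge as hedgegraph operations that do not increase span.

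Deleting a hedge $e$ from $f_G$ gives exactly $f_{G-e}$, where $G-e$ removes $e$ from $E$. The span of the remaining hedges is unchanged, so $G-e$ still has span at most $s$.

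For contraction, let $G/e$ be the hedgegraph obtained by contracting every hyperedge of $e$, i.e., identifying the vertices of each connected component of $(V,e)$. Each remaining hedge is replaced by its image, with hyperedges that become singletons discarded. Then $\Comp$ of $(V,A\cup e)$ equals $\Comp$ of $(V/e, A)$. Since $|V/e|=|V|-f_G(e)$, we get
$$f_{G/e}(A)=|V|-f(e)-\Comp(V,A\cup e)=f_G(A+e)-f_G(e),$$
so $f_{G/e}=(f_G)_e$.

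Span does not increase under this operation. The components of a hedge $h$ in $G/e$ are images of unions of components of $h$ in $G$, because identifying vertices can only merge components. Hence the number of vertex-disjoint components of $h$ cannot grow. Induction on the number of operations then gives closure under all minors.

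\textbf{Strong quotient-boundedness.} Since minors stay in the class, it suffices to treat an arbitrary $f=f_G$ with $G\in\mathcal{G}_s$. We may assume $r=f(E)\ge 1$, so the hypergraph of all hyperedges has $p:=|V|-r$ components with $p\le |V|-1$.

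For each hedge $e$, let $\kappa(e)$ be its number of components, i.e., the components of $(V,e)$ that contain at least two vertices. Let $V(e)$ be the set of vertices covered by $e$. Then
$$f(e)=|V(e)|-\kappa(e)\ge |V(e)|-s.$$

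For a vertex $v$, let $\deg(v)$ be the number of hedges $e$ with $v\in V(e)$. The set $\delta(v)$ of these hedges is a quotient candidate: removing them isolates $v$, so $f(E\setminus\delta(v))<f(E)$ provided $v$ is not already an isolated component. A quotient is contained in $E\setminus\mathrm{span}(E\setminus\delta(v))$, which has size at most $\deg(v)$, and this set is non-empty whenever $E\setminus\delta(v)$ has rank less than $r$.

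Averaging over the non-isolated vertices $V'$ gives
$$\sum_{v\in V'}\deg(v)=\sum_e|V(e)|\le\sum_e(f(e)+s).$$
So some vertex has $\deg(v)\le \sum_e(f(e)+s)/|V'|$.

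It remains to show $|V'|\ge r$. Each non-trivial component $C$ of the whole hypergraph contributes $|C|-1$ to $r$ and $|C|$ to $V'$, so $|V'|\ge r+(\text{number of non-trivial components})\ge r$. This yields
$$|Q|\le \sum_e(f(e)+s)/r,$$
as required.

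\textbf{Main obstacle.} The delicate points are the two quotient details.

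First, the chosen vertex $v$ must actually give a non-empty quotient. If $v$ lies in a non-trivial component, then $\delta(v)\neq\emptyset$ and removing it makes $v$ isolated, which strictly increases the number of components. So the rank drops and the closure of $E\setminus\delta(v)$ misses some hedge of $\delta(v)$.

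Second, the span bound $f(e)\ge|V(e)|-s$ must use that components of $e$ are vertex-disjoint and number at most $s$. This holds by the definition of span.

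I expect the contraction-preserves-span verification to be the fiddliest step, since it requires care with degenerate hyperedges that collapse to single vertices.
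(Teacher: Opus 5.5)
Your proposal is correct and follows essentially the same route as the paper: it shows minor-closedness by observing that hedgegraph deletion and contraction do not increase span, and it proves the quotient bound by the average-degree argument, using $f(e)\ge |e|-s$, $f(E)\le |V\setminus I|$, and the hedges at a low-degree non-isolated vertex. The differences are cosmetic. You pass to $E\setminus \mathrm{span}(E\setminus\delta(v))$, whereas the paper asserts that $\delta(v)$ is itself a quotient; that also holds, since $v$ is isolated in $E\setminus\delta(v)$ and every hedge of $\delta(v)$ reconnects it. You also explicitly verify $f_{G/e}=(f_G)_e$ and that contraction does not increase span, which the paper only asserts.
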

\begin{proof}
    Let $G=(V, E)$ and $f_G$ be the hedgegraph polymatroid associated with $G$. Then, $f_{\backslash e}$ and $f_e$ are polymatroids associated with the hedgegraph $G_{\backslash e}$ obtained by deleting $e$ from $G$ and the hedgegraph $G_e$ obtained by contracting $e$ in $G$ respectively. Moreover, $G_{\backslash e}$ and $G_e$ have span at most $s$. This implies that $\cF_s$ is minor-closed. We now prove that $\cF_s$ is $s$-\sqbounded.

    Let $G=(V,E) \in \mathcal{G}_s$. If $f_G(E)=0$, then $f_G$ is the empty polymatroid and it trivially satisfies the $s$-\sqbounded property. Henceforth, we assume that $f_G(E)\ge 1$. Let $I\subsetneq V$ be the isolated vertices in $G$. We recall that for every $A\subseteq E$, $f_G(A)=|V|-\Comp(V,A)$, where $\Comp(V,A)$ is the number of connected components in the hypergraph $(V,\{h: h\in e\text{ for some }e\in A\})$. Thus, $f_G(E)=|V|-\Comp(V,E)\leq |V\setminus I|$.
    Moreover, for every hedge $e\in E$, we have $f_G(e) \geq |e|-s$, where $|e|$ is the number of vertices contained in $e$.
    Therefore, we have
    $$\frac{\sum_{e\in E}(f_G(e)+s)}{f_G(E)}\geq \frac{\sum_{e\in E}|e|}{f_G(E)}\geq \frac{\sum_{e\in E}|e|}{|V\setminus I|}.$$
    We observe that the ratio on the right hand side equals the average degree of vertices in $V\setminus I$, which is at least the size of minimum non-empty minimum quotient of $f_G$, since there is a vertex $v\in V\setminus I$ with degree at most the average degree and the hedges incident to $v$ form a valid quotient. Consequently, $\cF_s$ is $s$-\sqbounded.
\end{proof}

For a graph $G$, let $r_G$ denote the graphic matroid associated with $G$. Let $\mathcal{G}_{\text{graph}}$ denote the collection of graphs and $\cF_{\text{graph}}:=\{r_G:G\in \mathcal{G}_{\text{graph}}\}$. For a hypergraph $G$, let $f_G$ denote the hypergraphic polymatroid associated with $G$. Let $\mathcal{G}_{\text{hypergraph}}$ denote the collection of hypergraphs and $\cF_{\text{hypergraph}}:=\{f_G:G\in\mathcal{G}_{\text{hypergraph}}\}$. We observe that graphs and hypergraphs are also hedgegraphs with span at most $1$. Hence, $\cF_{\text{graph}}$ and $\cF_{\text{hypergraph}}$ are $1$-\sqbounded.

\begin{corollary}\label{lemma:hypergraph-1-strong-bounded}
    $\cF_{\text{graph}}$ and $\cF_{\text{hypergraph}}$ are $1$-\sqbounded.
\end{corollary}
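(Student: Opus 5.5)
The corollary is a direct specialization of Lemma~\ref{lemma:hedgegraph-sqbounded} with $s=1$. I would show that graphs and hypergraphs embed in $\mathcal{G}_1$ in a way that preserves the associated polymatroid, and that each of the two families is closed under minors.

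\textbf{Step 1: embedding into span-one hedgegraphs.} Given a hypergraph $H=(V,E)$, build the hedgegraph $G_H=(V,E')$ in which each hyperedge $e$ becomes the hedge $\{e\}$ consisting of the single hyperedge $e$. A hedge with one hyperedge has exactly one component, so $G_H$ has span at most $1$ and lies in $\mathcal{G}_1$. For every $A\subseteq E$, the hypergraph $(V,\{h: h\in e' \text{ for some } e'\in A\})$ is exactly $(V,A)$. Therefore $f_{G_H}(A)=|V|-\Comp(V,A)=f_H(A)$, and $\cF_{\text{hypergraph}}\subseteq \cF_1$. A graph is a hypergraph whose hyperedges have size two, and its hypergraphic polymatroid is the graphic matroid rank function $r_G$. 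Hence $\cF_{\text{graph}}\subseteq\cF_{\text{hypergraph}}\subseteq\cF_1$.

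\textbf{Step 2: the quotient-size inequality.} The inequality $|Q|\le \sum_{e}(f(e)+1)/f(\cN)$ is a property of each individual polymatroid. Every member of $\cF_{\text{graph}}$ and $\cF_{\text{hypergraph}}$ lies in $\cF_1$, so each inherits it from Lemma~\ref{lemma:hedgegraph-sqbounded}.

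\textbf{Step 3: minor-closedness, the only real work.} The definition requires the family itself to be minor-closed, and being a subfamily of a minor-closed family is not enough. I would check deletion and contraction directly. Deleting a hyperedge $e$ gives the hypergraph $H\setminus e$, and $(f_H)_{\backslash e}=f_{H\setminus e}$. Contracting $e$ means merging the vertices of $e$ into a single vertex, giving a hypergraph $H/e$ with $|V|-|e|+1$ vertices, where the other hyperedges are replaced by their images. The same component-count argument as for hedgegraphs shows $f_H(S+e)-f_H(e)=f_{H/e}(S)$.

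The one subtlety is that an image hyperedge may collapse to a single vertex, or an image edge may become a loop. Such an element has $f=0$, and it remains a legitimate hypergraph or multigraph element, so the families stay closed under contraction. Iterating deletions and contractions covers all minors. With Steps 1 and 2 this completes the proof for both families.
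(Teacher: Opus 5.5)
Your proof is correct and follows the paper's route. The paper's entire argument is the observation that graphs and hypergraphs are hedgegraphs of span at most $1$, so Lemma~\ref{lemma:hedgegraph-sqbounded} with $s=1$ applies; your Step~3, which checks directly that the graph and hypergraph families are themselves closed under deletion and contraction, supplies a detail that this one-line argument leaves implicit.
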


We now prove that hedgegraph polymatroid family is $2$-\qbounded.

\begin{lemma}\label{lemma:hedgegraph-2qbounded}
    For a hedgegraph $G$, let $f_G$ denote the hedgegraph polymatroid associated with $G$. Let $\mathcal{G}_{\text{hedgegraph}}$ denote the family of hedgegraphs and let $\mathcal{F}_{\text{hedgegraph}}:=\{f_G: G\in \mathcal{G}_{\text{hedgegraph}}\}$. Then, the family $\mathcal{F}_{\text{hedgegraph}}$ is minor closed and moreover, $\mathcal{F}_{\text{hedgegraph}}$ is $2$-\qbounded.
\end{lemma}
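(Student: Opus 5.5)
Minor-closedness goes exactly as in the proof of Lemma~\ref{lemma:hedgegraph-sqbounded}. Deleting a hedge $e$ gives the hedgegraph polymatroid of $G_{\backslash e}$, and contracting $e$ gives the polymatroid of the hedgegraph $G_e$ obtained by merging the vertices of each component of $e$. Both are hedgegraphs, so both polymatroids lie in $\mathcal{F}_{\text{hedgegraph}}$. For $2$-\qbounded, take $G=(V,E)$ with $f_G(E)\ge 1$ and let $I$ be the set of isolated vertices. As before, $f_G(E)\le |V\setminus I|-1$, and the hedges incident to any non-isolated vertex $v$ form a non-empty quotient. Hence $q\le \sum_{e\in E}|e|/|V\setminus I|$, where $|e|$ is the number of vertices covered by $e$. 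It therefore suffices to show $\sum_e |e| \le 2\sum_e f_G(e)$ up to the ratio between $|V\setminus I|$ and $f_G(E)$.

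The key inequality is hedge-wise. The hyperedges of $e$ on its $|e|$ vertices form components each of size at least $2$, so the number of components is at most $|e|/2$. Therefore $f_G(e)=|e|-\Comp(e)\ge |e|/2$, that is, $|e|\le 2f_G(e)$. Summing gives
\[
q\le \frac{\sum_e|e|}{|V\setminus I|}\le \frac{2\sum_e f_G(e)}{|V\setminus I|}\le \frac{2\sum_e f_G(e)}{f_G(E)},
\]
using $f_G(E)\le|V\setminus I|$. This is exactly the $2$-\qbounded condition.

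I expect no real obstacle. The only points to check are the convention that hyperedges have at least two vertices, or equivalently that singleton hyperedges are discarded because they contribute nothing to $f_G$, and the fact that $f_G(E)\le |V\setminus I|$. The latter holds because each isolated vertex is its own component and at least one further component remains.
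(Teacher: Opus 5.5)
Your proof is correct and follows essentially the same route as the paper. Both arguments bound $q$ by the average degree $\sum_e|e|/|V\setminus I|$ over non-isolated vertices, use $f_G(E)\le |V\setminus I|$, and apply the hedge-wise bound $f_G(e)=|e|-\text{span}(e)\ge |e|/2$. Your explicit remark that singleton hyperedges can be discarded without changing $f_G$ makes precise an assumption that the paper's step $|e|-\text{span}(e)\ge|e|/2$ uses only implicitly.
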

\begin{proof}
        Let $G=(V, E)$ and $f_G$ be the hedgegraph polymatroid associated with $G$. Then, $f_{\backslash e}$ and $f_e$ are polymatroids associated with the hedgegraph $G_{\backslash e}$ obtained by deleting $e$ from $G$ and the hedgegraph $G_e$ obtained by contracting $e$ in $G$ respectively. This implies that $\cF_{\text{hedgegraph}}$ is minor-closed. We now prove that $\cF_{\text{hedgegraph}}$ is $2$-\sqbounded.
 
    Let $G=(V,E)\in \mathcal{G}$. If $f_G(E)=0$, then $f_G$ is the empty polymatroid and it trivially satisfies the $2$-\qbounded property. Let $I\subsetneq V$ be the isolated vertices in $G$. We recall that every $A\subseteq E$, $f_G(A)=|V|-\Comp(V,A)$. Thus, $f_G(E)=|V|-\Comp(V,E)\leq |V\setminus I|$.
    Moreover, for every hedge $e\in E$, we have $f_G(e) \geq |e|-\text{span}(e)\geq |e|/2$, where $|e|$ is the number of vertices contained in $e$ and $\text{span}(e)$ is the number of connected components of $e$.
    Therefore, we have
    $$2\cdot \frac{\sum_{e\in E}f_G(e)}{f_G(E)}\geq  \frac{\sum_{e\in E}|e|}{f_G(E)}\geq \frac{\sum_{e\in E}|e|}{|V\setminus I|}.$$
    We observe that the ratio on the right hand side equals the average degree of vertices in $V\setminus I$, which is at least the size of non-empty minimum quotients of $f_G$, since there is a vertex $v\in V\setminus I$ with degree at most the average degree and the hedges incident to $v$ form a valid quotient. This further implies that $f_G$ is $2$-\qbounded.
\end{proof}

Since hedgegraphs capture graphs and hypergraphs, Lemma~\ref{lemma:hedgegraph-2qbounded} implies that $\cF_{\text{graph}}$ and $\cF_{\text{hypergraph}}$ are $2$-\qbounded.

\begin{corollary}
    $\cF_{\text{graph}}$ and $\cF_{\text{hypergraph}}$ are $2$-\qbounded.
\end{corollary}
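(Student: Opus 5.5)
The statement is a corollary of Lemma~\ref{lemma:hedgegraph-2qbounded}, so I will argue that graphs and hypergraphs sit inside the hedgegraph family as subfamilies closed under minors. Then the $2$-\qbounded inequality, which quantifies over every member of a minor-closed family, is inherited.

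\textbf{Embedding.} First I make the embedding explicit. A hypergraph $H=(V,E)$ is the hedgegraph on $V$ in which each hyperedge $e$ becomes the single-hyperedge hedge $\{e\}$. The hedgegraph polymatroid $f(A)=|V|-\Comp(V,A)$ then coincides with the hypergraphic polymatroid of $H$, because the hypergraph $(V,\{h: h\in e \text{ for some } e\in A\})$ is exactly $(V,A)$. Graphs are the special case of hypergraphs with all hyperedges of size $2$, and $r_G$ is the corresponding rank function. Hence $\cF_{\text{graph}}\subseteq \cF_{\text{hypergraph}}\subseteq \cF_{\text{hedgegraph}}$.

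\textbf{Minor-closure.} Next I check that $\cF_{\text{graph}}$ and $\cF_{\text{hypergraph}}$ are themselves minor-closed, since Definition~\ref{definition:bicriteria-ratio} requires this. The argument is the same as in the proof of Lemma~\ref{lemma:hedgegraph-2qbounded}:
\begin{itemize}
\item Deleting a hyperedge gives the polymatroid of the hypergraph with that hyperedge removed.
\item Contracting a hyperedge $e$ gives the polymatroid of the hypergraph obtained by identifying the vertices of $e$. After contraction, each remaining hyperedge is still a single hyperedge, possibly of size $1$, which contributes rank $0$. So the result is again a hypergraph polymatroid.
\item For graphs, contraction can produce loops, which are harmless. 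Parallel edges are allowed since $G$ is a multigraph, so the class of graphic matroids stays closed under contraction as well.
\end{itemize}

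\textbf{Inheriting the bound.} Finally, every $f\in\cF_{\text{hypergraph}}$ lies in $\cF_{\text{hedgegraph}}$. By Lemma~\ref{lemma:hedgegraph-2qbounded} it therefore has a non-empty quotient $Q$ with $|Q|\le 2\sum_e f(e)/f(\cN)$, and the same holds for graphs.

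Alternatively, I can rerun the degree-averaging argument directly, since $f(e)=|e|-1\ge |e|/2$ for every $|e|\ge 2$. Hyperedges of size $1$ have $f(e)=0$; they belong to no quotient, so they can be discarded first. The only mild care point in the whole argument is these rank-$0$ elements and the empty polymatroid, which the definition's footnote covers.
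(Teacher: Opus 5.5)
Your proposal is correct and follows the paper's route: the paper derives this corollary in one line from Lemma~\ref{lemma:hedgegraph-2qbounded}, viewing graphs and hypergraphs as hedgegraphs. Your explicit check that $\cF_{\text{graph}}$ and $\cF_{\text{hypergraph}}$ are themselves minor-closed is a sensible addition the paper leaves implicit, as is your handling of size-$1$ hyperedges and loops (where $f(e)=|e|-1\ge |e|/2$ fails, a case the lemma's proof does not address explicitly).
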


We recall that every hyperedge in a $k$-hypergraph contains at most $k$ vertices. We now recover the known bound on the number of $\alpha$-approximate minimum cuts in a $k$-hypergraph using Theorem \ref{thm:uniform-random-contraction-strong}. For a connected hypergraph $G=(V, E)$, a subset $\emptyset\neq S\subsetneq V$ is an $\alpha$-approximate min-cut in $G$ if $|\delta_G(S)|\le \alpha \min\{|\delta_G(T)|: \emptyset\neq T\subsetneq V\}$. 
\begin{corollary}
    For every connected $k$-hypergraph $G$, the number of $\alpha$-approximate min-cuts in $G$ is $O(2^{\alpha k}\cdot k\cdot n^{2\alpha})$, where $n$ denotes the number of vertices in $G$.
\end{corollary}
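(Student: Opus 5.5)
We will apply Theorem~\ref{thm:uniform-random-contraction-strong} to the hypergraphic polymatroid $f_G$ of $G=(V,E)$, with $n=|V|$. By Corollary~\ref{lemma:hypergraph-1-strong-bounded}, hypergraphic polymatroids form a $1$-\sqbounded family. Since every hyperedge has at most $k$ vertices, $f_G(e)=|e|-1\le k-1$. Hence, restricting to $k$-hypergraphs (a minor-closed subfamily, since deletion and contraction do not increase hyperedge sizes), we obtain a $1$-\sqbounded $(k-1)$-polymatroid family. We therefore invoke the theorem with $c=1$ and $k$ replaced by $k-1$. Because $G$ is connected, $r=f_G(E)=n-1$, and the $\alpha$-approximate min-cuts $\delta_G(S)$ are exactly the $\alpha$-approximate minimum non-empty quotients of $f_G$. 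Distinct cut-sets $\delta_G(S)$ correspond to at most two vertex sets, since $S$ and $V\setminus S$ give the same cut. So it suffices to count $\alpha$-approximate minimum quotients, losing at most a factor of $2$. I will treat $\alpha$ as making the quantities below integers (e.g.\ $(c+1)\alpha$), as the theorem implicitly does, or round appropriately.

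The theorem gives the following. The algorithm returns $\cN'$ such that contracting $E\setminus\cN'$ yields rank at most $\alpha(1+k-1)=\alpha k$. Each fixed $\alpha$-approximate min-quotient $Q$ survives, meaning $Q\subseteq\cN'$, with probability at least
\[
p:=\frac{2\alpha+1}{k\alpha+1}\binom{n-1-\alpha(k-2)}{2\alpha}^{-1}\ \ge\ \Omega\!\left(\frac{1}{k\,n^{2\alpha}}\right),
\]
using $\binom{m}{2\alpha}\le m^{2\alpha}\le n^{2\alpha}$ and $\frac{2\alpha+1}{k\alpha+1}\ge \frac{1}{k}$ up to constants.

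The key step is to show that, conditioned on $Q\subseteq \cN'$, the quotient $Q$ is determined by one of few candidates depending only on $\cN'$. Contracting $E\setminus\cN'$ in the hypergraphic polymatroid corresponds to contracting those hyperedges in $G$, giving a hypergraph $G'$ whose polymatroid has rank $|V(G')|-\Comp(G')\le \alpha k$. Since $Q\cap (E\setminus\cN')=\emptyset$ and $Q=\delta_G(S)$, every contracted hyperedge lies inside $S$ or inside $V\setminus S$. Hence $S$ is a union of vertices of $G'$, and $Q$ is a cut of $G'$. I need $G'$ to have at most $\alpha k+1$ vertices. Connectivity of $G$ is preserved under contraction, so $\Comp(G')=1$ and $|V(G')|\le\alpha k+1$. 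Thus $G'$ has at most $2^{\alpha k+1}$ vertex subsets, and $Q$ is one of at most $2^{\alpha k+1}$ sets determined by $\cN'$.

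Finally, a standard probabilistic counting finishes the proof. Let $X$ be the number of $\alpha$-approximate min-quotients contained in $\cN'$ (equivalently, cuts of $G'$ that are such quotients), so $X\le 2^{\alpha k+1}$ always. Taking expectations, $\sum_Q \Pr[Q\subseteq\cN']=\E[X]\le 2^{\alpha k+1}$, while each term is at least $p$. Therefore the number of $\alpha$-approximate min-quotients is at most $2^{\alpha k+1}/p=O(2^{\alpha k}\,k\,n^{2\alpha})$, and doubling for the $S$ versus $V\setminus S$ ambiguity gives the claim.

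The main obstacle is the step where cuts surviving in $\cN'$ are identified with vertex cuts of the small contracted hypergraph $G'$, rather than using the weaker $|\cN'|^{\alpha k}$ bound of Lemma~\ref{lemma:quotient-counting}. This requires checking that polymatroid contraction matches hypergraph contraction, and that the rank bound translates into a vertex-count bound via connectivity. The binomial estimate for $p$ is routine.
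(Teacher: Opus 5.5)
Your overall route is the paper's. You apply Theorem~\ref{thm:uniform-random-contraction-strong} to $\cF_k$ with $c=1$ and $k-1$ in place of $k$, which gives the same survival probability $p=\frac{2\alpha+1}{\alpha k+1}\binom{n-1-\alpha(k-2)}{2\alpha}^{-1}$. You use connectivity to turn the rank bound $\alpha k$ into at most $\alpha k+1$ vertices of the contracted hypergraph, and you divide the number of cuts of that small hypergraph by $p$.

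The gap is in how you pass between vertex cuts and quotients. The corollary counts vertex sets $S$. Your claim that a cut-set $\delta_G(S)$ arises from at most two vertex sets is false in hypergraphs. If $G$ is a single hyperedge $e$ on $k\ge 3$ vertices, all $2^k-2$ non-empty proper subsets $S$ have $\delta_G(S)=\{e\}$. So bounding the number of distinct approximate min-quotients and doubling does not bound the number of approximate min-cuts.

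A related problem: approximate min-quotients are not all cut-sets, so your ``exactly'' is only one direction. For example, the full edge set of a triangle is a quotient of size $3\le 2\cdot 2$ that is not $\delta(S)$ for any $S$. Hence, if $X$ counts all $\alpha$-approximate min-quotients inside $\cN'$, the bound $X\le 2^{\alpha k+1}$ is unjustified. Quotients of $f_{G'}$ correspond to partitions of $V(G')$, not to subsets.

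The repair is already in your key step: run the expectation argument over vertex sets instead of quotients.
\begin{itemize}
\item For each $\alpha$-approximate min-cut $S$, the set $\delta_G(S)$ is an $\alpha$-approximate minimum non-empty quotient of $f_G$. Each crossing hyperedge joins two components of $E\setminus\delta_G(S)$, and the minimum quotient size equals the min-cut value. Therefore $\Pr[\delta_G(S)\subseteq\cN']\ge p$.
\item On that event, $S$ is a union of vertex classes of $G'$, so at most $2^{\alpha k+1}$ such $S$ can survive simultaneously.
\end{itemize}
Hence the number of $\alpha$-approximate min-cuts is at most $2^{\alpha k+1}/p=O(2^{\alpha k}\,k\,n^{2\alpha})$, with no doubling step. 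This is exactly the paper's argument of counting cuts in the contracted hypergraph $H$ and dividing by the survival probability.
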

\begin{proof}
    Let $\mathcal{G}_k$ denote the collection of $k$-hypergraphs. We define $\cF_k:=\{f_G:G\in \mathcal{G}_k\}$, where $f_G$ denotes the hypergraphic polymatroid associated with $G$. We observe that $\cF_k$ is a family of $(k-1)$-polymatroids.
    For every hypergraph $G=(V, E)$ and $e\in E$, the polymatroids $f_{\backslash e}$ and $f_e$ correspond to the hypergraphic polymatroid associated with $G-e$ and $G/e$ respectively; moreover, if $G$ is a $k$-hypergraph, then both $G-e$ and $G/e$ are both $k$-hypergraphs.
    This implies that $\cF_k$ is minor-closed. By Corollary~\ref{lemma:hypergraph-1-strong-bounded}, $\cF_{\text{hypergraph}}$ is minor-closed and $1$-\sqbounded. Since $\mathcal{G}_k\subseteq \mathcal{G}_{\text{hypergraph}}$, $\cF_k$ is also $1$-\sqbounded.

    Now, we fix a connected $k$-hypergraph $G=(V,E)$. According to Theorem~\ref{thm:uniform-random-contraction-strong}, there exists a randomized algorithm that returns a set $F$ of hyperedges such that
    \begin{enumerate}
        \item the hypergraph $H$ obtained from $G$ by contracting hyperedges $E\setminus F$ has at most $\alpha k+1$ vertices and
        \item every $\alpha$-approximate min-cut $C$ of $G$ is contained in $F$ with probability at least
        $$\Pr[C\subseteq F]\geq \frac{2\alpha+1}{\alpha k+1}\cdot \binom{n-1-\alpha(k-2)}{2\alpha}^{-1}.$$
    \end{enumerate}
    Since $H$ has at most $\alpha k+1$ vertices, the number of cuts in hypergraph $H$ is at most $2^{\alpha k}$. This implies that the number of $\alpha$-approximate min-cuts in hypergraph $G$ is at most
    $$\begin{aligned}
        2^{\alpha k}\cdot \frac{\alpha k+1}{2\alpha+1}\cdot \binom{n-1-\alpha(k-2)}{2\alpha} =O(2^{\alpha k}\cdot k\cdot n^{2\alpha}).
    \end{aligned}$$
\end{proof}


%% file: conclusion.tex
\section{Conclusion}
We generalized Karger's influential random contraction technique for min-cut in graphs to a framework to find min-quotient in polymatroids. Min-quotient in arbitrary matroids is NP-hard. We introduced the notion of \qbounded polymatroids as a subfamily of polymatroids that admit efficient algorithms for min-quotient via the random contraction technique. Our work raises two natural directions for further research. Firstly, for $c$-\sqbounded polymatroid families, we have given a randomized polynomial-time algorithm to find a minimum non-empty quotient. Is it possible to design a deterministic polynomial-time algorithm for the same? This would lead to a deterministic polynomial-time algorithm to find a minimum cut in constant span hedgegraphs. Secondly, in addition to graphs, hypergraphs, and hedgegraphs, are there other interesting families of $c$-\qbounded and $c$-\sqbounded polymatroids?

%% file: quotient-property.tex
\section{Missing Proofs from Section~\ref{section:preliminaries}}\label{appendix:quotient}

In this section, we restate and prove Lemmas~\ref{lemma:quotient-property} and \ref{lemma:quotient-counting}.

\quotientproperty*
\begin{proof}
    We prove the four properties separately as follows.

    \begin{enumerate}
        \item Let $S:=\cN \setminus Q$, which is a closed set of $f$. For every element $e\in Q$, $f(S+e)-f(S)>0$. We define $S':=S\setminus T$ and $Q':=Q\setminus T$. For every element $e\in Q'$, we have
        $$\begin{aligned}
            f_{\backslash T}(S'+e)-f_{\backslash T}(S')&=f(S'+e)-f(S')\\
            &\geq f(S+e) - f(S) \ \ \text{(by submodularity of $f$)}\\
            &>0,
        \end{aligned}$$
        which further shows that $Q'$ is a quotient of $f_{\backslash T}$.
    
        \item We observe that $Q-e$ is non-empty since $|Q|>1$. By the previous property, $Q - e$ is a quotient of $f_{\backslash e}$. We now show that  $Q - e$ is a minimum non-empty quotient of $f_{\backslash e}$. Suppose there is a non-empty quotient $Q'$ of $f_{\backslash e}$ with $|Q'|<|Q|-1$. Let $S':=(\cN-e)\setminus Q'$. Thus, for every element $e'\in Q'$,
        $$\begin{aligned}
            f(S'+e')-f(S') &= f_{\backslash e}(S'+e') - f_{\backslash e}(S') > 0,
        \end{aligned}$$
        where the last inequality is because $S'$ is a closed set of $f_{\backslash e}$. This shows that in the polymatroid $f$, $\Span(S')$ is either $S'$ itself or $S'+e$, which implies that $Q'$ or $Q'+e$ is a quotient of $f$. This contradicts the minimality of $Q$.

        \item We first show that $Q$ is a quotient of $f_e$. Let $S:=(\cN - e)\setminus Q$. For every element $e'\in Q$,
        $$\begin{aligned}
            f_e(S+e')-f_e(S) &= (f(S+e'+e)-f(e)) - (f(S+e)-f(e)) \\
            &= f(S+e'+e)-f(S+e) \\
            &>0,
        \end{aligned}$$
        where the last inequality is because $S+e=\cN\setminus Q$ is a closed set of $f$. This shows that $Q$ is a quotient of $f_e$.

        Suppose there is a non-empty quotient $Q'$ of $f_e$ with $|Q'|<|Q|$. Let $S':=(\cN-e)\setminus Q'$. Thus, for every element $e'\in Q'$, we have 
        $$\begin{aligned}
            f(S'+e+e')-f(S'+e) &= (f(S'+e+e')-f(e)) - (f(S'+e)-f(e)) \\
            &= f_e(S'+e') - f_e(S') \\
            &> 0,
        \end{aligned}$$
        where the last inequality is because $S'$ is a closed set of $f_e$. This shows that $S'+e$ is a closed set of $f$ and $Q'$ is a quotient of $f$. This contradicts the minimality of $Q$.

        \item Let $q$ be the minimum size of a non-empty quotient of $f$, $e\in \cN$, $Q$ be a non-empty quotient of $f_e$, and $S:=(\cN-e)\setminus Q$. For every element $e'\in Q$, we have 
        $$\begin{aligned}
            f(S+e+e')-f(S+e) &= (f_e(S+e')+f(e)) - (f_e(S)+f(e)) \\
            &= f_e(S+e')-f_e(S) \\
            &> 0,
        \end{aligned}$$
        where the last inequality is because $Q$ is a quotient of $f_e$. This shows that $S+e$ is a closed set of $f$. Hence, $Q$ is a non-empty quotient of $f$ and consequently, $|Q|\geq q$.
    \end{enumerate}
\end{proof}

Next, we show the upper bound on the number of non-empty quotients of polymatroids in Lemma~\ref{lemma:quotient-counting}.
\lemmacounting*
\begin{proof}
    For the case of $n=1$, there is only one non-empty quotient. We may now assume that $n\geq 2$. We first show that every non-empty quotient $Q$ of $f$ can be written as $Q=\cN\setminus \text{span}(R)$ for some $R\subseteq \cN$ with $|R|<r$. Let $S:=\cN\setminus Q$, which is a closed set. We observe that $f(S)<r$, otherwise $S=\cN$ and $Q$ is empty. Starting with $R=\emptyset$, we repeatedly add an element from $S\setminus \text{span}(R)$ to $R$ as long as such an element exists. Each added element strictly increases $f(R)$ by at least $1$, so the process terminates after at most $f(S)<r$ steps with $\text{span}(R)=S$ and equivalently $Q=\cN\setminus \text{span}(R)$. Hence, every non-empty quotient is induced by some $R\subseteq \cN$ with $|R|<r$.

    Therefore, the number of non-empty quotients is at most the number of subsets of $\cN$ of size at most $r-1$, which is
    $$\sum_{i=0}^{r-1}\binom{n}{i}\leq n^r,$$
    where the last inequality holds since $n\geq 2$.
\end{proof}

%% file: gkp-counting-bound.tex
\section{Proof of Corollary~\ref{coro:qbounded-number-1}}\label{appendix:gkp-counting-bound}
In this section, we consider $c$-\qbounded families of polymatroids and analyze the number of minimum non-empty quotients. We analyze the probability of a fixed minimum non-empty quotient being in the collection returned by Algorithm~\ref{algo:pseudocode-gkp}. Lemmas~\ref{lemma:gkp-runtime} and \ref{lemma:gkp-counting-success-probability} together imply Corollary~\ref{coro:qbounded-number-1}.

\begin{lemma}\label{lemma:gkp-counting-success-probability}
  Let $\cF$ be a $c$-\qbounded family of polymatroids and $f\in \cF$ with $r$ being the rank of $f$. Let $Q$ be an arbitrary minimum non-empty quotient of $f$, and let $q = |Q|$. Then, the probability that $Q$ is contained in the collection of quotients returned by Algorithm~\ref{algo:pseudocode-gkp} on input $f$ is at least $1/r^{O(c\log{q})}$.
\end{lemma}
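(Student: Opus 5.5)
We follow the proof of Lemma~\ref{lemma:gkp-success-probability} with $\varepsilon$ replaced by the requirement that the specific quotient $Q$ itself survive. Concretely, for integers $r\ge 1$, $q\ge 1$ and $\ell\ge 0$, let $P(r,q,\ell)$ be the infimum, over all $f\in\cF$ of rank $r$ on which \Search$(f)$ performs at most $\ell$ branching steps and over all minimum non-empty quotients $Q$ of $f$ with $|Q|=q$, of the probability that $Q$ is in the returned collection. We will show by induction on $r+q+\ell$ that
\[
P(r,q,\ell)\;\ge\; r^{-4c}\cdot\left(\frac{1}{2(q+1)}\right)^{\ell}.
\]
Combined with Lemma~\ref{lemma:gkp-depth} ($\ell\le 4c\log r$) this gives $r^{-4c}\cdot (2(q+1))^{-4c\log r}=1/r^{O(c\log q)}$ (using $(q+1)^{\log r}=r^{\log(q+1)}$), as required. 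The base case $r\le 4c$ or $n\le 4c$ is brute force, so $Q$ is returned with probability $1$.

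For the induction step we reuse the three cases of Lemma~\ref{lemma:gkp-success-probability}. If $L_1=\emptyset$, the computation there goes through verbatim: it only uses that $Q$ stays a minimum quotient of $f_e$ for $e\notin Q$ (Lemma~\ref{lemma:quotient-property} part (3)), the convexity bound, and inequality~(\ref{inequality:gkp-quotient-bound}). The rank drops while $q$ and $\ell$ are unchanged, so we get $P(r,q,\ell)\ge r^{-4c}(2(q+1))^{-\ell}$. If $L_1\neq\emptyset$, we replace the $\varepsilon$-threshold with an exact split.

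\textbf{Case $L\subseteq Q$.} The branch $X=0$ is taken with probability $1/2$. If $f(\cN\setminus L)<r$, then $\Span(\cN\setminus L)=\cN\setminus L$ must hold for $Q$ to be consistent. More carefully, the clean-up in Step~\ref{pseudocode-gkp-clean} returns $\cN\setminus\Span(\cN\setminus L)$, which is a non-empty quotient contained in $Q$, so it equals $Q$ by minimality. Otherwise $Q\setminus L$ is a minimum non-empty quotient of $f_{\backslash L}$, by iterating Lemma~\ref{lemma:quotient-property} part (2). The recursion returns it with probability $P(r,q-|L|,\ell-1)$. Adding $L$ back yields $Q$, which is already closed-complement, so the clean-up leaves it unchanged. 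This gives a factor of $1/2\ge 1/(2(q+1))$.

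\textbf{Case $L\not\subseteq Q$.} The branch $X=1$ is taken with probability $1/2$, and a uniform $e\in L$ lies outside $Q$ with probability at least $1/|L|$. Here is the obstacle: $|L|$ can be large, which is unlike the $\varepsilon$ setting. We therefore need $|L|\le O(q)$, or more precisely a bound on $|L\cap Q|$ relative to $|L\setminus Q|$. Since heavy and moderate elements have $f(e)\ge r/(4c)$, we have $\sum_e f(e)\ge |L|\,r/(4c)$; but $c$-\qbounded gives an \emph{upper} bound on $q$, not on $|L|$. So we instead use $\Pr[e\notin Q]=|L\setminus Q|/|L|\ge 1/(|L\cap Q|+1)\ge 1/(q+1)$, which holds because $|L\setminus Q|\ge 1$. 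This gives the factor $1/(2(q+1))$. The rank then drops to at most $(1-\frac1{4c})r$, and as in case 3 of Lemma~\ref{lemma:gkp-success-probability} the gain $(1-\frac1{4c})^{-4c}\ge 1$ absorbs nothing extra that is needed.

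The main point to verify is the case $L\subseteq Q$ with the clean-up: we must check that the returned set is exactly $Q$ rather than a superset. This follows because $\cN\setminus Q$ is closed and $\Span(\cN\setminus L)\supseteq \cN\setminus Q$ gives containment in $Q$, while minimality gives equality.
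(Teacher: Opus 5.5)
Your proof is correct and follows the paper's argument: the same three-case induction establishing $P\ge r^{-4c}(2(q+1))^{-\ell}$, with exact recovery of $Q$ in the $X=0$ branch when $L\subseteq Q$ and the bound $|L\setminus Q|/|L|\ge 1/(q+1)$ in the $X=1$ branch. Your version is slightly tidier than the paper's write-up in three places: you track $q$ explicitly in $P(r,q,\ell)$, you check that the clean-up step returns exactly $Q$, and you get $1/(|L\cap Q|+1)$ in one line where the paper splits into two subcases.
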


\begin{proof}
  For every integer $r\geq 1$ and $\ell\geq 0$, let $P(r, \ell)$ be the infimum over all $f\in \cF$ with rank $r$ on which \text{\Search}$(f)$ performs at most $\ell$ branching steps and all minimum non-empty quotient $Q$ of $f$, of the probability that the collection of quotients returned by \text{\Search}$(f)$ contains quotient $Q$. By Lemma~\ref{lemma:gkp-depth}, we may assume that $\ell \leq 4c\cdot \log r$. We now prove that
    \begin{align}\label{inequality:gkp-analysis-probability-appendix}
        P(r, \ell)\geq \left(\frac{1}{r}\right)^{4c} \cdot \left(\frac{1}{2(q+1)}\right)^{\ell}    
    \end{align}
    by induction on $r+\ell$. We may also assume that every element $e\in \cN$ has $f(e)\geq 1$ since all elements $e\in \cN$ with $f(e)=0$ are not in $Q$ by definition of $Q$ being a quotient and such elements are removed by the algorithm in Step~\ref{pseudocode-gkp-deleting}.
    
    For the base case, we consider $r+\ell\leq 4c$. Algorithm~\ref{algo:pseudocode-gkp} performs a brute-force search to find minimum quotients of $f$ in Step~\ref{pseudocode-gkp-bruteforce} since $r\leq 4c$. Consequently, the collection returned by Algorithm~\ref{algo:pseudocode-gkp} contains $Q$ with probability $1$.

    We now prove the induction step where $r+\ell > 4c$. If $r\leq 4c$, Algorithm~\ref{algo:pseudocode-gkp} would terminate in Step~\ref{pseudocode-gkp-bruteforce} via brute-force search with no branching steps, contradicting $r+\ell > 4c$. Thus, $r>4c$. Moreover, we recall that $\cF$ is a $c$-\qbounded family of polymatroids and $f\in \cF$. This implies that
    \begin{align}
        |Q|\leq c\cdot \frac{\sum_{e\in \cN}f(e)}{r}. \label{inequality:gkp-quotient-bound-appendix}
    \end{align}
    
    We consider the following three cases:
    \begin{enumerate}
        \item $L_1=\emptyset$: In this case, all elements are light or moderate, i.e., $f(e)< \frac{r}{2c}$. Hence, we have
        \begin{align}
            \sum_{e \in \cN\setminus Q} f(e) &= \sum_{e \in \cN} f(e) - \sum_{e\in Q} f(e) \notag \\
            &\geq \sum_{e\in \cN} f(e) - \frac{r}{2c}\cdot |Q| \ \ \text{(since $f(e)<\frac{r}{2c}$)} \notag \\
            &\geq \frac{r}{c}\cdot |Q| - \frac{r}{2c}\cdot |Q| \ \ \text{(by inequality~(\ref{inequality:gkp-quotient-bound-appendix}))} \notag \\
            &= \frac{r}{2c}\cdot |Q|. \label{inequality:gkp-analysis-smallcase-appendix}
        \end{align}
        This further shows that
        $$\frac{r}{2c}\cdot |Q|\leq \sum_{e\in \cN\setminus Q}f(e)< \frac{r}{2c}\cdot |\cN\setminus Q|,$$
        which implies that $|Q|< |\cN\setminus Q|$ and $|Q|< |\cN|/2$. We observe that Algorithm~\ref{algo:pseudocode-gkp} will execute the if-clause in Step~\ref{pseudocode-gkp-if}. For each $e\not\in Q$, by Lemma~\ref{lemma:quotient-property} part 3, $Q$ is still a minimum non-empty quotient in the contracted polymatroid $f_e$. Moreover, the rank of the contracted polymatroid $f_e$ is $r-f(e) < r$ (since $f(e) > 0$) and hence by induction, $Q$ is in the collection returned by the algorithm executed on $f_e$ with probability at least $P(r-f(e),\ell)$.
        The algorithm outputs the set of quotients obtained from the recursive call on $f_e$.
        Therefore, 
        $$\begin{aligned}
            P(r, \ell) &\geq\frac{1}{|\cN|}\cdot \sum_{e\in \cN\setminus Q} P(r-f(e), \ell) \\
            &\geq \frac{1}{|\cN|} \cdot \sum_{e\in \cN\setminus Q} \frac{1}{(r-f(e))^{4c}} \cdot \left(\frac{1}{2(q+1)}\right)^{\ell} \ \ \text{(by induction hypothesis)} \\
            &\geq \frac{1}{|\cN|}\cdot \frac{|\cN\setminus Q|}{\left(r-\frac{\sum_{e\in \cN\setminus Q}f(e)}{|\cN\setminus Q|}\right)^{4c}}\cdot \left(\frac{1}{2(q+1)}\right)^{\ell} \ \ \text{(by convexity of the function $x^{-4c}$)} \\
            &\geq \frac{1}{|\cN|}\cdot \frac{|\cN\setminus Q|}{\left(r-\frac{r}{2c}\cdot \frac{|Q|}{|\cN\setminus Q|}\right)^{4c}}\cdot \left(\frac{1}{2(q+1)}\right)^{\ell} \ \ \text{(by inequality~(\ref{inequality:gkp-analysis-smallcase-appendix}))} \\
            &\geq \frac{1}{|\cN|}\cdot \frac{|\cN\setminus Q|}{\left(r-\frac{r}{2c}\cdot \frac{|Q|}{|\cN|}\right)^{4c}}\cdot \left(\frac{1}{2(q+1)}\right)^{\ell} \\
            &= \frac{1}{r^{4c}}\cdot \left(\frac{1}{2(q+1)}\right)^{\ell} \cdot \frac{1-\frac{|Q|}{|\cN|}}{\left(1-\frac{1}{2c}\cdot \frac{|Q|}{|\cN|}\right)^{4c}} \\
            &\geq \frac{1}{r^{4c}}\cdot \left(\frac{1}{2(q+1)}\right)^{\ell},
        \end{aligned}$$
        where the last inequality is because $\frac{|Q|}{|\cN|}\leq \frac{1}{2}$ and $(1-\frac{1}{2c}\cdot x)^{4c}\leq 1-x$ for every $x\in [0, 1/2)$.
        
        \item $L_1\neq \emptyset$ and $|L\setminus Q|=0$: In this case, Algorithm~\ref{algo:pseudocode-gkp} branches to $X=0$ with probability $\frac{1}{2}$. If $Q=L$, Algorithm~\ref{algo:pseudocode-gkp} would execute Step~\ref{pseudocode-gkp-singleton} since $Q=L$ is a non-empty quotient and hence $f(\cN \setminus L) < r$. 
        Thus, quotient $Q =L$ is contained in the collection returned by the algorithm. Suppose $Q\neq L$. This implies that $L\subsetneq Q$. Algorithm~\ref{algo:pseudocode-gkp} would recurse on polymatroid $f_{\backslash L}$ to obtain a collection of quotients of $f_{\backslash L}$ and add all elements of $L$ to every quotient in the collection. By Lemma~\ref{lemma:quotient-property} part 2, $Q\setminus L$ is a minimum non-empty quotient of $f_{\backslash L}$.
        By induction, this implies that the probability that quotient $Q\setminus L$ is in the collection returned by the algorithm when executed on polymatroid $f_{\backslash L}$ is at least $P(r, \ell-1)$. Moreover, if quotient $Q\setminus L$ is contained in the collection returned on input $f_{\backslash L}$, then quotient $Q$ would be contained in the collection returned by the algorithm on $f$. Therefore,
        $$\begin{aligned}
            P(r, \ell) &\geq \frac{1}{2}\cdot P(r, \ell -1) \\
            &\geq \frac{1}{2}\cdot \left(\frac{1}{r}\right)^{4c} \cdot \left(\frac{1}{2(q+1)}\right)^{\ell-1} \ \ \text{(by induction hypothesis)}\\
            &\geq \left(\frac{1}{r}\right)^{4c} \cdot \left(\frac{1}{2(q+1)}\right)^{\ell}.
        \end{aligned}$$

      \item $L_1\neq \emptyset$ and $|L\setminus Q|>0$: In this case, Algorithm~\ref{algo:pseudocode-gkp} branches to $X=1$ with probability $\frac{1}{2}$. Algorithm~\ref{algo:pseudocode-gkp} contracts an element $e\in L$, which is sampled uniformly at random. Thus, the probability that $e\not\in Q$ equals $\frac{|L\setminus Q|}{|L|}$. Since $e$ is heavy or moderate, $f(e)\geq \frac{r}{4c}$, which implies that contracting element $e$ reduces the rank by at least $\frac{r}{4c}$, and since $e \not \in Q$, by \ref{lemma:quotient-property}, $Q$ is in the collection returned by the algorithm executed on $f_e$ with probability at least $P(r-\frac{r}{4c},\ell)$. Further, if this happens, then $Q$ is in the output set of the algorithm.
        
        We claim that $\frac{|L\setminus Q|}{|L|}\geq \frac{1}{q+1}$. This is easy to see
        if $|L| \le q$ since $|L \setminus Q| \ge 1$. Otherwise, let $a = |L| - |Q| \ge 1$.
        Then $\frac{|L\setminus Q|}{|L|} \ge \frac{|L|-|Q|}{|L|} = \frac{a}{q+a} \ge 1/(q+1)$ for all $a \ge 1$. 

        Therefore,
        $$\begin{aligned}
            P(r, \ell) & \geq \frac{1}{2}\cdot \frac{|L\setminus Q|}{|L|} \cdot P\left(r-\frac{r}{4c}, \ell-1\right) \\
            &\geq \frac{1}{2}\cdot \frac{|L\setminus Q|}{|L|}\cdot \left(\frac{1}{\left(1-\frac{1}{4c}\right)r}\right)^{4c} \cdot \left(\frac{1}{2(q+1)}\right)^{\ell-1} \ \ \text{(by induction hypothesis)}\\
            &\geq \frac{1}{2}\cdot \frac{1}{q+1}\cdot \left(\frac{1}{\left(1-\frac{1}{4c}\right)r}\right)^{4c} \cdot \left(\frac{1}{2(q+1)}\right)^{\ell-1} \ \ \text{(since $\frac{|L\setminus Q|}{|L|}\geq \frac{1}{q+1}$)}\\
            &\geq \left(\frac{1}{r}\right)^{4c} \cdot \left(\frac{1}{2(q+1)}\right)^{\ell}.
        \end{aligned}$$
    \end{enumerate}
    Thus, inequality~(\ref{inequality:gkp-analysis-probability-appendix}) holds for every polymatroid $f\in \cF$. This implies that the collection of quotients returned by \text{\Search}$(f)$ contains $Q$ with probability at least
    $$P(r, \ell) \geq \left(\frac{1}{r}\right)^{4c} \cdot \left(\frac{1}{2(q+1)}\right)^{\ell}\geq \left(\frac{1}{r}\right)^{4c} \cdot \left(\frac{1}{2(q+1)}\right)^{4c\log r}=\left(\frac{1}{r}\right)^{O(c\log q)}.$$
\end{proof}

%% file: uniform-random-contraction-bounded-marginal.tex
\section{\MinQuo in $c$-\qbounded $k$-polymatroids}\label{section:uniform-contraction-bounded-marginal-appendix}

In this section, we consider $c$-\qbounded $k$-polymatroid families. We design a random contraction algorithm to find a subset $\cN'$ of the ground set such that the polymatroid obtained by contracting the complement of $\cN'$ has rank at most $c\alpha k$ and moreover, every $\alpha$-approximate minimum quotient $Q$ of $f$ is contained in $\cN'$ with probability $\Omega(k^{-1} r^{-c\alpha})$. The following is the main theorem of this section.

\begin{theorem}\label{thm:uniform-random-contraction}
    For an integer $k\ge 1$, let $\cF$ be a $c$-\qbounded $k$-polymatroid family and $\alpha \geq 1$. 
    There exists a randomized algorithm that takes a polymatroid $f\in \cF$ as input (via its evaluation oracle) and runs in polynomial time to return a subset $\cN'$ of the ground set $\cN$ of $f$ such that 
    \begin{enumerate}
        \item the polymatroid obtained from $f$ by contracting $\cN\setminus \cN'$ has rank at most $c\alpha k$ and 
        \item  every $\alpha$-approximate minimum quotient $Q$ of $f$ is contained in $\cN'$ with probability at least 
    $$\frac{c\alpha+1}{c\alpha k+1}\cdot \binom{r-c\alpha(k-1)}{c\alpha}^{-1}.$$
    \end{enumerate}
\end{theorem}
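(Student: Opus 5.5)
The plan is to mirror the proof of Theorem~\ref{thm:uniform-random-contraction-strong} almost verbatim, replacing the strongly-quotient-bounded inequality by the $c$-\qbounded one. The algorithm is the same uniform random contraction as Algorithm~\ref{algo:pseudocode-uniform-bounded-marginal-strong}, except that it stops and returns the current ground set as soon as the rank is at most $c\alpha k$. Property~1 is then immediate, and the running time is polynomial because each recursive call shrinks the ground set. All the work is in property~2. For this I would define
\[
H'(i):=\frac{c\alpha+1}{c\alpha k+1}\binom{i-c\alpha(k-1)}{c\alpha}^{-1}
\]
for $i>c\alpha k$, and $H'(i):=1$ otherwise. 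I would then prove $P(r)\ge H'(r)$ by induction on $r$, where $P(r)$ is the infimum of the survival probability over rank-$r$ members of $\cF$ and their $\alpha$-approximate minimum quotients.

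For the induction step, let $S$ be the set of elements with $f(e)\ge 1$, and set $x_i=|\{e\in S:f(e)=i\}|/|S|$ and $y_i=|\{e\in Q:f(e)=i\}|/|S|$. The $c$-\qbounded property gives $q\le c\sum_{e\in S}f(e)/r$. Since $|Q|\le \alpha q$, this yields $\sum_i y_i\le \frac{c\alpha}{r}\sum_i i\,x_i$. By Lemma~\ref{lemma:quotient-property}, parts 3 and 4, contracting an element $e\notin Q$ preserves $Q$ as a quotient and does not decrease the minimum quotient size, so $Q$ remains $\alpha$-approximate in $f_e$. Hence $P(r)\ge\sum_i (x_i-y_i)P(r-i)$, and $P(r)$ is at least the value of an LP like~(\ref{lp-1}) whose last constraint is $\sum y_i\le \frac{c\alpha}{r}\sum i\,x_i$.

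The key auxiliary claim is an analogue of Lemma~\ref{lemma:lp-analysis-1}: for $r>c\alpha k$, this LP has optimum at least
\[
\min_{1\le i\le k}\frac{r-c\alpha k}{r+c\alpha(i-k)}\,P(r-i).
\]
I would prove it the same way as Lemma~\ref{lemma:lp-analysis-1}. At an optimum the adversary sets $y_i$ as large as possible on the indices with the largest $P(r-i)$. By LP duality, or by an extreme-point argument, the optimum is attained with the mass $x$ supported on at most two indices. For each single index $i$, the value is $(1-c\alpha i/r)P(r-i)$. One then checks that mixing two indices never does better than the bound above. The check uses $r-c\alpha i\ge \frac{r(r-c\alpha k)}{r+c\alpha(i-k)}$, which reduces to $c\alpha(k-i)(r-c\alpha i)\ge 0$ after expanding. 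I expect this LP step to be the main obstacle, because the two-support case needs care. Failing a clean argument, I would simply invoke Lemma~\ref{lemma:lp-analysis-1} with $c$ replaced by $0$ and $\alpha$ replaced by $c\alpha$, since the LP is then literally the same. This substitution is valid provided the appendix proof does not use $c\ge1$.

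The rest is a calculation, as in Lemma~\ref{lemma:uniform-random-contraction-strong}. Set $r'=r-c\alpha k$. If $r-i>c\alpha k$, then
\[
\frac{H'(r-i)}{H'(r)}\ge\Bigl(1+\frac{c\alpha}{r'}\Bigr)^i\ge 1+\frac{c\alpha i}{r'}=\frac{r+c\alpha(i-k)}{r-c\alpha k},
\]
and induction closes the step. If $r-i\le c\alpha k$, then $P(r-i)=1$. Since $r\ge c\alpha k+1$, the factor is
\[
1-\frac{c\alpha i}{r+c\alpha(i-k)}\ge\frac{1}{c\alpha i+1}\ge\frac{1}{c\alpha k+1},
\]
and this is at least $H'(r)$ because $\binom{r-c\alpha(k-1)}{c\alpha}\ge c\alpha+1$.
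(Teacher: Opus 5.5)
Your proposal is correct and follows the paper's proof essentially step for step: the same uniform contraction algorithm, the same function $H$, the same LP reduction, and the same two-case calculation. Your fallback is also exactly right, since the paper's proof of Lemma~\ref{lemma:lp-analysis-2} is precisely the proof of Lemma~\ref{lemma:lp-analysis-1} under the substitution $c\to 0$, $\alpha\to c\alpha$, and that proof never uses $c\ge 1$. One harmless slip: the single-index comparison actually reduces to $c^2\alpha^2 i(k-i)\ge 0$, not to $c\alpha(k-i)(r-c\alpha i)\ge 0$. The real content of the bound is the two-support case with $y_i=0$ and $y_j=x_j$, where the bound follows from $j\le k$, and the substituted lemma covers that case.
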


We first describe the random contraction algorithm. The algorithm here is a uniform random contraction algorithm contrasting with the non-uniform random contraction algorithms in previous sections. Let $f:2^{\cN}\rightarrow \mathbb{Z}_{\geq 0}$ be the given polymatroid from a $c$-\qbounded $k$-polymatroid family. Let $\alpha\geq 1$. If $f(\cN)\leq c\cdot \alpha\cdot k$, the algorithm returns the whole ground set $\cN$. Otherwise, among all elements $e\in \cN$ with $f(e)>0$, the algorithm samples one such element uniformly at random and recurses on the contracted polymatroid $f_e$. The pseudocode is given in Algorithm~\ref{algo:pseudocode-uniform-bounded-marginal}.

\begin{algorithm2e}[H]
\caption{Uniform Random Contraction Algorithm}
\label{algo:pseudocode-uniform-bounded-marginal}
\SetKwInput{KwInput}{Input}                
\SetKwInput{KwOutput}{Output}              
\LinesNumbered

\DontPrintSemicolon
  
    \KwInput{evaluation oracle access to polymatroid $f:2^{\cN}\rightarrow \Z_{\geq 0}$.
    }
    \KwOutput{a subset $S$ of the ground set $\cN$.}
    \SetKwFunction{Contract}{Contract}
    \SetKwProg{Fn}{Function}{:}{}
    
    \Fn{\Contract{$f$}}{

        \If{$f(\cN)\leq c\cdot \alpha\cdot k$}{
            \KwRet $\cN$.
        }

        $S\gets \{e\in \cN: f(e)>0\}$. \label{pseudocode-kk-delete}

        Sample element $e\in S$ uniformly at random.

        \KwRet $\Contract(f_e)$.
    }
\end{algorithm2e}

Let $\cN'\subseteq \cN$ be the subset returned by Algorithm~\ref{algo:pseudocode-uniform-bounded-marginal}. We observe that the polymatroid obtained from $f$ by contracting $\cN\setminus \cN'$ has rank at most $c\cdot \alpha\cdot k$.
Moreover, Algorithm~\ref{algo:pseudocode-uniform-bounded-marginal} takes polynomial time since at each recursive call to \Contract$(f)$, the size of the ground set decreases by at least $1$. We now analyze the probability that a fixed $\alpha$-approximate minimum quotient is contained in the returned subset. For positive integers $i,k,\alpha\geq 1$, we define
\[
    H(i,k,\alpha):= \begin{cases}
        \frac{c\alpha+1}{c\alpha k+1}\cdot \binom{i-c\alpha(k-1)}{c\alpha}^{-1}, & i>c\alpha k \\
        1, & otherwise.
    \end{cases}
\]
The following lemma will be used later. We defer the proof to Appendix~\ref{appendix:lp-analysis}.

\begin{restatable}{lemma}{lpanalysis}\label{lemma:lp-analysis-2}
    Let $r, k, c, \alpha\geq 1$ be positive integers with $r>c\cdot \alpha\cdot k$. Let $P:\mathbb{N}\rightarrow \mathbb{R}_{+}$ be a positive-valued function defined over the natural numbers. Then, the optimum value of the linear program~(\ref{lp-2}) defined below is at least $\min_{i=1}^{k} \frac{r-c\cdot \alpha\cdot k}{r+c\cdot \alpha\cdot (i-k)} \cdot P(r-i)$.
    \begin{equation}
    \label{lp-2}
    \begin{aligned}
        \min\quad &\sum_{i=1}^{k}(x_i-y_i)\cdot P(r-i) \\
        s.t.\quad & 0\leq y_i\leq x_i \quad \forall\ i \in [1, k] \\
            &\sum_{i=1}^{k}x_i=1\\
            &\sum_{i=1}^{k}y_i\leq \frac{c\cdot \alpha}{r}\cdot \sum_{i=1}^{k}i\cdot x_i
    \end{aligned}
    \end{equation}
\end{restatable}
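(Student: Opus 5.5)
We need to lower-bound the LP (\ref{lp-2}) optimum by $\min_{i}\frac{r-c\alpha k}{r+c\alpha(i-k)}P(r-i)$. Write $\beta:=c\alpha$, so the last constraint reads $\sum_i y_i\le \frac{\beta}{r}\sum_i i\,x_i$.

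\textbf{Step 1: optimize over $y$ for fixed $x$.} With $x$ fixed, the objective is $\sum_i x_iP(r-i)-\sum_i y_iP(r-i)$. The budget for $y$ is $B:=\frac{\beta}{r}\sum_i i x_i$, and since every $i\le k$ we have $B\le \beta k/r<1=\sum_i x_i$. So the $y$-budget never exhausts the available $x$-mass. The worst case spends the whole budget $B$ on the indices with the largest $P(r-i)$, each capped at $x_i$. In any case, the objective is at least
$$\sum_i x_i P(r-i)\,(1-\theta_i), \qquad \theta_i:=y_i/x_i\in[0,1],\quad \sum_i \theta_i x_i\le B.$$

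\textbf{Step 2: reduce to a weighted ratio.} It suffices to show
$$\sum_i x_i(1-\theta_i)P(r-i)\;\ge\;\min_i \frac{r-\beta k}{r-\beta(k-i)}\,P(r-i).$$
Let $m:=\min_i \frac{r-\beta k}{r-\beta k+\beta i}P(r-i)$. Then $P(r-i)\ge m\,\frac{r-\beta k+\beta i}{r-\beta k}$ for every $i$. Writing $D:=r-\beta k>0$, the left side is at least
$$\frac{m}{D}\sum_i x_i(1-\theta_i)(D+\beta i).$$
So it suffices to prove $\sum_i x_i(1-\theta_i)(D+\beta i)\ge D$.

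\textbf{Step 3: verify the linear inequality.} Expand the sum as
$$D\sum_i x_i+\beta\sum_i i x_i-\sum_i \theta_i x_i(D+\beta i).$$
The first term equals $D$. Since $i\le k$, we have $D+\beta i\le r$, so the subtracted term is at most $r\sum_i\theta_i x_i\le r\cdot\frac{\beta}{r}\sum_i i x_i=\beta\sum_i i x_i$. Hence the whole expression is at least $D$, which proves the bound.

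The only delicate point is choosing the comparison weights $D+\beta i$ in Step 2: they must be affine in $i$ with slope matching the budget constraint and intercept making $D+\beta k=r$. Once this is set up, no LP-duality machinery is needed. The same argument with $\beta i$ replaced by $\alpha(i+c)$ and $D:=r-\alpha(c+k)$ proves Lemma~\ref{lemma:lp-analysis-1}, using $D+\alpha(i+c)\le r$.
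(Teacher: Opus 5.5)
Your proof is correct, and it takes a different route from the paper's. The paper works with optimal basic feasible solutions. It argues that the budget constraint is tight at an optimum, then counts constraints to conclude that at most two of the box constraints $0\le y_i\le x_i$ are slack. It then computes the objective explicitly in each resulting case: either two indices with $0=y_i<x_i$ and $0<y_j=x_j$, or a single index with $0<y_i<x_i$.

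You instead give what is in effect an explicit dual certificate. With $D=r-c\alpha k$ and $m=\min_i \frac{D}{D+c\alpha i}P(r-i)$, the identity
\[
\sum_i (x_i-y_i)(D+c\alpha i)-D = D\Bigl(\sum_i x_i-1\Bigr)+\Bigl(c\alpha\sum_i i\,x_i-r\sum_i y_i\Bigr)+\sum_i y_i\,(r-D-c\alpha i)
\]
shows that every feasible point has value at least $m$. It uses only $x_i-y_i\ge 0$, $y_i\ge 0$, the two aggregate constraints, and $D+c\alpha i\le r$.

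What your route buys is brevity and robustness. The bound holds pointwise for every feasible solution, so you need neither the existence of an optimal vertex nor the argument that the budget constraint is tight. You also avoid the paper's somewhat informal constraint-counting step, and, as you note, the argument transfers verbatim to Lemma~\ref{lemma:lp-analysis-1}.

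What the paper's vertex analysis buys is the structure of the optimal solutions. From its case computation with $j=k$ one can read off that your lower bound is actually attained, i.e., it equals the LP optimum. For $i<k$, take $x_i=\frac{D}{D+c\alpha i}$, $y_i=0$, and $x_k=y_k=1-x_i$. For $i=k$, take $x_k=1$ and $y_k=c\alpha k/r$.

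One cosmetic point: your Step~1 is unnecessary, and the ratios $\theta_i=y_i/x_i$ are undefined when $x_i=0$. Work with $y_i$ directly in Steps~2--3.
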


Lemma~\ref{lemma:uniform-random-contraction} shown below implies Theorem~\ref{thm:uniform-random-contraction}.

\begin{lemma}\label{lemma:uniform-random-contraction}
    Let $\cF$ be a $c$-\qbounded $k$-polymatroid family with $\alpha,k \geq 1$. Let $f:2^{\cN}\rightarrow \mathbb{Z}_{\geq 0}$ be a polymatroid in $\cF$ with rank $r>c\alpha k$ and $q$ be the minimum size of non-empty quotients of $f$. Then, every $\alpha$-approximate minimum quotient $Q$ of $f$ is contained in the subset $\cN'$ returned by Algorithm~\ref{algo:pseudocode-uniform-bounded-marginal} on input $f$ with probability at least
    $$
    \frac{c\alpha+1}{c\alpha k+1}\cdot \binom{r-c\alpha(k-1)}{c\alpha}^{-1}.$$
\end{lemma}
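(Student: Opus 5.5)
The plan mirrors the proof of Lemma~\ref{lemma:uniform-random-contraction-strong}, replacing the strong bound by the $c$-\qbounded bound. Let $P(r)$ be the infimum, over all $f\in\cF$ of rank $r$ and all $\alpha$-approximate minimum quotients $Q$ of $f$, of the probability that $Q\subseteq\cN'$. We show $P(r)\ge H(r,k,\alpha)$ by induction on $r$. The base case $r\le c\alpha k$ is immediate, since the algorithm returns $\cN$.

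For the induction step, let $S=\{e: f(e)>0\}$; then $Q\subseteq S$. Define $x_i:=|\{e\in S:f(e)=i\}|/|S|$ and $y_i:=|\{e\in Q:f(e)=i\}|/|S|$ for $1\le i\le k$. These satisfy $0\le y_i\le x_i$ and $\sum_i x_i=1$. The $c$-\qbounded property gives $q\le c\sum_{e\in S}f(e)/r$, and $|Q|\le\alpha q$, so
\[
\sum_i y_i=\frac{|Q|}{|S|}\le \frac{c\alpha}{r}\sum_i i\,x_i .
\]
If the sampled element $e$ lies outside $Q$, Lemma~\ref{lemma:quotient-property} (parts 3 and 4) implies that $Q$ is still a quotient of $f_e$ (closedness of $\cN\setminus Q$ is preserved) and that the minimum quotient size of $f_e$ is at least $q$. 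Hence $Q$ remains $\alpha$-approximate in $f_e$. Also, $f_e\in\cF$ by minor-closedness and has rank $r-f(e)$. Therefore
\[
P(r)\ge\sum_i (x_i-y_i)P(r-i),
\]
which is at least the value of LP~(\ref{lp-2}). Applying Lemma~\ref{lemma:lp-analysis-2}, we obtain some $i\in[1,k]$ with
\[
P(r)\ge \frac{r-c\alpha k}{r+c\alpha(i-k)}\,P(r-i).
\]

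It remains to check the recurrence. This is the only computational step, and it is the main potential obstacle, although it should go through exactly as in the strong case with $(c+1)\alpha$ replaced by $c\alpha$ and $\alpha(c+k)$ replaced by $c\alpha k$.

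\emph{Case $r-i>c\alpha k$.} Set $r'=r-c\alpha k$. Then
\[
\frac{H(r-i)}{H(r)}=\frac{\binom{r'+c\alpha}{c\alpha}}{\binom{r'-i+c\alpha}{c\alpha}}=\prod_{j=0}^{i-1}\frac{r'+c\alpha-j}{r'-j}\ge\Bigl(1+\frac{c\alpha}{r'}\Bigr)^i\ge 1+\frac{c\alpha i}{r'}=\frac{r+c\alpha(i-k)}{r-c\alpha k}.
\]
Combined with the induction hypothesis this gives $P(r)\ge H(r)$.

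\emph{Case $r-i\le c\alpha k$.} Here $P(r-i)=1$. Since $r\ge c\alpha k+1$,
\[
P(r)\ge 1-\frac{c\alpha i}{r+c\alpha(i-k)}\ge 1-\frac{c\alpha i}{1+c\alpha i}\ge\frac{1}{c\alpha k+1}.
\]
Finally, $\binom{r-c\alpha(k-1)}{c\alpha}\ge\binom{c\alpha+1}{c\alpha}=c\alpha+1$, so this bound is at least $H(r,k,\alpha)$.
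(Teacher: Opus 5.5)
Your proposal is correct and follows essentially the same argument as the paper: the same induction on $r$ with the $(x_i,y_i)$ encoding, the reduction to LP~(\ref{lp-2}) via Lemma~\ref{lemma:lp-analysis-2}, and the same two-case verification of the recurrence for $H$. Your explicit appeal to parts 3 and 4 of Lemma~\ref{lemma:quotient-property} is slightly more careful than the paper's one-line version of the same step. Part 3's proof shows that $Q$ stays a quotient of $f_e$, and part 4 shows that the minimum quotient size does not drop.
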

\begin{proof}
    For every positive integer $r$, let $P(r)$ be the infimum over all $f\in \cF$ with rank $r$ and $f(e)\leq k$ for every element $e$ in the ground set of $f$ and all $\alpha$-approximate minimum quotient $Q$ of $f$, of the probability that $Q$ is contained in the subset $\cN'$ returned by $\Contract(f)$. We now prove that
    $$P(r)\geq H(r,k,\alpha)$$
    by induction on $r$. We recall that $f(e)\leq k$ for every $e\in \cN$.
    
    For the base case, we consider $r\leq c\alpha k$. Algorithm~\ref{algo:pseudocode-uniform-bounded-marginal} returns the whole ground set $\cN$, which implies that $P(r)=1\geq H(r,k,\alpha)$. Next, we prove the induction step. Let $q$ be the minimum size of non-empty quotients of $f$ and $|Q|\leq \alpha\cdot q$. Let $S$ be the set obtained in Line~\ref{pseudocode-kk-delete}, where $f(e)\geq 1$ for every $e\in S$. We observe that $Q\subseteq S$, since every element $e\in \cN$ with $f(e)=0$ cannot be in the quotient. For every integer $1\leq i \leq k$, we define
    $$x_i:=\frac{|\{e\in S: f(e)=i\}|}{|S|} \text{ \ and \ } y_i:=\frac{|\{e\in Q: f(e)=i\}|}{|S|},$$
    which further shows that $0\leq y_i\leq x_i$. Moreover, since $f\in \cF$ and $\cF$ is a $c$-\qbounded polymatroid family, we have 
    $$\begin{aligned}
        q \leq c\cdot \sum_{e\in \cN}\frac{f(e)}{r}=c\cdot \sum_{e\in S}\frac{f(e)}{r}.
    \end{aligned}$$
    This further shows that
    $$\begin{aligned}
        \sum_{i=1}^{k}y_i & = \frac{|Q|}{|S|} \leq \frac{\alpha\cdot q}{|S|} \leq \frac{c\cdot \alpha}{r} \cdot \sum_{e\in S}\frac{f(e)}{|S|} = \frac{c\cdot \alpha}{r} \cdot \sum_{i=1}^{k}i\cdot x_i.
    \end{aligned}$$
    We observe that Algorithm~\ref{algo:pseudocode-uniform-bounded-marginal} contracts an element $e\in S$, which is sampled uniform at random, and recurses on the contracted polymatroid $f_e$. By Lemma~\ref{lemma:quotient-property}, the minimum non-empty quotient of $f_e$ still has size at least $q$.
    Thus, if $e\not\in Q$, $Q$ is still a $\alpha$-approximate minimum quotient in $f_e$, and moreover, the rank of $f_e$ is $r-f(e)$ and hence, the set $Q$ is contained in the subset returned by the algorithm executed on $f_e$ with probability at least $P(r-f(e))$. The quotient $Q$ is contained in the subset returned by the algorithm if (i) the sampled element $e$ is not from $Q$ and (ii) $Q$ is contained in the subset returned by the algorithm when executed on the contracted polymatroid $f_e$. This implies that
    $$\begin{aligned}
        P(r)&\geq \sum_{i=1}^{k}\Pr[\ f(e)=i \text{ \ and \ } e\notin Q \ ] \cdot P(r-i) \\
        &= \sum_{i=1}^{k}(x_i-y_i)\cdot P(r-i).
    \end{aligned}$$
    Therefore, $P(r)$ is at least the optimum value of the following linear program:
    \begin{equation}
    \label{kogan-lp}
    \begin{aligned}
        \min\quad &\sum_{i=1}^{k}(x_i-y_i)\cdot P(r-i) \\
        s.t.\quad & 0\leq y_i\leq x_i \quad \forall\ i \in [1, k] \\
            &\sum_{i=1}^{k}x_i=1\\
            &\sum_{i=1}^{k}y_i\leq \frac{c\cdot \alpha}{r}\cdot \sum_{i=1}^{k}i\cdot x_i
    \end{aligned}
    \end{equation}

    By Lemma~\ref{lemma:lp-analysis-2}, there exists an index $1\leq i \leq k$ with
    $$P(r) \geq \frac{r-c\cdot \alpha\cdot k}{r+c\cdot \alpha\cdot (i-k)} \cdot P(r-i).$$

    If $r-i> c\cdot \alpha\cdot k$, then by defining $r':=r-c\cdot \alpha\cdot k$, we have
    $$\begin{aligned}
        \frac{H(r-i, k, \alpha)}{H(r, k, \alpha)} &= \frac{\binom{r'+c\alpha}{c\alpha}}{\binom{r'-i+c\alpha}{c\alpha}} = \frac{(r'+c\alpha)\ldots(r'-i+c\alpha+1)}{r'\ldots (r'-i+1)} \\
        &\geq \left(1+\frac{c\alpha}{r'}\right)^i \geq 1+\frac{c\alpha}{r'}\cdot i = \frac{r+c\cdot \alpha\cdot (i-k)}{r-c\cdot \alpha\cdot k},
    \end{aligned}$$
    which further shows that
    $$\begin{aligned}
        P(r) &\geq \frac{r-c\cdot \alpha\cdot k}{r+c\cdot \alpha\cdot (i-k)} \cdot P(r-i) \\
        &\geq \frac{r-c\cdot \alpha\cdot k}{r+c\cdot \alpha\cdot (i-k)} \cdot H(r-i, k, \alpha)\ \ \text{(by induction hypothesis)}\\
        &\geq H(r, k, \alpha).
    \end{aligned}$$

    If $r-i\leq c\cdot \alpha\cdot k$, then $P(r-i)=1$ and we have
    $$\begin{aligned}
        P(r) &\geq \frac{r-c\cdot \alpha\cdot k}{r+c\cdot \alpha\cdot (i-k)} \cdot P(r-i) = \frac{r-c\cdot \alpha\cdot k}{r+c\cdot \alpha\cdot (i-k)}\\
        &= 1 - \frac{c\cdot \alpha\cdot i}{r+c\cdot \alpha\cdot (i-k)} \geq 1 - \frac{c\cdot \alpha\cdot i}{1+c\cdot \alpha\cdot i} \ \ \text{(since $r\geq c\cdot \alpha\cdot k +1$)} \\
        &\geq \frac{1}{c\cdot \alpha\cdot k +1} \geq \frac{1}{c\cdot \alpha\cdot k +1}\cdot \frac{c\alpha+1}{\binom{r-c\alpha(k-1)}{c\alpha}} = H(r, k, \alpha),
    \end{aligned}$$
    where the last inequality is because $r\geq c\alpha k+1$ and $\binom{r-c\alpha(k-1)}{c\alpha}\geq \binom{c\alpha+1}{c\alpha}=c\alpha+1$.

\end{proof}

%% file: lp-analysis.tex
\section{Proof of Lemmas~\ref{lemma:lp-analysis-1} and \ref{lemma:lp-analysis-2}}\label{appendix:lp-analysis}
In this section, we prove Lemmas~\ref{lemma:lp-analysis-1} and \ref{lemma:lp-analysis-2}.

\lpanalysisstrong*
\begin{proof}
    We observe that the last constraint implies that
    \begin{align}
        \sum_{i=1}^{k}y_i\leq \frac{\alpha}{r}\cdot \sum_{i=1}^{k}(i+c)\cdot x_i\leq \frac{\alpha \cdot (c+k)}{r}\cdot \sum_{i=1}^{k}x_i < \sum_{i=1}^{k}x_i, \label{inequality:kogan-lp-constraint-strong}
    \end{align}
    where the last inequality is because $r>\alpha\cdot (c+k)$. This shows that in every feasible solution, there is always some $y_i<x_i$. Hence, in every optimal solution, the last constraint is tight (otherwise, increasing $y_i$ would decrease the value of the solution without violating any other constraint). We recall that we have $2k$ variables and $2k+2$ constraints in LP-(\ref{lp-1}). This implies that in every optimal solution, there are at most $2$ non-tight constraints among the $2k$ constraints $0\leq y_i\leq x_i$. We proceed by analyzing the possible cases:
    \begin{enumerate}
        \item $0<y_i=x_i$ and $0<y_j=x_j$: In this case, we have $\sum_{i=1}^{k}x_i=\sum_{i=1}^{k}y_i$, which contradicts to inequality~(\ref{inequality:kogan-lp-constraint-strong}).
        
        \item $0=y_i<x_i$ and $0=y_j<x_j$: In this case, we have $\sum_{i=1}^{k}y_i=0$, which is also impossible since we may increase $y_i$ to decrease the value of the solution without violating any constraint.

        \item $0=y_i<x_i$ and $0<y_j=x_j$: In this case, we have $x_i+x_j=1$ and $x_j=y_j=\frac{\alpha}{r}\cdot ((i+c)\cdot x_i+(j+c)\cdot x_j)$, which implies that
        $$x_i=\frac{1-\frac{\alpha}{r}\cdot (j+c)}{1+\frac{\alpha}{r}\cdot (i-j)} \text{ \ and \ } x_j=\frac{\frac{\alpha}{r}\cdot (i+c)}{1+\frac{\alpha}{r}\cdot (i-j)}.$$
        This shows that
        \begin{align}
            \mathsf{LP} &= x_i\cdot P(r-i) = \left(1- \frac{\alpha\cdot (i+c)}{r+\alpha\cdot (i-j)}\right) \cdot P(r-i) \notag\\
            &\geq \left(1- \frac{\alpha\cdot (i+c)}{r+\alpha\cdot (i-k)}\right) \cdot P(r-i) \notag\\
            &= \frac{r-\alpha\cdot (c+k)}{r+\alpha\cdot (i-k)} \cdot P(r-i). \label{inequality:kogan-lp-analysis-strong}
        \end{align}

        \item $0<y_i<x_i$: In this case, we have $x_i=1$ and $y_i=\frac{\alpha\cdot (i+c)}{r}$. Thus, we have
        $$\begin{aligned}
            \mathsf{LP} &= \left(1-\frac{\alpha\cdot (i+c)}{r}\right) \cdot P(r-i) \geq \left(1-\frac{\alpha\cdot (i+c)}{r+\alpha\cdot (i-k)}\right)\cdot P(r-i) \\
            &= \frac{r-\alpha\cdot (c+k)}{r+\alpha\cdot (i-k)} \cdot P(r-i).
        \end{aligned}$$
    \end{enumerate}
\end{proof}

\lpanalysis*
\begin{proof}
    We observe that the last constraint implies that
    \begin{align}
        \sum_{i=1}^{k}y_i\leq \frac{c\cdot \alpha}{r}\cdot \sum_{i=1}^{k}i\cdot x_i\leq \frac{c\cdot \alpha \cdot k}{r}\cdot \sum_{i=1}^{k}x_i < \sum_{i=1}^{k}x_i, \label{inequality:kogan-lp-constraint}
    \end{align}
    where the last inequality is because $r>c\cdot \alpha\cdot k$. This shows that in every feasible solution, there is always some $y_i<x_i$. Hence, in every optimal solution, the last constraint is tight (otherwise, increasing $y_i$ would decrease the value of the solution without violating any other constraint). We recall that we have $2k$ variables and $2k+2$ constraints in LP-(\ref{lp-2}). This implies that in every optimal solution, there are at most $2$ non-tight constraints among the $2k$ constraints $0\leq y_i\leq x_i$. We proceed by analyzing the possible cases:
    \begin{enumerate}
        \item $0<y_i=x_i$ and $0<y_j=x_j$: In this case, we have $\sum_{i=1}^{k}x_i=\sum_{i=1}^{k}y_i$, which contradicts to inequality~(\ref{inequality:kogan-lp-constraint}).
        
        \item $0=y_i<x_i$ and $0=y_j<x_j$: In this case, we have $\sum_{i=1}^{k}y_i=0$, which is also impossible since we may increase $y_i$ to decrease the value of the solution without violating any constraint.

        \item $0=y_i<x_i$ and $0<y_j=x_j$: In this case, we have $x_i+x_j=1$ and $x_j=y_j=\frac{c\cdot \alpha}{r}\cdot (i\cdot x_i+j\cdot x_j)$, which implies that
        $$x_i=\frac{1-\frac{c\cdot \alpha}{r}\cdot j}{1+\frac{c\cdot \alpha}{r}\cdot (i-j)} \text{ \ and \ } x_j=\frac{\frac{c\cdot \alpha}{r}\cdot i}{1+\frac{c\cdot \alpha}{r}\cdot (i-j)}.$$
        This shows that
        \begin{align}
            \mathsf{LP} &= x_i\cdot P(r-i) = \left(1- \frac{c\cdot \alpha\cdot i}{r+c\cdot \alpha\cdot (i-j)}\right) \cdot P(r-i) \notag\\
            &\geq \left(1- \frac{c\cdot \alpha\cdot i}{r+c\cdot \alpha\cdot (i-k)}\right) \cdot P(r-i) \notag\\
            &= \frac{r-c\cdot \alpha\cdot k}{r+c\cdot \alpha\cdot (i-k)} \cdot P(r-i). \label{inequality:kogan-lp-analysis}
        \end{align}

        \item $0<y_i<x_i$: In this case, we have $x_i=1$ and $y_i=\frac{c\cdot\alpha\cdot i}{r}$. Thus, we have
        $$\begin{aligned}
            \mathsf{LP} &= \left(1-\frac{c\cdot \alpha\cdot i}{r}\right) \cdot P(r-i) \geq \left(1-\frac{c\cdot\alpha\cdot i}{r+c\cdot \alpha\cdot (i-k)}\right)\cdot P(r-i) \\
            &= \frac{r-c\cdot \alpha\cdot k}{r+c\cdot \alpha\cdot (i-k)} \cdot P(r-i).
        \end{aligned}$$
    \end{enumerate}    
\end{proof}

%% file: references.bib
@inproceedings{KW21,
author = {David R. Karger and David P. Williamson},
title = {Recursive Random Contraction Revisited},
booktitle = {Symposium on Simplicity in Algorithms (SOSA)},
pages = {68-73},
year = {2021},
}

@article{Kar99-sampling,
author = {Karger, D. R.},
year = {1999},
title = {{Random Sampling in Cut, Flow, and Network Design Problems}},
journal = {Mathematics of Operations Research},
volume = {24},
number = {2},
pages = {383-413},
}

@inproceedings{KKTY21-STOC,
  title={Towards tight bounds for spectral sparsification of hypergraphs},
  author={Kapralov, M. and Krauthgamer, R. and Tardos, J. and Yoshida, Y.},
  booktitle={Proceedings of the 53rd Annual ACM SIGACT Symposium on Theory of Computing},
  series = {STOC},
  pages={598-611},
  year={2021}
}

@inproceedings{KKTY21-FOCS, 
  title={{Spectral Hypergraph Sparsifiers of Nearly Linear Size}},
  author={Kapralov, M. and Krauthgamer, R. and Tardos, J. and Yoshida, Y.},
  booktitle={IEEE 62nd Annual Symposium on Foundations of Computer Science},
  series = {FOCS},
  pages={1159-1170},
  year={2022},
}

@inproceedings{BST19,
  title={New notions and constructions of sparsification for graphs and hypergraphs},
  author={Bansal, N. and Svensson, O. and Trevisan, L.},
  booktitle={IEEE 60th Annual Symposium on Foundations of Computer Science},
  series = {FOCS},
  pages={910-928},
  year={2019},
}

@inproceedings{JLS23,
author = {Jambulapati, A. and Liu, Y. and Sidford, A.},
title = {{Chaining, Group Leverage Score Overestimates, and Fast Spectral Hypergraph Sparsification}},
year = {2023},
booktitle = {Proceedings of the 55th Annual ACM Symposium on Theory of Computing},
pages = {196-206},
series = {STOC}, 
}

@inproceedings{Lee23,
author = {Lee, James R.},
title = {Spectral Hypergraph Sparsification via Chaining},
year = {2023},
booktitle = {Proceedings of the 55th Annual ACM Symposium on Theory of Computing},
pages = {207-218},
series = {STOC}, 
}

@INPROCEEDINGS{KKTY22,
  author={Kapralov, M. and Krauthgamer, R. and Tardos, J. and Yoshida, Y.},
  booktitle={Proceedings of the 62nd Annual IEEE Symposium on Foundations of Computer Science}, 
  series = {FOCS},
  title={Spectral Hypergraph Sparsifiers of Nearly Linear Size}, 
  year={2022},
  pages={1159-1170},
}

@inproceedings{CLP24,
author = {Cen, Ruoxu and Li, Jason and Panigrahi, Debmalya},
title = {Hypergraph Unreliability in Quasi-Polynomial Time},
year = {2024},
booktitle = {Proceedings of the 56th Annual ACM Symposium on Theory of Computing},
pages = {1700–1711},
}

@article{CQX20,
author = {Chekuri, C. and Quanrud, K. and Xu, C.},
title = {{LP} Relaxation and Tree Packing for Minimum $k$-Cut},
journal = {SIAM Journal on Discrete Mathematics},
volume = {34},
number = {2},
pages = {1334-1353},
year = {2020},
}

@article{NNI97,
 author = {Nagamochi, H. and Nishimura, K. and Ibaraki, T.}, 
 title = {Computing all small cuts in an undirected network}, 
 journal = {SIAM Journal on Discrete Mathematics},
 volume = {10},
 year = {1997}, 
 number = {3}, 
 pages = {469–481},
}

@inproceedings{BCW23,
author = {Beideman, Calvin and Chandrasekaran, Karthekeyan and  Wang, Weihang},
title = {Approximate minimum cuts and their enumeration},
booktitle = {Symposium on Simplicity in Algorithms (SOSA)},
pages = {36-41},
year = {2023},
}

@INPROCEEDINGS{Kar17,
  author={Karger, David R.},
  booktitle={IEEE 58th Annual Symposium on Foundations of Computer Science (FOCS)}, 
  title={Faster (and Still Pretty Simple) Unbiased Estimators for Network (Un)reliability}, 
  year={2017},
  pages={755-766},
}

@inproceedings{Kar20,
author = {Karger, David R.},
title = {A phase transition and a quadratic time unbiased estimator for network reliability},
year = {2020},
booktitle = {Proceedings of the 52nd Annual ACM SIGACT Symposium on Theory of Computing},
pages = {485–495},
}

@inproceedings{CLP25,
author = {Cen, Ruoxu and Li, Jason and Panigrahi, Debmalya},
title = {Network Unreliability in Almost-Linear Time},
year = {2025},
booktitle = {Proceedings of the 57th Annual ACM Symposium on Theory of Computing},
pages = {120–131},
}

@article{Kar01,
author = {Karger, David R.},
title = {A Randomized Fully Polynomial Time Approximation Scheme for the All-Terminal Network Reliability Problem},
journal = {SIAM Journal on Computing},
volume = {43},
number = {3},
pages = {499-522},
year = {2001},
}

@article{GHLL21,
author = {Gupta, Anupam and Harris, David G. and Lee, Euiwoong and Li, Jason},
title = {Optimal Bounds for the k-cut Problem},
year = {2021},
volume = {69},
number = {1},
journal = {J. ACM},
articleno = {2},
}

@article{BCX23,
author = {Beideman, C. and Chandrasekaran, K. and Xu, C.},
title = {Multicriteria cuts and size-constrained k-cuts in hypergraphs},
journal ={Math. Program.},
volume = {197}, 
pages = {27–69},
year = {2023},
}

@inproceedings{Kar16,
author = {Karger, David R.},
title = {Enumerating parametric global minimum cuts by random interleaving},
year = {2016},
booktitle = {Proceedings of the Forty-Eighth Annual ACM Symposium on Theory of Computing},
pages = {542–555},
}

@inproceedings{AMR17,
  author       = {Aissi, H. and
                  Mahjoub, A. and
                  R. Ravi},
  title        = {Randomized Contractions for Multiobjective Minimum Cuts},
  booktitle    = {25th Annual European Symposium on Algorithms, {ESA}},
  pages        = {6:1--6:13},
  year         = {2017},
}

@article{Var97,
  author={Vardy, A.},
  journal={IEEE Transactions on Information Theory}, 
  title={The intractability of computing the minimum distance of a code}, 
  year={1997},
  volume={43},
  number={6},
  pages={1757-1766},
}

@article{ZCTZ11,
author = {Zhang, P. and Cai, J.-Y. and Tang, L.-Q. and Zhao, W.-B},
title = {Approximation and hardness results for label cut and
related problems},
year = {2011},
volume = {21},
number = {2},
journal = {Journal of Combinatorial Optimization},
pages = {192--208},
}

@article{Kar-matroid,
 title = {Random sampling and greedy sparsification for matroid optimization problems},
 author = {Karger, D.},
journal = {Mathematical Programming},
volume = {82},
pages = {41--81},
year = {1998}
}

@article{JLMPTS26,
author = {Jaffke, Lars and T. de Lima, Paloma and Masa\v{r}\'{\i}k, Tom\'{a}\v{s} and Pilipczuk, Marcin and Souza, Uev\'{e}rton S.},
title = {A Tight Quasi-Polynomial Bound for Global Label Min-Cut},
year = {2026},
volume = {22},
number = {2},
journal = {ACM Trans. Algorithms},
articleno = {23},
}

@article{BK15,
author = {Bencz\'{u}r, Andr\'{a}s A. and Karger, David R.},
title = {Randomized Approximation Schemes for Cuts and Flows in Capacitated Graphs},
journal = {SIAM Journal on Computing},
volume = {44},
number = {2},
pages = {290-319},
year = {2015},
doi = {10.1137/070705970},
}

@phdthesis{mccormick-thesis,
    author = {McCormick, T.},
    title = {{A Combinatorial Approach to Some Sparse Matrix Problems}},
    school = {{Stanford University}},
    year = {1983},
}

@phdthesis{karger1995random,
  title={Random sampling in graph optimization problems},
  author={Karger, D.},
  year={1995},
  school={Stanford University}
}

@inproceedings{karger1993global,
  title={{Global Min-cuts in RNC, and Other Ramifications of a Simple Min-Cut Algorithm}},
  author={Karger, David R},
  booktitle={Proceedings of the fourth annual ACM-SIAM symposium on Discrete algorithms},
  series = {SODA},
  pages={21--30},
  year={1993}
}

@article{KS96,
  title={A new approach to the minimum cut problem},
  author={Karger, David R and Stein, Clifford},
  journal={Journal of the ACM (JACM)},
  volume={43},
  number={4},
  pages={601--640},
  year={1996},
  publisher={ACM New York, NY, USA}
}

@inproceedings{GKP17,
  title={Random contractions and sampling for Hypergraph and Hedge Connectivity},
  author={Ghaffari, M. and Karger, D. and Panigrahi, D.},
  booktitle={Proceedings of the Twenty-Eighth Annual ACM-SIAM Symposium on Discrete Algorithms},
  pages={1101--1114},
  year={2017}
}

@article{CXY21,
	author = {Chandrasekaran, Karthekeyan and Xu, Chao and Yu, Xilin},
	title = {{Hypergraph $k$-Cut in randomized polynomial time}},
	journal = {Mathematical Programming},
	volume = {186}, 
	pages = {85-113}, 
	year = {2021},
}

@article{fox2023minimum,
  title={Minimum cut and minimum k-cut in hypergraphs via branching contractions},
  author={Fox, Kyle and Panigrahi, Debmalya and Zhang, Fred},
  journal={ACM Transactions on Algorithms},
  volume={19},
  number={2},
  pages={1--22},
  year={2023},
  publisher={ACM New York, NY}
}

@misc{chandrasekaran2025hedgegraph,
  title={Hedgegraph Polymatroids},
  author={Chandrasekaran, Karthekeyan and Chekuri, Chandra and Wang, Weihang and Zhu, Weihao},
  howpublished={arXiv preprint arXiv:2510.25043},
  year={2025}
}

@inproceedings{fomin2025fixed,
  title={Fixed-parameter tractability of hedge cut},
  author={Fomin, Fedor V and Golovach, Petr A and Korhonen, Tuukka and Lokshtanov, Daniel and Saurabh, Saket},
  booktitle={Proceedings of the 2025 Annual ACM-SIAM Symposium on Discrete Algorithms (SODA)},
  pages={1402--1411},
  year={2025},
  organization={SIAM}
}

@inproceedings{quanrud2024quotient,
  title={Quotient sparsification for submodular functions},
  author={Quanrud, Kent},
  booktitle={Proceedings of the 2024 Annual ACM-SIAM Symposium on Discrete Algorithms (SODA)},
  pages={5209--5248},
  year={2024},
  organization={SIAM}
}

@inproceedings{kogan2015sketching,
  title={Sketching cuts in graphs and hypergraphs},
  author={Kogan, Dmitry and Krauthgamer, Robert},
  booktitle={Proceedings of the 2015 Conference on Innovations in Theoretical Computer Science},
  pages={367--376},
  year={2015}
}

@inproceedings{chen2020near,
  title={Near-linear size hypergraph cut sparsifiers},
  author={Chen, Yu and Khanna, Sanjeev and Nagda, Ansh},
  booktitle={2020 IEEE 61st Annual Symposium on Foundations of Computer Science (FOCS)},
  pages={61--72},
  year={2020},
  organization={IEEE}
}

@book{Schrijver-book,
  title={Combinatorial optimization: polyhedra and efficiency},
  author={Schrijver, Alexander and others},
  year={2003},
  publisher={Springer}
}

@article{Zaslavsky,
  author  = {Zaslavsky, Thomas},
  title   = {Biased graphs. {II}. {T}he three matroids},
  journal = {Journal of Combinatorial Theory, Series B},
  volume  = {51},
  number  = {1},
  pages   = {46--72},
  year    = {1991},
  doi     = {10.1016/0095-8956(91)90005-5}
}

@article{GGW18,
  author  = {Geelen, Jim and Gerards, Bert and Whittle, Geoff},
  title   = {Quasi-graphic matroids},
  journal = {Journal of Graph Theory},
  volume  = {87},
  number  = {2},
  pages   = {253--264},
  year    = {2018},
  doi     = {10.1002/jgt.22177},
  note    = {The originally published version (2017) was retracted due to an error in Lemma~3.5; the corrected version appears in volume~87.}
}

@article{GeelenKapadia,
  author  = {Geelen, Jim and Kapadia, Rohan},
  title   = {Computing girth and cogirth in perturbed graphic matroids},
  journal = {Combinatorica},
  volume  = {38},
  number  = {1},
  pages   = {167--191},
  year    = {2018},
  doi     = {10.1007/s00493-016-3445-3},
  eprint  = {1504.07647},
  archivePrefix = {arXiv},
  primaryClass  = {math.CO}
}

@article{spielman2011spectral,
  title={Spectral sparsification of graphs},
  author={Spielman, Daniel A and Teng, Shang-Hua},
  journal={SIAM Journal on Computing},
  volume={40},
  number={4},
  pages={981--1025},
  year={2011},
  publisher={SIAM}
}

@article{spielman2011graph,
  title={Graph sparsification by effective resistances},
  author={Spielman, Daniel A and Srivastava, Nikhil},
  journal={SIAM Journal on Computing},
  volume={40},
  number={6},
  pages={1913--1926},
  year={2011},
  publisher={SIAM}
}

@article{batson2012twice,
  title={Twice-ramanujan sparsifiers},
  author={Batson, Joshua and Spielman, Daniel A and Srivastava, Nikhil},
  journal={SIAM Journal on Computing},
  volume={41},
  number={6},
  pages={1704--1721},
  year={2012},
  publisher={SIAM}
}

@article{barahona1987construction,
  title={A construction for binary matroids},
  author={Barahona, Francisco and Conforti, Michele},
  journal={Discrete Mathematics},
  volume={66},
  number={3},
  pages={213--218},
  year={1987},
  publisher={Elsevier}
}

@article{geelen2015highly,
  title={The highly connected matroids in minor-closed classes},
  author={Geelen, Jim and Gerards, Bert and Whittle, Geoff},
  journal={Annals of Combinatorics},
  volume={19},
  number={1},
  pages={107--123},
  year={2015},
  publisher={Springer}
}
